\documentclass[journal]{IEEEtran}

\usepackage[T1]{fontenc}
\usepackage{tikz}
\usetikzlibrary{arrows}
\usetikzlibrary{calc}
\usetikzlibrary{snakes}
\usepackage{cite}
\usepackage{graphicx}
\usepackage{array}
\usepackage{mdwmath}
\usepackage{mdwtab}
\usepackage{eqparbox}
\usepackage{fixltx2e}
\usepackage{url}
\usepackage{pgf}

\usepackage{amsmath,amsfonts,amssymb,amsthm}
\usepackage{xspace}
\usepackage{xcolor}
\usepackage{algorithmic}
\usepackage{bm}
\usepackage{booktabs}
\usepackage{nicefrac}
\usepackage{textcomp}
\usepackage{stmaryrd}
\usepackage{mathrsfs}
\usepackage{paralist}
\usepackage{comment}

\usepackage[font=footnotesize,labelsep=space,labelfont=bf]{caption}
\IEEEoverridecommandlockouts
\interdisplaylinepenalty=2500

\usepackage{url}

\hyphenation{op-tical net-works semi-conduc-tor}

\usepackage[pdftex,bookmarks=true,pdfstartview=FitH,colorlinks,linkcolor=blue,filecolor=blue,citecolor=blue,urlcolor=blue,pagebackref=true]{hyperref}
    \urlstyle{sf}

\makeatletter

\newcommand{\Rmnum}[1]{\expandafter\@slowromancap\romannumeral #1@}
\makeatother

\newtheorem{thm}{Theorem}
\newtheorem{defn}{Definition}
\newtheorem{lem}{Lemma}

\newtheorem{remark}{Remark}


\newcommand{\namedref}[2]{\hyperref[#2]{#1~\ref*{#2}}}

\newcommand{\WZ}{\text{WZ}}
\newcommand{\PS}{\text{PS}}
\newcommand{\AS}{\text{AS}}
\newcommand{\image}{\text{image}}

\newcommand{\Sectionref}[1]{\namedref{Section}{sec:#1}}
\newcommand{\Subsectionref}[1]{\namedref{Section}{subsec:#1}}

\newcommand{\Appendixref}[1]{\namedref{Appendix}{app:#1}}
\newcommand{\Theoremref}[1]{\namedref{Theorem}{thm:#1}}

\newcommand{\Definitionref}[1]{\namedref{Definition}{defn:#1}}
\newcommand{\Lemmaref}[1]{\namedref{Lemma}{lem:#1}}
\newcommand{\Remarkref}[1]{\namedref{Remark}{remark:#1}}

\newcommand{\Figureref}[1]{\namedref{Figure}{fig:#1}}

\newcommand{\Footnoteref}[1]{\namedref{Footnote}{foot:#1}}
\newcommand{\Pageref}[1]{\hyperref[#1]{page~\pageref*{#1}}}



\definecolor{darkred}{rgb}{0.5, 0, 0} 
\definecolor{darkblue}{rgb}{0,0,0.5} 
\hypersetup{
    colorlinks=true,
    linkcolor=darkred,
    citecolor=darkblue,
    urlcolor=darkblue   
}

\renewcommand{\ij}{\ensuremath{\vec{ij}}\xspace}
\newcommand{\ji}{\ensuremath{\vec{ji}}\xspace}

\newcommand{\RI}{\ensuremath{RI}\xspace}

\newcommand{\X}{\ensuremath{\mathcal{X}}\xspace}
\newcommand{\Y}{\ensuremath{\mathcal{Y}}\xspace}
\newcommand{\Z}{\ensuremath{\mathcal{Z}}\xspace}
\newcommand{\R}{\ensuremath{\mathcal{R}}\xspace}

\renewcommand{\paragraph}[1]{\smallskip\noindent{\bf #1}~}

\begin{document}
\thispagestyle{empty}

\title{Communication and Randomness Lower Bounds for Secure Computation}

\author{Deepesh Data, {\it Student Member, IEEE}, \\ Vinod M. Prabhakaran, {\it Member, IEEE}, and Manoj M. Prabhakaran, {\it Member, IEEE}
\thanks{D. Data's research was fundted in part by a Microsoft Research India Ph.D. Fellowship. 
V. M. Prabhakaran's research was funded in part by a Ramanujan fellowship from the Department of Science and Technology, Government of India and in part by by Information Technology Research Academy (ITRA), Government of India under ITRA-Mobile grant ITRA/15(64)/Mobile/USEAADWN/01. 
M. M. Prabhakaran's research was funded in part by the NSF under grant 12-28856. 
This work was presented in part at the 34th International Cryptology Conference (CRYPTO), 2014 and the IEEE International Symposium on Information Theory (ISIT), 2015.}
\thanks{D. Data and V. M. Prabhakaran are with the School of Technology \& Computer Science, 
Tata Institute of Fundamental Research, Mumbai  400005, India.
Email: \{deepeshd,vinodmp\}@tifr.res.in. M. M. Prabhakaran is with the Department of Computer Science, University of Illinois, Urbana-Champaign, Urbana, IL 61801. Email: mmp@illinois.edu.}
}

\maketitle

\begin{abstract}
In secure multiparty computation (MPC), mutually distrusting users collaborate
to compute a function of their private data without revealing any additional
information about their data to the other users. While it is known that
information theoretically secure MPC is possible among $n$ users having access
to private randomness and are pairwise connected by secure, noiseless, and
bidirectional links against the collusion of less than $n/2$ users (in the {\em
honest-but-curious} model; the threshold is $n/3$ in the {\em malicious
model}), relatively little is known about the communication and randomness
complexity of secure computation, i.e., the amount of communication and
randomness required to compute securely.

In this work, we employ information theoretic techniques to obtain lower bounds
on communication and randomness complexity of secure MPC. We restrict ourselves
to a concrete interactive setting involving three users under which all
functions are securely computable against corruption of individual users in the
honest-but-curious model. We derive lower bounds for both the perfect security
case (i.e., zero-error and no leakage of information) and asymptotic security
(where the probability of error and information leakage vanish as block-length
goes to $\infty$).

Our techniques include the use of a data processing inequality for {\em
residual information} (i.e., the gap between mutual information and
G\'acs-K\"orner common information), a new information inequality for
3-user protocols, and the idea of {\em distribution switching} by which
lower bounds computed under certain worst-case scenarios can be shown to
apply for the general case.

Our lower bounds are shown to be tight for various functions of interest. In
particular, we show concrete functions which have ``communication-ideal''
protocols, i.e., which achieve the minimum communication simultaneously on
all links in the network. Also, we obtain the first {\em explicit} example
of a function that incurs a higher communication cost than the input length,
in the secure computation model of Feige, Kilian, and Naor
\cite{FeigeKiNa94}, who had shown that such functions exist. We also show
that our communication bounds imply tight lower bounds on the amount of
randomness required by MPC protocols for many interesting functions.

\end{abstract}

\section{Introduction}\label{sec:intro}
Secure multiparty computation (MPC) allows mutually distrusting users to collaborate in computational tasks. In particular, it allows such parties to compute a function of their private data without revealing any additional information about their data to the other users. This can be trivially achieved in the presence of a trusted central server as all the users can send their private data to the server, which  performs the computation and sends back the function value to all the users. 
The goal of secure MPC -- often referred to simply as MPC -- is to emulate this trusted server by a {\em protocol} in which the users communicate among themselves, and learn nothing but what they would have learned by interacting with the trusted server.
MPC has several important potential applications including privacy-preserving data mining, secure auction, secure machine learning, secure benchmarking (see, e.g., \cite{CramerDaNi15}). 

MPC was pioneered by the seminal works of Shamir, Rivest, and
Adleman~\cite{ShamirRiAd81}, Rabin~\cite{Rabin81}, Blum~\cite{Blum81},
Yao~\cite{Yao82mpc,Yao86}, Goldreich, Micali, and
Wigderson~\cite{GoldreichMiWi87}, and others.  All of these early results
were based on computational limitations of adversaries and some
cryptographic assumptions, such as the existence of one-way functions,
hardness of factoring large integers, etc.  In a remarkable result, Ben-Or,
Goldwasser, and Wigderson \cite{BenorGoWi88}, and independently Chaum,
Cr\'epeau, and Damg\r{a}rd~\cite{ChaumCrDa88}, showed that information
theoretically MPC is possible among $n$ users having access to private
randomness and are pairwise connected by private, noiseless, and
bidirectional links against the collusion of at most
$\lfloor\frac{n-1}{2}\rfloor$ users in the {\em honest-but-curious} model
and against the collusion of at most $\lfloor\frac{n-1}{3}\rfloor$ users in
the {\em malicious model}.  In the honest-but-curious model, users follow
the protocol honestly, but they retain all the messages exchanged during the
entire execution of the protocol, and in the end, the colluders pool their
information (data, private randomness, and messages) together and try to
find additional information about other users' data. This is also referred
to as the semi-honest model or a model with passive corruption.  In the
malicious model, dishonest users may arbitrarily deviate from the protocol.
These thresholds are known to be tight, in the sense that there are
functions which cannot be securely computed if these thresholds are
exceeded.  There is another line of work on information theoretically secure
computation which relies on stochastic resources, such as noisy channel or
distributed sources; and there, secure computation is possible even if these
thresholds are exceeded~\cite{CrepeauKi88}. In this work we focus on the
model of \cite{BenorGoWi88} where such resources are unavailable.  

Communication is a critical resource in any distributed computation.
Communication complexity of multi-party computation without any security
requirements has been widely studied since \cite{Yao79} (see
\cite{KushilevitzNi97book}), and more recently has seen the use of
information-theoretic tools as well, in~\cite{ChakrabartiShWiYa01} and
subsequent works. Independently, in the information theory literature,
communication requirements of interactive function computation have been
studied (e.g.~\cite{OrlitskyRoche,MaIs13}). In secure distributed computation, in
addition to communication, private randomness is also a crucial resource; it
is known that secure computation of nontrivial functions is not possible
deterministically~\cite{KushilevitzRo98}. However, relatively less is known
about the {\em lower bounds} on the amount of {\em communication} and {\em
randomness} required by secure computation protocols, with a few notable
exceptions
\cite{Kushilevitz92},\cite{FranklinYu92},\cite{ChorKu93},
\cite{FeigeKiNa94},\cite{KushilevitzMa97},\cite{KushilevitzRo98},
\cite{BlundoSaPeVa99},\cite{GalRo05},
which provide lower bounds for very specific functions (mostly for
modular-addition).  
For 2-user secure computation, Kushilevitz~\cite{Kushilevitz92} combinatorially characterized the communication complexity of securely computable functions with security against passive corruption of a single user.  For $n$-user secure protocols, Franklin and Yung~\cite{FranklinYu92} proved an $\Omega(n^2)$ lower bound on the {\em number of messages} exchanged for {\sc xor} function if the security is required against the corruption of $t=\Omega(n)$ users, which matches the upper bound (up to constants); if a stronger corruption model (fail-stop corruption) is assumed, then \cite{FranklinYu92} showed matching {\em amortized} upper and lower bounds for modular-addition for $t=\Omega(n)$, which implies that parallelization does not help in this stronger model.  Further, Chor and Kushilevitz~\cite{ChorKu93} gave tight lower and upper bounds (exact constants) on the number of messages exchanged for modular-addition against corruption of $t\leq n-2$ users.  For a restricted class of 3-user secure protocols, Feige, Kilian, and Naor~\cite{FeigeKiNa94}, along with positive results, obtained a modest communication lower bound (more than the input length) for random Boolean functions.  For secure computation of {\sc xor}, G\'al and Ros\'en~\cite{GalRo05} proved an $\Omega(\log n)$ lower bound on the amount of randomness required, which matches the upper bound $O(t^2\log(n/t))$ of Kushilevitz and Mansour~\cite{KushilevitzMa97} for any fixed $t$.  Kushilevitz and Ros\'en~\cite{KushilevitzRo98} studied the tradeoff between randomness and number of rounds required  in $n$-user secure computation of Boolean functions.  Using information theoretic tools, Blundo et al.~\cite{BlundoSaPeVa99} proved matching (exact constants) lower and upper bounds on the randomness complexity for modular-addition against the corruption of any $t=n-2$ users.

Obtaining strong lower bounds for communication and randomness in
information-theoretically secure MPC protocols is considered difficult, as
it has implications to other long-standing open problems in theoretical
computer science. In particular, Ishai and Kushilevitz~\cite{IshaiKu04}
showed that establishing strong MPC communication lower bounds (even with
restrictions on the number of rounds) would imply breakthrough lower bound
results for well-studied problems like private information retrieval and
locally decodable codes. 
Further, due to the protocols of \cite{BenorGoWi88,ChaumCrDa88}, lower
bounds for MPC communication (with a constant number of players) that are
super-linear in the input size would imply super-linear lower bounds for
circuit complexity -- a notoriously hard problem.  The protocol of
Damg\r{a}rd and Ishai~\cite{DamgardIs06} extended this to a non-constant
number of players.  Kushilevitz et al.~\cite{KushilevitzOsRo96} showed that
this relation to circuit size holds even for MPC protocols that use only a
constant number of random bits, if security is required only against
semi-honest corruption of a single player.
One of the goals of this work is to develop tools to tackle the difficult problem of
establishing lower bounds for communication and randomness in MPC, even if
they do not have immediate implications to circuit complexity,
private information retrieval, or locally decodable codes.

In this work, we also consider a relaxed notion of security for MPC --
namely {\em asymptotic security}. In the standard cryptographic definitions,
security is required for every input. Also, often the security is required
to be ``perfect'' in that the computation is always correct and there is
no information leaked about the inputs beyond the output.\footnote{While in this paper we mostly focus on perfect security, the more
general notion of {\em statistical security} -- which allows the error in
security to be ``negligible'' as a function of a security parameter (i.e.,
given any polynomial in the security parameter, eventually the error becomes
less than its reciprocal) -- is similar in that it also requires security
for every input and there is no distribution over inputs.}

In contrast, in asymptotic security, users are given many independent copies
of the inputs, and we allow the error in security (probability of error in
the outcome and the leakage of information) to be ``vanishing'' as a
function of the number of copies (i.e., eventually, the error becomes less
than any given positive constant).  While asymptotically correct interactive
function computation without any privacy requirement has been
investigated~\cite{OrlitskyRoche},\cite{MaIs11},\cite{MaIsGu12}, as far as
we know, there is very little work on asymptotically secure
computation~\cite{LeeAb14}. Lee and Abbe~\cite{LeeAb14} considered the
communication requirements for asymptotically secure computation under a
restrictive model of protocols in which  no private randomness is available
to the users.
In this work we provide communication and randomness lower bounds for
asymptotically secure computation of any function with arbitrary input
distribution.  We also establish a gap between the communication
requirements (and also randomness requirements) under asymptotic security
and under perfect security by studying the modular addition function.

It is instructive to compare the problem of communication complexity lower
bounds for secure multi-party computation with that when there is no
security requirement involved. This latter problem has been extensively
studied --- over the last three and a half decades, starting with
\cite{Yao79} --- resulting in a rich collection of results and techniques.
Unfortunately, many of the techniques in the communication complexity
setting are not relevant in the setting of secure computation:\footnote{Of course, communication complexity lower bounds continue to hold
for secure computation as well, but these bounds as such are (apparently)
very loose. There is a trivial upper bound for communication complexity,
which is at most the size of all inputs and outputs. This is often
insufficient for secure computation~\cite{FeigeKiNa94}; also see
Section~\ref{subsec:examples}.1.}
for instance, for communication complexity, Yao's minimax theorem allows one
to consider only deterministic protocols with public randomness, but in the
secure computation setting, one must allow private randomness, and hence it
is not sufficient to consider only deterministic protocols. This rules out
several powerful combinatorial approaches from the communication complexity
literature. But over the last decade or so (see for example, \cite{KerenidisLLRX12} and
references therein), several information theoretic tools have been
developed, which in many cases subsume more complicated combinatorial
approaches. Information-theoretic techniques have also been successfully used in deriving bounds in various 
cryptographic problems like key agreement (e.g.\ \cite{MaurerWo03}), secure two-party 
computation (e.g.\ \cite{DodisMi99}), and secret-sharing and its variants 
(e.g.\ \cite{BeimelOr11} and \cite{BlundoSaCrGaVa94}). 
Following this lead, the approach we take in this work is to develop an information-theoretic approach to obtain communication and randomness lower bounds for secure computation. 
\begin{figure}[tb]
\centering
\begin{tikzpicture}[>=stealth', font=\sffamily\Large\bfseries, thick]
\draw [fill=lightgray] (-2.5,0) circle [radius=0.4]; \node at (-2.5,0) {1};
\draw [fill=lightgray] (2.5,0) circle [radius=0.4]; \node at (2.5,0) {2};
\draw [fill=lightgray] (0,2) circle [radius=0.4]; \node at (0,2) {3};

\draw [<->] (-2.1,0) -- (2.1,0); \node [scale=0.8] at (0,-0.3) {$M_{12}$};
\draw [<->] (-2.15,0.25) -- (-0.3,1.7); \node [scale=0.8] at (-1.6,1.2) {$M_{31}$};
\draw [<->] (2.15,0.25) -- (0.3,1.7); \node [scale=0.8] at (1.6,1.2) {$M_{23}$};

\draw [->] (-3.8,0) -- (-2.9,0); \node [scale=0.8] at (-3.4,0.3) {${X}$};
\draw [->] (3.8,0) -- (2.9,0); \node [scale=0.8] at (3.5,0.3) {${Y}$};
\draw [->] (0,2.4) -- (0,3.1); \node [scale=0.8] at (0.4,2.75) {${Z}$};

\end{tikzpicture}
\caption{A 3-user secure computation problem. Alice (user-1) has input $X$ and Bob (user-2) has $Y$. We require that (i) Charlie (user-3) obtains an output $Z$, where $Z$ is a (possibly randomized) function of the other two 
users' inputs, (ii) Alice and Bob learn no additional information about each other's inputs and the output, and (iii) Charlie learns nothing more about $X,Y$ than what is revealed by $Z$. All users can talk to each other, over multiple rounds over bidirectional pairwise private links.} 
\label{fig:setup}
\end{figure}
\subsection{Results and Techniques}
In this work we restrict our study to a concrete setting involving three
users (with security against passive corruption of any single user), where
two users, Alice and Bob have inputs, $X$ and $Y$, and only the third user 
Charlie produces an
output $Z$ as a (possibly randomized) function of the inputs; see
\Figureref{setup}. This is arguably the simplest setting of~\cite{BenorGoWi88} where
all functions can be securely evaluated, against passive corruption of a
single user. Indeed, the functions considered in this model are
the same as in the model of \cite{FeigeKiNa94}; however, we allow fully
interactive communication between all three users (as
in~\cite{BenorGoWi88}), whereas in the model of \cite{FeigeKiNa94}, there is
no other communication except a single message each from Alice and Bob to Charlie, and
a random string shared between Alice and Bob. Since we allow more general
protocols, it is harder to establish lower bounds in our model. We obtain
lower bounds on the entropy of the transcript between each pair of users
which will imply lower bounds on the expected number of bits exchanged by
these users. We also obtain lower bounds on the amount of randomness needed
for secure computation.

At a high-level, the main ingredients in deriving our lower bounds for perfectly secure computation are the following:
\begin{itemize}
\item Firstly, we observe (in \Lemmaref{general_cutset}) that, since Alice
and Bob do not obtain any outputs and therefore must not learn any
additional information about each other's inputs, they are both forced to
reveal their inputs fully (up to equivalent inputs) to the rest of the
system, and further, Charlie's output depends on the inputs only through all
the communication he has with the rest of the system. Combined with the
privacy requirements, one can immediately obtain na\"ive lower bounds on the
entropies of the transcripts: specifically, writing $X,Y,Z$ as
$X_1,X_2,X_3$, we have $H(M_{ij})\geq H(X_i,X_j|X_k)$, where
$\{i,j,k\}=\{1,2,3\}$.\footnote{\label{foot:addition} We point out a simple example for which one
can obtain a tight bound from this na\"ive bound: addition (in any group)
requires one group element to be communicated between every pair of players,
even with amortization over several independent instances. Previous lower
bounds for secure evaluation of addition \cite{FranklinYu92,ChorKu93}, while
considering an arbitrary number of users, either restricted themselves to
bounding the {\em number of messages} required, or relied on non-standard
security requirements. In comparison, for the 3-user case, for semi-honest security,
results of \cite{FranklinYu92,ChorKu93} only imply that all three links
should be used. \cite{FranklinYu92} did give a lower bound on the number of
bits communicated as well, but this was shown only under a non-standard
security requirement called {\em unstoppability}.}

We strengthen the na\"ive lower bounds by relying on a ``secure
data-processing inequality'' (\Lemmaref{monotone} due to Wolf and
Wullschleger~\cite{WolfWu08} and generalized in~\cite{PrabhakaranPr14}) for
{\em residual information} --- i.e., the gap between mutual-information and
(G\'{a}cs-K\"{o}rner) common information --- which lets us relate the
residual information of real-world views of a pair of users to the residual
information of their ideal-world views.\\

\item We can improve the lower bounds by exploiting the fact that, in a
protocol, the transcripts have to be generated by the users interactively,
rather than be created by an omniscient ``dealer.'' (We formalize the latter
notion of secure transcripts generated by a dealer as {\em Correlated
Multi-Secret Sharing Schemes}.) A technical contribution of this work is a
{\em new information inequality for 3-user protocols}~(\Lemmaref{infoineq}),
which serves as a tool to separate the transcript generated by a secure
protocol from one generated by a dealer.\\

\item Our final tool, that is used to significantly improve the above lower
bounds, is called {\em distribution switching}. The key idea is that the
security requirement forces the distribution of the transcript on certain
links to be independent of certain inputs. Hence we can optimize our bound
using an appropriate distribution of inputs. In fact, we can take the
different terms in our bound and {\em optimize each of them separately
using different distributions over the inputs}. The resulting bounds are
often stronger than what can be obtained by considering a single input
distribution for the entire expression. Further, this shows that even if
the protocol is allowed to depend on the input distribution, our bounds
(which depend only on the function being evaluated) hold for every input
distribution that has full support over the input domain.
\end{itemize}

For asymptotically secure computation, we show that for the same secure
computation problems, the protocols for asymptotic security can provably be
more communication efficient than the protocols for perfect security.
Hence, the lower bounds derived for perfect security do not hold for
asymptotic security. In deriving the lower bounds for asymptotic security,
we use some of the basic ideas (cut-set, secure data-processing inequality,
information inequality) from the bounds for perfectly secure computation,
but the lower bound proofs here are slightly more involved. For instance,
in any perfectly secure protocol, the information about a user's input must
flow out through the links she/he is part of (\Lemmaref{general_cutset});
but this is not true for asymptotically secure computation
-- in fact, in some examples we show that our optimal
protocol for asymptotically secure computation does not require users to
reveal their inputs. The analogous (\Lemmaref{cutset}) turns out to be more
involved.

While we restrict our attention to a 3-user setting, to the best of our
knowledge, our lower bounds (for perfectly and asymptotically secure
computations) are the first {\em generic} lower bounds which apply to
any function. To illustrate their use, we apply them to several interesting
example functions. In particular, we show the following:
\begin{itemize}
\item For several functions 
we prove that there are secure protocols which achieve {\em optimal
communication complexity simultaneously on each link}. We call such a
protocol a {\em communication-ideal} protocol.\\ 

\item We show an {\em explicit} deterministic function $f:\{0,1\}^n \times
\{0,1\}^n \rightarrow \{0,1\}^{n-1}$, which has a communication-ideal
protocol in which Charlie's total communication cost is (and must be at
least) $3n-1$ bits. In contrast, \cite{FeigeKiNa94} showed that {\em there
exist} functions $f:\{0,1\}^n \times \{0,1\}^n \rightarrow \{0,1\}$, for
which Charlie must receive at least $3n-4$ bits, where the protocol is
required to be in their non-interactive model. (Note that our bound is
incomparable to that of \cite{FeigeKiNa94} since we require the output of
our function to be longer; on the other hand, our bound uses an explicit
function and continues to hold even if we allow unrestricted interaction.)\\

\item Our lower bounds for communication complexity also yield lower bounds
on the amount of randomness needed in secure computation protocols. We
analyze secure protocols for several functions
and prove that these protocols are {\em randomness-optimal},
i.e., they use the least amount of randomness.\\

\item We also use our lower bounds to establish a separation between secret
sharing and secure computation: we show that there exists a function (in
fact, the {\sc and} function) which has a secret sharing scheme with a
share strictly smaller than the number of bits in the transcript on the
corresponding link in any secure computation protocol for that function.
While such a separation is natural to expect, we note that proving it
requires exploiting the properties of an interactive protocol.\\

\item For asymptotically secure computation: we analyze asymptotically
secure protocols for some functions 
and show that, under independent input
distribution, these protocols are communication-ideal as well as
randomness-optimal.
\end{itemize}

\subsection{Outline of the Paper}
We discuss the problem setup and some preliminaries in \Sectionref{prelims}.
In \Sectionref{ps_lowerbounds}, we prove our lower bound results for perfectly secure computation; this is also the setting for classical positive results like that of \cite{BenorGoWi88}. Secure protocols for some functions of interest are given, and our bounds are analyzed for those functions. In \Sectionref{asymp_lowerbounds}, we prove our lower bound results for asymptotically secure computation and apply them to a few functions. In \Sectionref{conclusion}, we conclude and give some open problems.

The lower bounds derived in this paper are for the honest-but-curious model against passive corruption of a single user. Typically these bounds continue to hold for active corruption as well -- for many functionalities, every protocol secure against active corruption is a protocol secure against passive corruption.
\section{Preliminaries}\label{sec:prelims}

\paragraph{Notation.}
We write $p_X$ to denote the distribution of a discrete random variable
$X$; $p_X(x)$ denotes $\Pr[X=x]$. When clear from the context, the
subscript of $p_X$ will be omitted.
The conditional distribution denoted by $p_{Z|U}$ specifies
$\Pr[Z=z|U=u]$,
for each value $z$ that $Z$ can take and each
value $u$ that $U$ can take.
A {\em randomized function} of two variables,  is specified by
a probability distribution $p_{Z|XY}$, where $X,Y$ denote the two input
variables, and $Z$ denotes the output variable.
For a sequence of random variables $X_1,X_2,\ldots,$ we denote by $X^n$ the vector $(X_1,\ldots,X_n)$. We abbreviate independent and identically distributed by i.i.d.

For random variables $T,U,V,$ we write the {\em Markov chain} $T-U-V$ to
indicate that $T$ and $V$ are conditionally independent conditioned on $U$.  All logarithms are to the base 2. The binary entropy function
is denoted by $H_2(p)=-p\log p -(1-p)\log(1-p),\;p\in(0,1).$

\paragraph{Problem Definition.}
We consider three user computation functionalities, in which Alice and Bob (users 1 and 2) receive as inputs blocks of random variables $X^n\in{\X^n}$ and $Y^n\in{\Y^n}$, respectively, where $(X_i,Y_i)\sim p_{XY}$, i.i.d., and Charlie (user 3) wants to produce an output $Z^n\in{\Z^n}$, where $Z_i$'s are distributed according to a specified distribution $p_{Z|XY}$.
In particular, we can consider a {\em deterministic function evaluation} functionality where $p_{Z|XY}(z|x,y)=1_{z=f(x,y)}$ for some function $f:{\X}\times{\Y}\to{\Z}$. The set ${\X}$, ${\Y}$, and ${\Z}$ are always finite. We assume that every pair of users is connected by a noiseless, bidirectional link, which is secure from the other user, i.e., the other user cannot read or tamper with any message sent on that link. All the users have access to private randomness, which is independent between the users and also independent of their inputs. We study the secure computation problem in two settings: perfect security and asymptotically perfect security. Below, we consider protocols which can depend not only on $p_{Z|XY}$, but also $p_{XY}$ (a protocol is formally defined later in this section); note that since we are interested in establishing lower bounds, this strengthens our results.

\begin{enumerate}
\item {\bf Perfectly secure computation:} A perfectly secure computation protocol $\Pi(p_{XY},p_{Z|XY})$ satisfies the following conditions:
\begin{itemize}
\item {\em Correctness:} Charlie's output $Z^n$ should be distributed according to $p(z^n|x^n,y^n) = \Pi_{i=1}^n p_{Z|XY}(z_i|x_i,y_i)$, where $x^n$ and $y^n$ are the inputs to Alice and Bob, respectively.
\item {\em Privacy:} Corresponding to privacy against Alice, Bob, and Charlie, respectively, we have the following three conditions:
\begin{align*}
I(M_{12},M_{31};Y^n,Z^n|X^n)&=0, \\
I(M_{12},M_{23};X^n,Z^n|Y^n)&=0, \\
I(M_{23},M_{31};X^n,Y^n|Z^n)&=0,
\end{align*}
where $M_{ij}$ is the collection of all the messages exchanged between users $i$ and $j$ on the $ij$ link in either direction during the entire execution of the protocol (a formal definition is given later, along with the definition of a protocol).
\end{itemize}
Intuitively, the privacy conditions guarantee that even if one user, say
Alice, is curious and retains her view (i.e., her input and all the messages exchanged during the entire execution of the protocol), this view reveals nothing more to her about the input and output of the
other users (namely, $Y^n,Z^n$), than what her own input/output (namely, $X^n$) reveals. In other words, a
curious user may as well simulate a view for herself based on just its
input and output rather than retain the actual view it obtained from the
protocol execution. A more formal definition of perfectly secure protocols is given in \Definitionref{ps_secure-protocol} in \Sectionref{ps_lowerbounds}.\\

\item {\bf Asymptotically secure computation:} For asymptotically secure computation, for simplicity, in this paper we restrict ourselves to deterministic functions $f:\X\times\Y\to\Z$. A sequence of asymptotically secure protocols $\Pi_n(f,p_{XY})$ satisfies the following conditions:
\begin{itemize}
\item {\em Correctness:} Charlie's output $\hat{Z}^n$ should be close to the true output $Z^n$, where $Z_i=f(X_i,Y_i)$, $i=1,2,\hdots,n$, in the sense that $\Pr\{\hat{Z}^n\neq Z^n\}\to0$ as $n\to\infty$, where probability is taken over the randomness of the input distribution and the protocol.
\item {\em Privacy:} Corresponding to privacy against Alice, Bob, and Charlie, respectively, we have the following three conditions as $n\to\infty$:
\begin{align*}
I(M_{12},M_{31};Y^n,Z^n|X^n) &\to 0, \\
I(M_{12},M_{23};X^n,Z^n|Y^n) &\to 0, \\
I(M_{23},M_{31};X^n,Y^n|Z^n) &\to 0.
\end{align*}
\end{itemize}
Intuitively, the privacy conditions guarantee that, from the protocol, any one user does not obtain non-negligible additional information about other users' inputs and output (if any). A more formal definition of asymptotically secure protocols is given in \Definitionref{asymp_secure-protocol} in \Sectionref{asymp_lowerbounds}.
\end{enumerate}

\paragraph{A Normal Form for $(p_{XY},p_{Z|XY})$.}
For a pair $(p_{XY},p_{Z|XY})$, define the relations $x\cong x'$, $y\cong y'$, and $z\cong z'$ as follows.
\begin{enumerate}
\item For any $x,x'\in\X$, let $\mathcal{S}_{x,x'}=\{y\in\Y : p_{XY}(x,y)>0, p_{XY}(x',y)>0\}$.
We say that $x\cong x'$, if $\forall y\in\mathcal{S}_{x,x'}$ and $z\in\Z$, 
we have $p_{Z|XY}(z|x,y)=p_{Z|XY}(z|x',y)$.
\item For any $y,y'\in\Y$, let $\mathcal{S}_{y,y'}=\{x\in\X : p_{XY}(x,y)>0, p_{XY}(x,y')>0\}$.
We say that $y\cong y'$, if $\forall x\in\mathcal{S}_{y,y'}$ and $z\in\Z$, 
we have $p_{Z|XY}(z|x,y)=p_{Z|XY}(z|x,y')$.
\item Let $\mathcal{S}=\{(x,y) : p_{XY}(x,y)>0\}$. For any $z,z'\in\Z$, we say that 
$z\cong z'$, if $\exists c\geq 0$ such that $\forall (x,y)\in\mathcal{S}$, we have 
$p_{Z|XY}(z|x,y)=c\cdot p_{Z|XY}(z'|x,y)$.
\end{enumerate}
A pair $(p_{XY},p_{Z|XY})$ is said to be in {\em normal form} if $x\cong x'\Rightarrow x=x'$, 
$y\cong y'\Rightarrow y=y'$, and $z\cong z'\Rightarrow z=z'$.

In the paper we mostly deal with $p_{XY}$ having full support. If $p_{XY}$ has full support, 
then the above definition reduces to the following definition of normal form.

\paragraph{A Normal Form for Functionality $p_{Z|XY}$ (for $p_{XY}$ with full support).} 
For a randomized functionality $p_{Z|XY}$, we define the relation  $x\equiv x'$ for $x,x'\in\X$
 to hold if $\forall y\in\Y, z\in\Z$, $p(z|x,y) = p(z|x',y)$; similarly we define $y\equiv y'$.
 For $z,z'\in\Z$, we define $z\equiv z'$ if there exists a constant $c$ such that 
$\forall x\in\X,y\in\Y$, $p(z|x,y)= c\cdot p(z'|x,y)$. 
We say that $p_{Z|XY}$ is in {\em normal form} if $x\equiv x' \Rightarrow x=x'$,
$y\equiv y' \Rightarrow y=y'$, and $z\equiv z' \Rightarrow z=z'$.

Note that if $p_{Z|XY}$ is a deterministic mapping $f:\X\times\Y\to\Z$, then $x\equiv x'$ for $x,x'\in\X$ implies that $\forall y\in\Y$, $f(x,y)=f(x',y)$; similarly $y\equiv y'$ is defined. We say that $f$ is in normal form if $x\equiv x' \Rightarrow x=x'$ and $y\equiv y' \Rightarrow y=y'$.
 
It is easy to see that if $p_{XY}$ has full support then one can transform any randomized function $p_{Z|XY}$ to one in normal form $p_{Z^*|X^*Y^*}$ with possibly smaller alphabets, so that any secure computation protocol for the former can be transformed to one for the latter with the same communication costs, and vice versa.
To define $X^*$, $\X$ is modified by replacing all $x$ in an equivalence class of $\equiv$ with a single representative; $Y^*$ and $Z^*$ are defined similarly.
The modification to the protocol, in either direction, is for each user to locally map $X$ to $X^*$ etc., or vice versa; notice that the $Z^*$ to $Z$ map is potentially randomized.

\paragraph{Protocols.}
Given inputs $X^n$ and $Y^n$ to Alice and Bob, respectively, all the users engage in a protocol $\Pi_n$ where they send messages to each other over several rounds, and at the end Charlie produces the output. We will omit the subscript $n$ where it is clear from the context.
A protocol consists of ``next message functions'' $(\Pi^1,\Pi^2,\Pi^3)$ and an output function $\Pi^{3,\text{out}}$. The next message function $\Pi^i, i=1,2,3$ specifies a distribution over $\mathbb{N}\times\{0,1\}^*\times \{0,1\}^*$ (which corresponds to the number of round and the messages on the two links to which user-$i$ is associated) conditioned on the input of user-$i$ (if any) and all the messages on these two links so far. The output function $\Pi^{3,\text{out}}$ defines the output of user-3 as a probabilistic function of all the messages it has seen so far. Specifically, it is a distribution over ${\mathcal Z}^n$ conditioned on all the messages on the links of Charlie.
We allow protocols to depend on the distribution of inputs to the users which would allow one to tune a protocol to be efficient for a given input distribution.
We require that a valid protocol must terminate with probability 1, i.e., on each link, the (potentially random) number of rounds after which the link remains unused must be finite with probability 1.
We denote by $M_{\vec{ij},t}$, the message sent from user $i$ to user $j$ during the round $t$ and by $M_{\vec{ij}}^{t-1}$, all the messages sent by user $i$ to user $j$ up to round $t-1$.
Let $M_{ij}^{t-1}=(M_{\vec{ij}}^{t-1},M_{\vec{ji}}^{t-1})$ denote all the message exchanged between user $i$ and user $j$ up to round $t-1$.
We denote by $M_{ij}$, the final transcript on link $ij$, which is the collection of all the messages exchanged between users $i$ and $j$ during the entire execution of the protocol.
The message $M_{\vec{ij},t}$ (which may be an empty string) is a function of user $i$'s input (if any), $M_{ij}^{t-1}$, and its private randomness.
Furthermore, we restrict the message $M_{\vec{ij},t}$ to be a codeword of a (potentially random) prefix-free binary code $\mathcal{C}_{\vec{ij},t}$, which itself can be determined (with probability 1) by the messages $M_{ij}^{t-1}$ exchanged between users $i$ and $j$.
Although this restricts the kind of protocols we allow, (e.g., our definition does not allow $\mathcal{C}_{\vec{12},t}$ to be determined by messages exchanged between users 1 and 2 via user-3 and not over the 12 link), this encompasses a fairly general class of protocols.
Having the prefix-free requirement and also that the code be determined by previously exchanged messages allow the participating nodes to know when each message and the exchange over the link connecting them has come to an end without the need for an explicit end-of-message symbol.
While we allow this generality, the protocols we provide have a deterministic number of rounds with deterministic message lengths. The generality is in order to prove impossibility results (communication and randomness lower bounds) with wide applicability.

We define $M_1=(M_{12},M_{31})$ as the transcripts that user 1 can see; $M_2$ and $M_3$ are defined similarly. We define the view of the $i^{\text{th}}$ user, $V_i$ to consist of $M_i$ and that user's input and output (if any).
Observe that a protocol along with an input distribution fully defines the joint distribution over all the inputs, outputs, and the joint transcripts on all the links.

\paragraph{Expected Number of Bits Exchanged and Entropy.} As mentioned earlier, we require that the
message sent at every round is a codeword in a prefix-free binary code which can be dynamically determined based on the previous messages exchanged over the link. This allows us to lower-bound the expected number of bits communicated in each link by the entropy of the transcript in that link.

Let $L_{\vec{ij},t} \in \{0,1,2,\hdots\}$ be the (potentially random) length of the message $M_{\vec{ij},t}$. Similarly, let $L_{ij,t}, L_{ij}^t$, and $L_{ij}$  be the lengths of $M_{ij,t}, M_{ij}^t$, and $M_{ij}$, respectively. For a protocol $\Pi_n$, we define the rate quadruple $(R_{12},R_{23},R_{31},\rho)$ as $R_{ij} := \frac{1}{n}\mathbb{E}[L_{ij}]$, $i,j=1,2,3$, $i\neq j$, and $\rho := \frac{1}{n}H(M_{12},M_{23},M_{31},Z^n|X^n,Y^n)$.

We are interested in lower bounds for ${\mathbb E}[L_{ij}]$. We have
\begin{align*}
H(M_{ij}) &= \sum_{t=1}^\infty H(M_{\ij,t},M_{\ji,t}|M_{ij}^{t-1})\\
&\leq \sum_{t=1}^\infty H(M_{\ij,t}|M_{ij}^{t-1}) + H(M_{\ji,t}|M_{ij}^{t-1})\\
&\stackrel{\text{(a)}}{=} \sum_{t=1}^\infty H(M_{\ij,t}|M_{ij}^{t-1},\mathcal{C}_{\ij,t}) + H(M_{\ji,t}|M_{ij}^{t-1},\mathcal{C}_{\ji,t})\\
&\leq \sum_{t=1}^\infty H(M_{\ij,t}|\mathcal{C}_{\ij,t}) + H(M_{\ji,t}|\mathcal{C}_{\ji,t})\\
& \stackrel{\text{(b)}}{\leq} \sum_{t=1}^\infty {\mathbb E}[L_{\ij,t}] +{\mathbb E}[L_{\ji,t}]\\
&= {\mathbb E}[L_{ij}],
\end{align*}
where (a) follows from the fact that the prefix-free codes $\mathcal{C}_{\ij,t},\mathcal{C}_{\ji,t}$, of which $M_{\ij,t},M_{\ji,t}$ are codewords, respectively, are functions of $M_{ij}^{t-1}$. (b) follows from the fact that the expected length $L$ of a prefix-free binary code for a random variable $U$ is lower-bounded by its entropy $H(U)$ \cite[Theorem 5.3.1]{CoverThomas06}.

\paragraph{Conditional Graph Entropy.}
Given a graph $G=(V,E)$, where $V$ is a finite collection of nodes and $E$ is a collection of pairs of vertices from $V$. 
A subset $U\subseteq V$ of $G$ is called an {\em independent set}  of $G$ if no two vertices of $U$ have an edge (an edge is a pair of distinct vertices) between them in $G$. 
Let $\varGamma(G)$ denote the collection of all independent sets of $G$.

Witsenhausen \cite{Witsenhausen76} defined the {\em characteristic graph} $G_X=(V,E)$ for a 
pair $(p_{XY},f)$, where $f:\X\times\Y\to\Z$ is a deterministic function, as follows: 
its vertex set is the support set of $X$, and $E=\{\{x,x'\}:\exists y\in\Y \text{ such that } 
p_{XY}(x,y)\cdot p_{XY}(x',y)>0 \text{ and } f(x,y)\neq f(x',y)\}$. $G_Y$ can be defined similarly.
\begin{defn}[Conditional Graph Entropy \cite{OrlitskyRoche}]\label{defn:conditional-graph-entropy}
For a given pair $(p_{XY},f)$, the conditional graph entropy of $G_X$ is defined as follows:
\begin{align}
H_{G_X}(X|Y) \quad := \displaystyle \min_{\substack{p_{W|X}: \\ W-X-Y \\ X\in W}} I(W;X|Y), \label{eq:conditional-graph-entropy}
\end{align}
\end{defn}
\noindent where the alphabet of $W$ is $\varGamma(G_X)$ -- the set of all independent sets of the characteristic graph $G_X$ defined above. 
By the data-processing inequality, the minimization in \eqref{eq:conditional-graph-entropy} can be restricted to $W$ ranging over maximal independent sets.
Note that $0\leq H_{G_X}(X|Y) \leq H(X|Y)$ hold in general; and if $G_X$ is a complete graph then $H_{G_X}(X|Y)=H(X|Y)$.

\paragraph{Common Information and Residual Information.}
G\'acs and K\"orner~\cite{GacsKorner} introduced the notion of common
information to measure a certain aspect of correlation between two random
variables. The G\'acs-K\"orner common information of a pair of correlated
random variables $(U,V)$ can be defined as $H(U\sqcap V)$, where $U\sqcap V$
is a random variable with maximum entropy among all random variables $Q$ that
are determined both by $U$ and by $V$ (i.e., there are functions $g$ and $h$
such that $Q =g(U)=h(V)$). It is not hard to see that $U\sqcap V$ is equal to the
 random variable corresponding to the set of connected components of the {\em characteristic bipartite graph} of $p_{UV}$ -- 
for a distribution $p_{UV}$, a bipartite graph on vertex set $\mathcal{U}\cup\mathcal{V}$ 
is said to be the characteristic bipartite graph of $p_{UV}$, 
if $u\in\mathcal{U}$ and $v\in\mathcal{V}$ are connected whenever $p_{UV}(u,v)>0$.
Note that if $p_{UV}$ is such that the characteristic bipartite graph is connected, then 
$U\sqcap V$ is constant and $H(U\sqcap V)=0$.
In \cite{PrabhakaranPr14}, the gap between mutual information and common information was termed {\em residual information}: $\RI(U;V):=I(U;V)-H(U\sqcap V)$.

In \cite{WolfWu08}, Wolf and Wullschleger identified (among other things) 
the following secure {\em data processing inequality} for residual information.
\begin{lem}[Secure data processing inequality \cite{WolfWu08}]\label{lem:monotone}
If $T,U,V,W$ are jointly distributed random variables such that the following two Markov chains hold: (i) $U-T-W$ and (ii) $T-W-V$, then
\[  \RI(T;W) \leq \RI((U,T);(V,W)).\]
\end{lem}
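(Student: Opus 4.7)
The plan is to unfold $\RI = I - H(\sqcap)$ on both sides and reduce the claimed inequality to a cleaner statement. Write $Q := T \sqcap W$ and $Q' := (U,T) \sqcap (V,W)$. Applying the chain rule to the mutual information and using the two Markov chains gives
\[ I((U,T);(V,W)) \;=\; I(T;W) + I(U;V \mid T,W), \]
since $I(T;V|W) = 0$ by $T{-}W{-}V$ and $I(U;W|T) = 0$ by $U{-}T{-}W$. Since $Q$ is simultaneously a function of $T$ and of $W$, it is a common function of $(U,T)$ and $(V,W)$; by the maximality defining $Q'$ it is therefore determined by $Q'$, so $H(Q') - H(Q) = H(Q' \mid Q)$. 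The lemma thus reduces to showing
\[ H(Q' \mid Q) \;\leq\; I(U; V \mid T, W). \]

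Next, I would peel off the ``already-common'' part $Q$ by conditioning. Because $Q$ is a deterministic function of $T$ (and of $W$), both Markov chains survive conditioning on $Q{=}q$, and the characteristic bipartite graph of $(T,W)|_{Q=q}$ is, by definition, the single connected component labeled $q$, so $T|_{Q=q} \sqcap W|_{Q=q}$ is trivial. A short extremality argument identifies $H(Q'|Q{=}q)$ with the per-component common information $H\bigl((U,T)|_{Q=q} \sqcap (V,W)|_{Q=q}\bigr)$: one direction is immediate from maximality; the reverse follows by pasting the per-component common functions together with $Q$ into a single common function of $(U,T)$ and $(V,W)$, whose entropy cannot exceed $H(Q')$. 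Since averaging $I(U;V|T,W,Q{=}q)$ over $q$ recovers $I(U;V|T,W)$, it suffices to prove the base case: when $T \sqcap W$ is constant, $H(Q') \leq I(U;V|T,W)$.

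In the base case I would first use the chain rule to get
\[ I(U;V \mid T,W) \;\geq\; I(U; Q' \mid T,W) \;=\; H(Q' \mid T,W), \]
where the inequality uses that $Q' = \beta(V,W)$ is a function of $(V,W)$ and the equality uses that $Q' = \alpha(U,T)$ is a function of $(U,T)$ (so $H(Q'|U,T,W) = 0$). What remains is to promote $H(Q'|T,W)$ to $H(Q')$ by showing the independence $Q' \perp (T,W)$. Let $\mu_t$ denote the law of $\alpha(U,t)$ under $p_{U|T=t}$, and $\nu_w$ the law of $\beta(V,w)$ under $p_{V|W=w}$. For every edge $(t,w)$ of the $(T,W)$ characteristic graph, $\alpha(U,t) = \beta(V,w)$ holds on the support of $p_{U,V|T=t,W=w}$; marginalizing $V$ (resp.\ $U$) and invoking $U{-}T{-}W$ (resp.\ $T{-}W{-}V$) yields $\mu_t$ (resp.\ $\nu_w$), forcing $\mu_t = \nu_w$. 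Propagating this equality along paths in the (connected) $(T,W)$ graph shows that $\mu_t$ does not depend on $t$, which is exactly $Q' \perp (T,W)$. I expect this last propagation step to be the main obstacle: each Markov chain individually constrains only one side, and the required independence $Q' \perp (T,W)$ emerges only by combining both chains at every edge and exploiting the connectivity granted by $T \sqcap W$ being trivial.
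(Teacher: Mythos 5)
The paper does not actually prove \Lemmaref{monotone}: it is imported as a black box from Wolf and Wullschleger \cite{WolfWu08}, so there is no in-paper argument to compare yours against, and your proof has to stand on its own. It does. The opening decomposition $I(U,T;V,W)=I(T;W)+I(U;V\mid T,W)$ uses exactly the two hypothesized Markov chains to kill $I(T;V\mid W)$ and $I(U;W\mid T)$, and correctly reduces the lemma to $H(Q'\mid Q)\le I(U;V\mid T,W)$ with $Q=T\sqcap W$ and $Q'=(U,T)\sqcap(V,W)$ (using that any common function of a pair is determined by its G\'acs--K\"orner common part). The conditioning-on-$Q{=}q$ step is sound: the event $\{Q=q\}$ is measurable with respect to $T$ alone and with respect to $W$ alone, so both Markov chains survive, the restricted characteristic bipartite graph is a single connected component, and your two-sided extremality argument does identify $H(Q'\mid Q{=}q)$ with the per-component common information. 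The heart of the proof is the base case, and you have it right: $I(U;V\mid T,W)\ge H(Q'\mid T,W)$ is immediate from $Q'$ being a function of each side, and the promotion to $H(Q')$ via $Q'\perp(T,W)$ is exactly where both Markov chains must be used simultaneously --- the conditional law of $Q'$ given $(T,W)=(t,w)$ equals $\mu_t$ by $U{-}T{-}W$, equals $\nu_w$ by $T{-}W{-}V$, and connectivity of the $p_{TW}$ graph propagates these equalities to a single distribution. This is a complete, correct, and elementary proof of the cited result.
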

The Markov chain conditions can be viewed as follows: let $(U,T)$ and $(V,W)$ be the views of any pair of users, where, for the user holding $(U,T)$, $U$ can be thought of as all the messages exchanged during the protocol and $T$ can be thought of as its data (input and output) which is the ideal-world view. $(U,T)$ is the real-world view.
Now the Markov chain $U-T-W$ corresponds to the privacy requirement that $U$ (the rest of this user's view) can be simulated based on its data $T$, independent of the other user's data $W$; and similarly for the second user.
The lemma states that under this privacy condition, the residual information between the real-world views must be at least as large as that between the ideal-world views (i.e., the data).

In \cite{PrabhakaranPr14}, the following alternate definition of residual information
was given, which will be useful in lower-bounding conditional mutual information
terms.
\begin{align}
RI(U;V) := \displaystyle \min_{\substack{p_{Q|UV}: \\ I(Q;V|U)=0 \\ I(Q;U|V)=0}} I(U;V|Q). \label{eq:residual_info}
\end{align}
The random variable $Q$ which achieves the minimum is, in fact, $U\sqcap V$. Note that the residual information is always non-negative.
\begin{lem}[$RI$ tensorizes \cite{WolfWu08,PrabhakaranPr14}]\label{lem:RI_tensorizes}
For $(U^n;V^n)$, where $(U_i,V_i)$ are i.i.d., $RI(U^n;V^n) = nRI(U;V)$.
\end{lem}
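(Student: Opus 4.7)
The plan is to use the definition $\RI(U;V) = I(U;V) - H(U \sqcap V)$ and reduce the tensorization claim to two separate statements: $I(U^n;V^n) = n\, I(U;V)$, which is immediate from independence, and $H(U^n \sqcap V^n) = n\, H(U \sqcap V)$, which is the substantive claim.

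For the inequality $H(U^n \sqcap V^n) \geq n\, H(U \sqcap V)$, I would exhibit the tuple $(U_1 \sqcap V_1, \dots, U_n \sqcap V_n)$ as a common function of $U^n$ and $V^n$ (each component is a function of $U_i$ and also of $V_i$ separately), so by the maximum-entropy characterization of $U^n \sqcap V^n$, its entropy is at least that of this tuple, which by independence of the $(U_i,V_i)$ pairs equals $n H(U \sqcap V)$.

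The harder direction is $H(U^n \sqcap V^n) \leq n\, H(U \sqcap V)$. Here I would invoke the characterization of $U \sqcap V$ as the random variable indicating which connected component of the characteristic bipartite graph $G$ of $p_{UV}$ the pair $(U,V)$ lies in. Let $G^{(n)}$ be the characteristic bipartite graph of $p_{U^n V^n}$. The main lemma to prove is that the connected components of $G^{(n)}$ are exactly the Cartesian products of components of $G$; equivalently, $(u^n, v^n)$ and $((u')^n, (v')^n)$ lie in the same component of $G^{(n)}$ iff $u_i$ and $u'_i$ lie in the same component of $G$ for every $i$. The "only if" direction is immediate, since any edge in $G^{(n)}$ is an edge coordinate-wise in $G$. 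For the "if" direction, I would argue one coordinate at a time: given a path $u_i = w_0 - v^{(1)} - w_1 - v^{(2)} - \dots - w_m = u'_i$ in $G$, lift it to a path in $G^{(n)}$ by fixing all other coordinates to some values $u_j$ and inserting in the $j \neq i$ positions any $\tilde v_j$ with $p(u_j, \tilde v_j) > 0$ (which exists because $u_j$ is not an isolated vertex—without loss of generality $p_U$ has full support on the vertex set). This establishes that $U^n \sqcap V^n$ is a deterministic function of $(U_1 \sqcap V_1, \dots, U_n \sqcap V_n)$, giving the desired entropy bound.

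Combining $I(U^n;V^n) = n I(U;V)$ and $H(U^n \sqcap V^n) = n H(U \sqcap V)$ yields $\RI(U^n;V^n) = n \RI(U;V)$. The main obstacle in this plan is the coordinate-wise path-lifting argument; the care needed is that intermediate vertices in the lifted path must have positive probability (i.e., actually belong to $G^{(n)}$), which is handled by populating the untouched coordinates with any compatible neighbor and invoking the product structure of $p_{U^n V^n}$.
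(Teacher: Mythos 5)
Your proof is correct. Note that the paper itself gives no proof of this lemma --- it is imported from \cite{WolfWu08,PrabhakaranPr14} as a citation --- so there is no in-paper argument to compare against; your write-up is essentially the standard proof underlying those references. The decomposition into $I(U^n;V^n)=nI(U;V)$ plus $H(U^n\sqcap V^n)=nH(U\sqcap V)$ is the right one, and the substantive step --- that the connected components of the characteristic bipartite graph of the product distribution are exactly the products of the components of the single-letter graph --- is handled correctly: the ``only if'' direction is the trivial coordinate-wise projection of edges, and your one-coordinate-at-a-time path lifting, with the untouched coordinates padded by arbitrary positive-probability neighbors, is exactly the point that needs care and you have addressed it. (The padding is legitimate because every vertex of the characteristic bipartite graph lies in the support of $p_U$ or $p_V$ by definition, so it has at least one neighbor.) Two small remarks: once you know the components of $G^{(n)}$ factor, you get $U^n\sqcap V^n=(U_1\sqcap V_1,\dots,U_n\sqcap V_n)$ up to relabeling, so the separate $\geq$ argument via the maximum-entropy characterization, while valid, is not needed; and one should say explicitly that the identification of $U\sqcap V$ with the component label of the characteristic bipartite graph is being used in both the single-letter and $n$-letter settings, which the paper itself asserts in its preliminaries.
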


\section{Outer Bounds on the Rate-Region for Perfectly Secure Computation}\label{sec:ps_lowerbounds}
This section is divided into four parts: in \Subsectionref{prelim_lbs} we derive preliminary lower bounds for secure computation; in \Subsectionref{improved_lbs} we give some techniques which significantly improve the preliminary bounds and lead to our main theorems; in \Subsectionref{randomness} we derive lower bounds on the amount of randomness required in secure computation protocols; and in \Subsectionref{examples} we consider some interesting examples -- secure protocols are given, and our lower bound results are analyzed for these example functions.

We consider a 3-user secure computation problem specified by $(p_{XY},
p_{Z|XY})$, see \Figureref{ps_setup}. Input $X^n$ to Alice (user-1) and
$Y^n$ to Bob (user-2) are distributed according to $p_{X,Y}$, i.i.d. Charlie (user-3)
wants to compute an output $Z^n$, which should be distributed according to
$p(z^n|x^n,y^n)=\Pi_{i=1}^n p_{Z|XY}(z_i|x_i,y_i)$. We say that a protocol
$\Pi_n(p_{XY},p_{Z|XY})$ for this setup is {\em perfectly secure} if the
output satisfies this and the protocol is perfectly secure against any
single user as defined by
\eqref{eq:ps_privacy_alice}-\eqref{eq:ps_privacy_charlie} below.
Recall that for a protocol $\Pi_n$, we define the rate quadruple $(R_{12},R_{23},R_{31},\rho)$ as $R_{ij} := \frac{1}{n}\mathbb{E}[L_{ij}]$, $i,j=1,2,3$, $i\neq j$, and $\rho := \frac{1}{n}H(M_{12},M_{23},M_{31},Z^n|X^n,Y^n)$.
\begin{figure}[tb]
\centering
\begin{tikzpicture}[>=stealth', font=\sffamily\Large\bfseries, thick]
\draw [fill=lightgray] (-2.5,0) circle [radius=0.4]; \node at (-2.5,0) {1};
\draw [fill=lightgray] (2.5,0) circle [radius=0.4]; \node at (2.5,0) {2};
\draw [fill=lightgray] (0,2) circle [radius=0.4]; \node at (0,2) {3};

\draw [<->] (-2.1,0) -- (2.1,0); \node [scale=0.8] at (0,-0.3) {$M_{12}$};
\draw [<->] (-2.15,0.25) -- (-0.3,1.7); \node [scale=0.8] at (-1.6,1.2) {$M_{31}$};
\draw [<->] (2.15,0.25) -- (0.3,1.7); \node [scale=0.8] at (1.6,1.2) {$M_{23}$};

\draw [->] (-3.8,0) -- (-2.9,0); \node [scale=0.8] at (-3.4,0.3) {${X^n}$};
\draw [->] (3.8,0) -- (2.9,0); \node [scale=0.8] at (3.5,0.3) {${Y^n}$};
\draw [->] (0,2.4) -- (0,3.1); \node [scale=0.8] at (0.4,2.75) {${Z^n}$};

\node [right, scale=0.7] at (1.7,2.5) {$Z_i \sim p_{Z|XY}$};

\end{tikzpicture}
\caption{A setup for 3-user secure computation; privacy is required against single users (i.e., no collusion). Here $(X,Y)\sim p_{XY}$ and $Z_i\sim p_{Z|XY}$ for all $i$.} 
\label{fig:ps_setup}
\end{figure}
\begin{defn}\label{defn:ps_secure-protocol}
For a secure computation problem $(p_{XY},p_{Z|XY})$, the rate $(R_{12},R_{23},R_{31},\rho)$ is {\em achievable with perfect security for block-length} $n$, if there is a protocol $\Pi_n(p_{XY},p_{Z|XY})$ with rate $(R_{12},R_{23},R_{31},\rho)$ such that conditioned on inputs $X^n,Y^n$ of Alice and Bob, Charlie's output $Z^n$ is distributed according to $p(z^n|x^n,y^n)=\Pi_{i=1}^n p_{Z|XY}(z_i|x_i,y_i)$, and the following holds:
\begin{align}
I(M_{12},M_{31};Y^n,Z^n|X^n)&=0, \label{eq:ps_privacy_alice} \\
I(M_{12},M_{23};X^n,Z^n|Y^n)&=0, \label{eq:ps_privacy_bob} \\
I(M_{23},M_{31};X^n,Y^n|Z^n)&=0. \label{eq:ps_privacy_charlie}
\end{align}
Rate-region $\R^{n,\PS}$ is the closure of the set of all rate quadruples achievable with perfect security for block-length $n$. 
We say that $(R_{12},R_{23},R_{31},\rho)$ is {\em achievable with perfect security} if it is achievable with perfect security for some block-length $n$. And $\R^{\PS}$ is the closure of the set of all rate quadruples achievable with perfect security. 
\end{defn}
Here \eqref{eq:ps_privacy_alice} ensures that Alice learns no additional information about $(Y^n,Z^n)$; similarly for Bob; and \eqref{eq:ps_privacy_charlie} ensures that Charlie learns no additional information about $(X^n,Y^n)$ than revealed by $Z^n$.
\begin{remark}\label{remark:direct-sum_result}
{\em For perfectly secure computation, all our bounds are direct
sum bounds, i.e., our outer bound on $\R^{1,\PS}$ will also be an outer
bound for $\R^{\PS}$. So, for simplicity, we prove all our bounds for $n=1$
and then show that it holds for $\R^{\PS}$.}
\end{remark}
To make the presentation clear, in \Subsectionref{prelim_lbs} and \Subsectionref{improved_lbs}  we derive lower bounds only on the rates $R_{12},R_{23},R_{31}$, and lower bound on the randomness $\rho$ is derived in  \Subsectionref{randomness}. We prove bounds on the entropies $H(M_{ij})$, which, as argued in \Sectionref{prelims}, is a lower bound on the expected length of the transcript $M_{ij}$.

\subsection{Preliminary Lower Bounds}\label{subsec:prelim_lbs}
We first state the following basic lemma for any protocol for perfectly secure computation. Similar results have appeared in the literature earlier (for instance, special cases of \Lemmaref{general_cutset} appear in \cite{DodisMi00,WinklerWu10}).
\begin{figure}
\centering
\begin{tikzpicture}[>=stealth', font=\sffamily\Large\bfseries, thick]
\draw [fill=lightgray] (-2.5,0) circle [radius=0.4]; \node at (-2.5,0) {1};
\draw [fill=lightgray] (2.5,0) circle [radius=0.4]; \node at (2.5,0) {2};
\draw [fill=lightgray] (0,2) circle [radius=0.4]; \node at (0,2) {3};

\draw [<->] (-2.1,0) -- (2.1,0); \node [scale=0.8] at (0,-0.3) {$M_{12}$};
\draw [<->] (-2.15,0.25) -- (-0.3,1.7); \node [scale=0.8] at (-1.6,1.2) {$M_{31}$};
\draw [<->] (2.15,0.25) -- (0.3,1.7); \node [scale=0.8] at (1.6,1.2) {$M_{23}$};

\draw [->] (-3.8,0) -- (-2.9,0); \node [scale=0.8] at (-3.4,0.3) {${X}$};
\draw [->] (4.5,0) -- (2.9,0); \node [scale=0.8] at (4,0.3) {${Y}$};
\draw [->] (0.35,2.2) -- (1.6,3.25); \node [scale=0.8] at (0.9,3) {$Z$};

\draw[ultra thick, cyan] (-2.4,0.8) to [out=-10,in=100] (-1.4,-0.5);
\draw [ultra thick, cyan] (1.25,0.97) ellipse [x radius=1cm, y radius=2.25cm, rotate=53];

\end{tikzpicture}
\caption{A cut separating Alice from Bob \& Charlie. Protocol $\Pi$ induces a 2-user secure computation protocol between Alice and {\em combined Bob-Charlie} with privacy requirement only against Alice.}
\label{fig:setup_two-party}
\end{figure}
\begin{lem}\label{lem:general_cutset}
In any secure protocol $\Pi_1(p_{XY},p_{Z|XY})$, where $(p_{XY},p_{Z|XY})$ is in normal form,
the following must hold:
\begin{align}
H(X|M_{12},M_{31}) &= 0, \label{eq:general_cutset_alice} \\
H(Y|M_{12},M_{23}) &= 0, \label{eq:general_cutset_bob} \\
H(Z|M_{23},M_{31}) &= 0. \label{eq:general_cutset_charlie}
\end{align}
\end{lem}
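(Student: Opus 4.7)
The plan is to prove each of the three equalities by a cut-set argument that isolates one user from the other two.

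For \eqref{eq:general_cutset_alice}, I would first establish a structural factorization. Since Alice's messages on the $12$ and $31$ links are generated by her next-message function from $(X, R_1)$ and the portion of the transcript visible to her, while Bob-Charlie's messages on those same links are generated from $(Y, R_2, R_3)$ and their history (with the internal $M_{23}$ itself deterministically determined by $(y, r_2, r_3, m_{12}, m_{31})$), a round-by-round induction shows
\[
p(m_{12}, m_{31} \mid x, y) \;=\; \alpha(x)\,\beta(y),
\]
where $\alpha, \beta$ implicitly depend on $(m_{12}, m_{31})$ and represent the two sides' ``consistency probabilities.'' Invoking privacy against Alice, $I(M_{12}, M_{31}; Y, Z \mid X) = 0$, forces $p(m_{12}, m_{31} \mid x, y) = p(m_{12}, m_{31} \mid x)$, so $\beta(y)$ must be constant in $y$ over the relevant support.

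Using this factorization I would compute $p(r_2, r_3 \mid x, y, m_{12}, m_{31})$ and observe that it reduces to $p(r_2, r_3)\mathbf{1}_{(r_2,r_3)\in B}/\beta$, independent of $x$. Since the output $Z$ is a deterministic function $g(m_{12}, m_{31}, y, r_2, r_3)$, the conditional distribution $p(z \mid x, y, m_{12}, m_{31})$ is also independent of $x$; but by correctness it equals $p_{Z|XY}(z \mid x, y)$. Hence, if $(m_{12}, m_{31})$ were consistent with two distinct inputs $x \neq x'$, then for every $y \in \mathcal{S}_{x, x'}$ one would obtain $p_{Z|XY}(z \mid x, y) = p_{Z|XY}(z \mid x', y)$ for all $z$, i.e., $x \cong x'$, contradicting the normal-form assumption.

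Equation \eqref{eq:general_cutset_bob} follows by the symmetric cut-set argument with Alice and Bob swapped and using privacy against Bob. For \eqref{eq:general_cutset_charlie}, I would cut Charlie from Alice-Bob. The analogous factorization reads $p(m_{23}, m_{31} \mid x, y) = \gamma(x, y)\,\delta$, where the Charlie-side factor $\delta$ is $(x, y)$-independent because Charlie has no input. Writing $p(m_{23}, m_{31}, z \mid x, y) = \sigma(z)\,\gamma(x, y)$, where $\sigma(z)$ is the probability that Charlie's randomness produces output $z$ consistent with $(m_{23}, m_{31})$, and matching against $p(m_{23}, m_{31} \mid x, y, z) = p(m_{23}, m_{31} \mid z)$ from privacy against Charlie, one obtains $\gamma(x, y) \propto p_{Z|XY}(z \mid x, y)$ as a function of $(x, y)$ for each $z$ in the output support. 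Taking the ratio for two distinct outputs $z, z'$ both with positive conditional probability given $(m_{23}, m_{31})$ yields $p_{Z|XY}(z \mid x, y)$ and $p_{Z|XY}(z' \mid x, y)$ proportional in $(x, y)$, so $z \cong z'$ and $z = z'$ by normal form.

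The main obstacle I anticipate is establishing the structural factorization cleanly: it requires a careful round-by-round argument to show that each side's consistency with the observed portion of the transcript is independent of the other side's inputs and randomness, despite the hidden internal channel ($M_{23}$ in the Alice cut, $M_{12}$ in the Charlie cut). Once the factorization is secured, the remainder is bookkeeping that combines the privacy and correctness conditions with the normal-form definition of $\cong$.
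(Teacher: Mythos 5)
Your proposal is essentially the paper's own proof: both run a cut-set contradiction in which the conditional independence of the far side of the cut from the isolated user's input given the cut transcript (your product factorization $\alpha(x)\beta(y)$ is exactly the paper's step asserting $p(x,y,z,m_{12},m_{31})=p_{XY}(x,y)\,p(m_{12},m_{31}|x,y)\,p(z|m_{12},m_{31},y)$), combined with privacy and correctness, forces $p(z\mid y,m_{12},m_{31})=p_{Z|XY}(z\mid x,y)$ for every $x$ consistent with the transcript, contradicting the normal form; your Charlie argument likewise matches the paper's ratio computation yielding $z\cong z'$. One small attribution slip: the per-transcript identity $p(z\mid x,y,m_{12},m_{31})=p_{Z|XY}(z\mid x,y)$ does not follow from correctness alone (which only controls the marginal over transcripts) but additionally needs the $Z$-component of privacy against Alice, i.e.\ $I(M_{12},M_{31};Z\mid X,Y)=0$ --- a condition you have already invoked, and exactly how the paper's step (iv) proceeds.
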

We prove this lemma in \Appendixref{proofs}. \Lemmaref{general_cutset} states
the simple fact that, for $(p_{XY},p_{Z|XY})$ in normal form,\footnote{\label{foot:comment_zero-error}
Note that when the input distribution $p_{XY}$ has full support, this
assumption is without loss of generality (see \Sectionref{prelims}). When
$p_{XY}$ has full support, information theoretic tools have proved to be
successful in deriving optimal bounds for zero-error computation. But
deriving bounds for zero-error computation with arbitrary input
distribution is more amenable to combinatorial arguments \cite{AlonOr96}.
Hence, for perfectly secure computation, in this paper we do not deal with
arbitrary $(p_{XY},p_{Z|XY})$; we confine our attention to either $p_{XY}$
which have full support, or more generally, to $(p_{XY},p_{Z|XY})$ which
satisfy some technical conditions.}
the cut separating Alice from Bob and Charlie must reveal Alice's input $X$
(see \Figureref{setup_two-party}). The intuition is that, since Alice is
not allowed to learn any new information about $Y$, correctness condition
forces Alice to reveal $X$. Note that this conclusion crucially depends on
the privacy requirement against Alice. For example, consider $X=(X_0,X_1),
X_0,X_1\in\{0,1\}$, $Y\in\{0,1\}$, and $X_0,X_1,Y$ are i.i.d. Bern(1/2).
Let $f((X_0,X_1),Y)=X_Y$. Without the privacy condition, Bob may send $Y$
to Alice who can compute $Z=X_Y$ and send this to Charlie. $H(X)=2$, but
here the cut $(M_{12},M_{31})$ reveals only 1 bit of information about $X$.
Similarly, the cut separating Bob from the rest of the users must reveal
his input, and the cut separating Charlie must reveal his output. This
relies on the fact that Alice and Bob obtain no output, and Charlie has no
input in our model. We obtain a preliminary lower bound below by using the
above lemma and the secure data-processing inequality for residual
information (\Lemmaref{monotone}).
\begin{thm}\label{thm:prelim_lbs}
For a secure computation problem $(p_{XY},p_{Z|XY})$, where $(p_{XY},p_{Z|XY})$ is in
normal form, if $(R_{12},R_{23},R_{31},\rho)\in\R^{\PS}$, then,
\begin{align}
R_{31} &\geq \max\{RI(X;Y), RI(Y;Z)\} + H(X,Z|Y) \label{eq:prelim_lb_M31},\\
R_{23} &\geq \max\{RI(X;Y), RI(X;Z)\} + H(Y,Z|X) \label{eq:prelim_lb_M23},\\
R_{12} &\geq \max\{RI(X;Z), RI(Y;Z)\} + H(X,Y|Z) \label{eq:prelim_lb_M12}.
\end{align}
\end{thm}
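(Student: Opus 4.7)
The plan is to prove the bound on $R_{31}$; the bounds on $R_{23}$ and $R_{12}$ will follow by permuting the roles of the three users. By Remark~\ref{remark:direct-sum_result}, it suffices to prove the $n=1$ case (since $H(X^n,Z^n|Y^n) = nH(X,Z|Y)$ for i.i.d.\ inputs and $RI$ tensorizes by Lemma~\ref{lem:RI_tensorizes}). From the preliminaries, $R_{31} \geq H(M_{31})$, so I would bound $H(M_{31})$ from Bob's perspective---Bob is the unique user not seeing the $31$-link---via the decomposition
\[
H(M_{31}) \;=\; H(M_{31}\mid Y, M_{12}, M_{23}) \;+\; I(M_{31};\, Y, M_{12}, M_{23}).
\]

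For the conditional-entropy term, my plan is to show $H(M_{31}\mid Y, M_{12}, M_{23}) \geq H(X,Z|Y)$. Lemma~\ref{lem:general_cutset} tells us that $X$ is determined by $(M_{12},M_{31})$ and $Z$ by $(M_{23},M_{31})$, so $(X,Z)$ is a function of $(M_{12},M_{23},M_{31})$; this gives the data-processing inequality $H(M_{31}\mid Y,M_{12},M_{23}) \geq H(X,Z \mid Y,M_{12},M_{23})$. Then privacy against Bob (equation~\eqref{eq:ps_privacy_bob}), $I(M_{12},M_{23}; X,Z|Y)=0$, lets me drop the conditioning on $(M_{12},M_{23})$ to get $H(X,Z|Y)$.

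For the mutual-information term, the key observation I would exploit is that each link transcript satisfies \emph{both} Markov conditions required by the alternate characterization~\eqref{eq:residual_info} of $RI$ between the inputs/outputs of its endpoints. Specifically, privacy against Alice gives $I(M_{12};Y|X)=0$ and privacy against Bob gives $I(M_{12};X|Y)=0$, so by~\eqref{eq:residual_info} applied with $Q=M_{12}$, we have $I(X;Y\mid M_{12}) \geq RI(X;Y)$; the analogous argument with $Q=M_{23}$ for the pair $(Y,Z)$ gives $I(Y;Z\mid M_{23})\geq RI(Y;Z)$. To connect these to $I(M_{31}; Y, M_{12}, M_{23})$, I would use the chain-rule bounds $I(M_{31};Y,M_{12},M_{23}) \geq I(M_{31};Y\mid M_{12})$ and $I(M_{31};Y,M_{12},M_{23}) \geq I(M_{31};Y\mid M_{23})$. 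The first equals $H(Y|M_{12}) - H(Y|X)$ because Lemma~\ref{lem:general_cutset} plus privacy against Alice yields $H(Y|M_{12},M_{31}) = H(Y|X,M_{12},M_{31}) = H(Y|X)$, so $I(M_{31};Y\mid M_{12}) = I(X;Y\mid M_{12}) \geq RI(X;Y)$. Symmetrically, Lemma~\ref{lem:general_cutset} applied to $Z$ together with privacy against Charlie (equation~\eqref{eq:ps_privacy_charlie}) gives $H(Y|M_{23},M_{31}) = H(Y|Z)$, hence $I(M_{31};Y\mid M_{23}) = I(Y;Z\mid M_{23}) \geq RI(Y;Z)$. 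Taking the maximum and adding to Step~1 completes the proof.

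The subtle point---and the step I expect to require the most care---is realizing that each inequality uses \emph{two distinct} privacy conditions simultaneously: one (against the corresponding endpoint user) to collapse $H(Y|M_{\ell k}, M_{31})$ to $H(Y|X_k)$ via Lemma~\ref{lem:general_cutset}, and the other (against the opposite endpoint) to verify that the transcript $M_{\ell k}$ may legitimately be plugged into the alternate $RI$ formulation. Once this double bookkeeping is set up cleanly, the rest is straightforward chain-rule manipulation.
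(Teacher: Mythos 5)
Your proposal is correct, and its skeleton matches the paper's: both decompose $H(M_{31})$ into a conditional-entropy piece that is shown to be at least $H(X,Z|Y)$ (via \Lemmaref{general_cutset} and privacy against Bob) plus information terms that are each lower-bounded by a residual information. The genuine difference is in how you obtain the $RI$ lower bounds. The paper bounds $I(M_{31};M_{23}|M_{12})$ by first inserting $X$ and $Y$ into the two views using \Lemmaref{general_cutset}, applying the characterization \eqref{eq:residual_info} with $Q=M_{12}$ to get $RI\bigl((M_{12},M_{31},X);(M_{12},M_{23},Y)\bigr)$, and then invoking the Wolf--Wullschleger secure data-processing inequality (\Lemmaref{monotone}) to descend to $RI(X;Y)$. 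You instead work with the smaller quantity $I(M_{31};Y|M_{12})\le I(M_{31};M_{23}|M_{12})$, show it equals $I(X;Y|M_{12})$ by combining $H(Y|M_{12},M_{31})=H(Y|X)$ (cut-set plus privacy against Alice) with $H(Y|X,M_{12})=H(Y|X)$, and then apply \eqref{eq:residual_info} directly with $Q=M_{12}$ on the pair $(X,Y)$; the $(Y,Z)$ term is handled symmetrically with $Q=M_{23}$. This bypasses \Lemmaref{monotone} entirely and is somewhat more elementary, at the cost of the careful double bookkeeping of privacy conditions you flag (each step needs one privacy condition to collapse the conditional entropy and the other to certify that the transcript is an admissible $Q$ in \eqref{eq:residual_info}); the paper's route through \Lemmaref{monotone} is the one that generalizes to its later, stronger bounds, where the residual information of full views is the natural intermediate object. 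Your reduction to $n=1$ via \Lemmaref{RI_tensorizes} matches the paper's.
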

\begin{proof}
We shall prove \eqref{eq:prelim_lb_M31} for $\R^{1,\PS}$. The fact that it also holds for $\R^{\PS}$ follows from \Lemmaref{RI_tensorizes}. The other two inequalities 
can be shown similarly.
\begin{align}
H(M_{31}) &\geq \max\{H(M_{31}|M_{12}),H(M_{31}|M_{23})\} \notag \\
&= \max\{I(M_{31};M_{23}|M_{12}),I(M_{31};M_{12}|M_{23})\} \notag \\
&\qquad + H(M_{31}|M_{12},M_{23})
\label{eq:prelim_bound_M31}
\end{align}
We can bound the last term of \eqref{eq:prelim_bound_M31} as follows (to
already get a na\"ive bound):
\begin{align*}
H(M_{31}|M_{12},M_{23}) &\stackrel{\text{(a)}}{=} H(M_{31},X,Z|M_{12},M_{23},Y)\\
&\geq H(X,Z|M_{12},M_{23},Y) \stackrel{\text{(b)}}{=} H(X,Z | Y),
\end{align*}
where (a) follows from \Lemmaref{general_cutset} and (b) follows from \eqref{eq:ps_privacy_bob}, i.e., privacy against Bob.
Next, we lower-bound the first term inside the $\max$ of
\eqref{eq:prelim_bound_M31} by $RI(X;Y)$ as follows.
\begin{align}
I(M_{31};M_{23}|M_{12}) &= I(M_{12},M_{31};M_{12},M_{23}|M_{12}) \nonumber \\
&\stackrel{\text{(c)}}{=} I(M_{12},M_{31},X ; M_{12},M_{23},Y | M_{12}) \nonumber \\
& \geq RI(M_{12},M_{31},X;M_{12},M_{23},Y), \nonumber
\end{align}
where (c) follows from \Lemmaref{general_cutset}, and the last inequality 
follows from \eqref{eq:residual_info} by taking $Q=M_{12}$.
Now, by privacy against Alice we have $(M_{12},M_{31})-X-Y$, and by privacy against
Bob we have $(M_{12},M_{23})-Y-X$. Applying \Lemmaref{monotone} with the
above Markov chains, we get
\[RI(M_{12},M_{31},X;M_{12},M_{23},Y) \geq RI(X;Y).\]
Similarly, we can lower-bound the second term inside $\max$ of \eqref{eq:prelim_bound_M31} by $RI(Y;Z)$,
completing the proof.
\end{proof}
A consequence of \Lemmaref{general_cutset} is that the transcripts in a
secure computation protocol form shares in a ``correlated multi-secret
sharing scheme'' (CMSS) for the same distribution $p_{XYZ}=p_{XY}p_{Z|XY}$;
see \Appendixref{CMSS_sampling} for details.\footnote{We remark that our
notion of multiple secret sharing schemes is different from that of
\cite{BlundoSaCrGaVa94}, which (implicitly) required that secrets with
different access structures be independent of each other. In our case, $Z$
is typically strongly correlated with $X,Y$, often via a deterministic
function.} Hence, lower bounds on the entropies of the shares in a CMSS
imply lower bounds on the entropies of the messages in a secure computation
protocol. In \Appendixref{CMSS_sampling} we also derive stronger bounds on the sizes of these shares in CMSS.

To strengthen the preliminary bounds in \Theoremref{prelim_lbs}, we will restrict our 
attention in the rest of the paper to $p_{XY}$ having full support, which allows us to 
assume, without loss of generality, that the function $p_{Z|XY}$ is in normal form;
see \Sectionref{prelims} for details.

\subsection{Improved Lower Bounds}\label{subsec:improved_lbs}
{To improve the bounds in \Theoremref{prelim_lbs}, we $(i)$ give a technique, which we call 
{\em distribution switching}, and $(ii)$ prove {\em an information inequality for 3-user interactive 
protocols}, using which we improve the above bounds and obtain our main theorems.
We first prove the following lemma which gives an upper bound on the mutual information between the transcript on 
any link and the data (inputs and output) in terms of the G\'acs-K\"orner common information 
(which is equal to the difference between mutual information and residual information) between 
the data of the two users associated on that link.

\begin{lem}\label{lem:XYZ_M123}
For any secure protocol $\Pi_1(p_{XY},p_{Z|XY})$, where $p_{XY}$ need not have full support and $p_{Z|XY}$ need not be in the normal form, the following  hold:
\begin{align}
I(M_{12};X,Y,Z) &\leq I(X;Y) - RI(X;Y), \label{eq:ps_M12_XYZ} \\
I(M_{31};X,Y,Z) &\leq I(X;Z) - RI(X;Z), \label{eq:ps_M31_XYZ} \\
I(M_{23};X,Y,Z) &\leq I(Y;Z) - RI(Y;Z). \label{eq:ps_M23_XYZ}
\end{align}
\end{lem}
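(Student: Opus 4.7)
The plan is to prove all three inequalities by the same two-step template, combining the privacy conditions \eqref{eq:ps_privacy_alice}--\eqref{eq:ps_privacy_charlie} with the alternate characterization of residual information in \eqref{eq:residual_info}. I describe the argument for \eqref{eq:ps_M12_XYZ} in detail; the other two cases are symmetric.

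First I would read off two conditional independences from the privacy requirements at the two endpoints of the $12$-link: privacy against Alice gives $I(M_{12};Y,Z|X)=0$, and privacy against Bob gives $I(M_{12};X,Z|Y)=0$. Two applications of the chain rule then collapse the mutual information of interest to $I(M_{12};X,Y,Z) = I(M_{12};X) = I(M_{12};Y)$. The same two Markov chains $M_{12}-X-Y$ and $M_{12}-Y-X$ are exactly the feasibility constraints on $Q$ in the definition \eqref{eq:residual_info} of $RI(X;Y)$, so plugging in $Q=M_{12}$ yields $I(X;Y|M_{12}) \geq RI(X;Y)$.

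The final step is a chain-rule identity $I(X,M_{12};Y) = I(X;Y) + I(M_{12};Y|X) = I(M_{12};Y) + I(X;Y|M_{12})$. Since $I(M_{12};Y|X)=0$, this rearranges to $I(X;Y) = I(M_{12};Y) + I(X;Y|M_{12}) \geq I(M_{12};Y) + RI(X;Y)$, i.e., $I(M_{12};Y) \leq I(X;Y)-RI(X;Y)$; substituting $I(M_{12};Y)=I(M_{12};X,Y,Z)$ from the first step proves \eqref{eq:ps_M12_XYZ}. For \eqref{eq:ps_M31_XYZ}, privacy against Alice \eqref{eq:ps_privacy_alice} gives $I(M_{31};Z|X)=0$ and privacy against Charlie \eqref{eq:ps_privacy_charlie} gives $I(M_{31};X|Z)=0$; these collapse $I(M_{31};X,Y,Z)$ to $I(M_{31};X)=I(M_{31};Z)$ and simultaneously certify $Q=M_{31}$ as feasible in the $RI(X;Z)$ minimization, after which the same chain-rule step closes out the bound. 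The case \eqref{eq:ps_M23_XYZ} is identical with Alice and Bob swapped.

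The lemma explicitly drops the full-support and normal-form hypotheses used elsewhere in the paper, so by design the argument must not invoke \Lemmaref{general_cutset}; the template above uses only the privacy conditions and the variational formula for $RI$. The only point worth flagging is the dual role each pair of privacy conditions plays: once as Markov chains that collapse $I(M_{ij};X,Y,Z)$ to an information between the transcript and a single data variable, and again as the feasibility certificate that makes the transcript itself an admissible $Q$ in \eqref{eq:residual_info}. There is no serious obstacle; the key observation to make at the outset is that the two privacy conditions touching a given link line up perfectly with the two Markov-chain constraints in the definition of residual information for the corresponding pair of data variables.
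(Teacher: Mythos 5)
Your proof is correct and follows essentially the same route as the paper's: both collapse $I(M_{12};X,Y,Z)$ to a single-variable mutual information via the two privacy conditions, then apply the variational characterization \eqref{eq:residual_info} with $Q=M_{12}$ (the paper works with $I(M_{12};X)$ where you use $I(M_{12};Y)$, but these coincide under the same privacy conditions). The symmetric treatment of the other two links also matches the paper.
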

\begin{proof}
We first show the bound in \eqref{eq:ps_M12_XYZ}.
Since $I(M_{12};X,Y,Z)=I(M_{12};X) + I(M_{12};Y,Z|X)$, where second term is equal to zero by 
privacy against Alice \eqref{eq:ps_privacy_alice}, it is enough to show 
$I(M_{12};X)\leq I(X;Y)-RI(X;Y)$.
\begin{align}
I(M_{12};X) &= I(M_{12},Y;X) - I(Y;X|M_{12}) \nonumber \\
&= I(Y;X) + I(M_{12};X|Y) - I(Y;X|M_{12}) \nonumber \\
&= I(X;Y) - I(X;Y|M_{12}) \label{eq:ps_M12_XYZ1} \\
&\leq I(X;Y) - RI(X;Y). \label{eq:ps_M12_XYZ2}
\end{align}
We get \eqref{eq:ps_M12_XYZ1} by substituting $I(M_{12};X|Y)=0$, which follows from privacy 
against Bob \eqref{eq:ps_privacy_bob}. The inequality \eqref{eq:ps_M12_XYZ2} is obtained 
by substituting $I(X;Y|M_{12})\geq RI(X;Y)$, which can be proved by taking $Q=M_{12}$ in the 
definition of residual information \eqref{eq:residual_info} (where the Markov chain conditions 
$M_{12}-X-Y$ and $M_{12}-Y-X$ follow from privacy against Alice and privacy against Bob, respectively).

Similarly, we can show \eqref{eq:ps_M31_XYZ} using privacy against Alice and privacy against 
Charlie, and \eqref{eq:ps_M23_XYZ} using privacy against Bob and privacy against Charlie.
\end{proof}
As mentioned in \Sectionref{prelims}, for any jointly distributed random variables $U,V$, 
if the characteristic bipartite graph of $p_{UV}$ is connected, then $I(U;V)=RI(U;V)$.
Hence, as a simple consequence of the above lemma we obtain the following Lemma,
which states that privacy requirements imply the independence of the transcript $M_{12}$ 
generated by a secure protocol computing $p_{Z|XY}$ and the inputs. 
Moreover, if the function $p_{Z|XY}$ satisfies some additional constraints, the other two 
transcripts also become independent of the inputs. 
\begin{lem}\label{lem:XYZ_inde_M123}
Consider a function $p_{Z|XY}$ not necessarily in normal form.
\begin{enumerate}
\item Suppose $p_{XY}$ is such that the characteristic bipartite graph of $p_{XY}$ is connected.
Then any secure protocol $\Pi_1(p_{XY},p_{Z|XY})$ satisfies $I(M_{12};X,Y,Z)=0$.

\item Suppose $(p_{XY},p_{Z|XY})$ is such that the characteristic bipartite graph of the 
induced distribution $p_{XZ}$ is connected.
Then any secure protocol $\Pi_1(p_{XY},p_{Z|XY})$ satisfies $I(M_{31};X,Y,Z)=0$.

The characteristic bipartite graph of $p_{XZ}$ is connected if 
$p_{XY}$ has full support and $p_{Z|XY}$ satisfies the following condition:\\[0.15cm]
{\bf Condition 1.} There is no non-trivial partition $\mathcal{X} = \mathcal{X}_1 \cup \mathcal{X}_2$ (i.e., $\mathcal{X}_1 \cap \mathcal{X}_2 = \varnothing$ and neither $\mathcal{X}_1$ nor $\mathcal{X}_2$ is empty), such that if $\mathcal{Z}_k = \{z \in {\mathcal Z} : x \in \mathcal{X}_k, y \in \mathcal{Y}, p(z|x,y)>0\}, k=1,2$, their intersection $\mathcal{Z}_1 \cap \mathcal{Z}_2$ is empty.
\item Suppose $(p_{XY},p_{Z|XY})$ is such that the characteristic bipartite graph of the 
induced distribution $p_{YZ}$ is connected.
Then any secure protocol $\Pi_1(p_{XY},p_{Z|XY})$ satisfies $I(M_{23};X,Y,Z)=0$.

The characteristic bipartite graph of $p_{YZ}$ is connected if 
$p_{XY}$ has full support and $p_{Z|XY}$ satisfies the following condition:\\[0.15cm]
{\bf Condition 2.} There is no non-trivial partition $\mathcal{Y} = \mathcal{Y}_1 \cup \mathcal{Y}_2$ such that if $\mathcal{Z}_k = \{z\in{\mathcal Z} : x \in \mathcal{X}, y \in \mathcal{Y}_k, p(z|x,y)>0\}, k=1,2$, their intersection $\mathcal{Z}_1 \cap \mathcal{Z}_2$ is empty.
\end{enumerate}
\end{lem}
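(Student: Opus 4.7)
The plan is to derive each of the three equalities from \Lemmaref{XYZ_M123} by showing that under the stated hypotheses its upper bounds collapse to zero. Recall that $RI(U;V)=I(U;V)-H(U\sqcap V)$, so the right-hand side of every bound in \Lemmaref{XYZ_M123} is simply a common-information term $H(U\sqcap V)$. As noted in \Sectionref{prelims}, when the characteristic bipartite graph of $p_{UV}$ is connected, the common part $U\sqcap V$ takes only one value, so $H(U\sqcap V)=0$.

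For part~(1) I would chain the inequalities $0 \le I(M_{12};X,Y,Z) \le I(X;Y)-RI(X;Y) = H(X\sqcap Y)$, where the left inequality is non-negativity of mutual information and the right one is \eqref{eq:ps_M12_XYZ}. Connectedness of the $p_{XY}$-graph makes the right end zero and forces $I(M_{12};X,Y,Z)=0$. The first (information-theoretic) claims of parts~(2) and~(3) are proved in exactly the same way, replacing $p_{XY}$ by $p_{XZ}$ and $p_{YZ}$ and using \eqref{eq:ps_M31_XYZ} and \eqref{eq:ps_M23_XYZ} respectively.

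The only remaining step is the purely combinatorial one: verifying that Condition~1 together with full support of $p_{XY}$ implies connectedness of the characteristic bipartite graph of $p_{XZ}$ (and symmetrically for Condition~2 and $p_{YZ}$). I would argue by contrapositive. Suppose the graph of $p_{XZ}$ is disconnected and let $C$ be one of its connected components; set $\mathcal{X}_1 := C \cap \mathcal{X}$ and $\mathcal{X}_2 := \mathcal{X}\setminus \mathcal{X}_1$. Because $p_{XY}$ has full support, $p_{XZ}(x,z)>0$ iff $p_{Z|XY}(z|x,y)>0$ for some $y$, so the neighborhood of $x$ in the bipartite graph is exactly $\{z:\exists y,\ p_{Z|XY}(z|x,y)>0\}$; consequently the sets $\mathcal{Z}_k$ in Condition~1 are precisely the unions of these neighborhoods over $x\in\mathcal{X}_k$. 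Full support also ensures that every vertex on either side has at least one incident edge (no isolated $x$ or $z$), so both $\mathcal{X}_1$ and $\mathcal{X}_2$ are nonempty. Since no edges cross between components, $\mathcal{Z}_1$ and $\mathcal{Z}_2$ must lie in disjoint halves of the vertex partition induced by $C$ and $(\mathcal{X}\cup\mathcal{Z})\setminus C$, hence $\mathcal{Z}_1\cap\mathcal{Z}_2=\varnothing$, contradicting Condition~1. The argument for Condition~2 and $p_{YZ}$ is identical with the roles of $X$ and $Y$ exchanged.

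There is no substantial obstacle: the information-theoretic content is entirely inherited from \Lemmaref{XYZ_M123} and the fact that connectedness of the characteristic bipartite graph kills the G\'acs--K\"orner common information. The only delicate point is a little bookkeeping to make sure, when translating Condition~1 into graph connectedness, that the full-support hypothesis rules out vertices with empty neighborhoods (which could otherwise yield a trivial partition); this is straightforward.
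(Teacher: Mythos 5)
Your proposal is correct and follows exactly the route the paper takes: the lemma is presented there as an immediate consequence of \Lemmaref{XYZ_M123} together with the observation (from the preliminaries) that a connected characteristic bipartite graph forces $H(U\sqcap V)=0$, i.e.\ $I(U;V)=RI(U;V)$, so each upper bound collapses to zero. The combinatorial verification that Condition~1 (resp.\ Condition~2) plus full support of $p_{XY}$ yields connectedness of the graph of $p_{XZ}$ (resp.\ $p_{YZ}$) is left implicit in the paper, and your contrapositive argument supplies it correctly.
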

In the context of 1) above, we point out that $p_{XY}$ having a connected characteristic bipartite graph is a weaker 
condition than $p_{XY}$ having full support.
}

\subsubsection{Distribution Switching}
We will argue that even if the protocol is allowed to depend on the input distribution (as we do here), privacy requirements will require that the lower bounds derived for when the distributions of the inputs are changed continue to hold for the original setting. The main idea can be summarized as follows: Any secure protocol $\Pi(p_{XY},p_{Z|XY})$, where distribution $p_{XY}$ has full support, continues to be a secure protocol even if we switch the input distribution to a different one $p_{X'Y'}$. This follows, as we show below, directly from examining the (correctness and privacy) conditions required for a protocol to be secure.
\begin{itemize}
\item {\bf Correctness:} Note that we only change the input distribution, but the function being computed remains unchanged, i.e., $p_{Z'|X'Y'}(z|x,y)=p_{Z|XY}(z|x,y)$, for every $(x,y,z)\in\mathcal{X}\times\mathcal{Y}\times\mathcal{Z}$. The correctness condition requires that with the new input distribution, Charlie's output $Z'$ should be distributed according to $p_{Z'|X'=x,Y'=y}$, where $x$ and $y$ are inputs of Alice and Bob respectively, which, as we argued before is equal to $p_{Z|X=x,Y=y}$.
\item {\bf Privacy:} We have to show that privacy conditions against Alice, Bob, and Charlie remain intact if we change the original input distribution $p_{XY}$ with a different one $p_{X'Y'}$. In our secure protocol model, once Alice and Bob are given inputs $X=x$ and $Y=y$, respectively, the protocol produces $(m_{12},m_{23},m_{31},z)$ according to the conditional distribution $p_{M_{12}M_{23}M_{31}Z|XY}(m_{12},m_{23},m_{31},z|x,y)$. Note that this conditional distribution does not depend on $p_{XY}$. So, if we change the input distribution $p_{XY}$ to $p_{X'Y'}$, the conditional distribution $p_{M'_{12}M'_{23}M'_{31}Z'|X'Y'}(m_{12},m_{23},m_{31},z|x,y)$ does not change. More precisely, the following holds for every distribution $p_{X'Y'}$ and $(x,y)\in\mathcal{X}\times\mathcal{Y}$ such that $p_{X'Y'}(x,y)>0$.
\begin{align}
& p_{M'_{12}M'_{23}M'_{31}Z'|X'Y'}(m_{12},m_{23},m_{31},z|x,y) \nonumber \\
&\qquad = p_{M_{12}M_{23}M_{31}Z|XY}(m_{12},m_{23},m_{31},z|x,y).\label{eq:dist_switch}
\end{align}
We only show the result for Alice, that is, we will prove that $I(M_1;(Y,Z)|X)=0$ $\implies$ $I(M'_1;(Y',Z')|X')=0$, where $M'_1=(M'_{12},M'_{31})$ is generated by the original secure protocol when the inputs to Alice and Bob are distributed according to $p_{X'Y'}$. From \eqref{eq:dist_switch}, we get
\begin{align*}
& p_{M'_1Z'|X'Y'}(m_1,z|x,y) = p_{M_1Z|XY}(m_1,z|x,y) \\
\Rightarrow \ & p_{Z'|X'Y'}(z|x,y)p_{M'_1|X'Y'Z'}(m_1|x,y,z) \\
&\qquad \qquad = p_{Z|XY}(z|x,y)p_{M_1|XYZ}(m_1|x,y,z) \\
\stackrel{\text{(a)}}{\Rightarrow} \ & p_{M'_1|X'Y'Z'}(m_1|x,y,z) = p_{M_1|XYZ}(m_1|x,y,z) \\
\stackrel{\text{(b)}}{\Rightarrow} \ & p_{M'_1|X'Y'Z'}(m_1|x,y,z) = p_{M_1|X}(m_1|x),
\end{align*}
where (a) holds because $p_{Z'|X'Y'}(z|x,y) = p_{Z|XY}(z|x,y)$, and these are non-zero because the protocol produces $z$ when Alice and Bob are given their respective inputs $x$ and $y$; (b) follows from the privacy against Alice, i.e., $I(M_1;(Y,Z)|X)=0$. Note that the last equality holds true for any distribution $p_{X'Y'}$, and it implies that $M'_1-X'-(Y',Z')$, i.e., $I(M'_1;(Y',Z')|X')=0$.\\

Privacy against Bob and Charlie are similarly proved.
\end{itemize}

\subsubsection{An Information Inequality for Protocols}
We exploit the fact that, in a protocol, transcripts are generated by the users interactively rather than by an omniscient dealer. Towards this, we derive an information inequality relating the transcripts on different links in general 3-user protocols, in which users do not share any common or correlated randomness or correlated inputs at the beginning of the protocol. Note that our model for protocols does indeed satisfy these conditions when the inputs are independent of each other.
\begin{lem}
\label{lem:infoineq}
In any 3-user protocol (not necessarily secure), if the inputs to the users are
independent of each other, then, for $\{\alpha,\beta,\gamma\}=\{1,2,3\}$,
\[ I(M_{\gamma\alpha};M_{\beta\gamma}) \geq
I(M_{\gamma\alpha};M_{\beta\gamma}|M_{\alpha\beta}).\]
\end{lem}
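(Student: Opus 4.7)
The inequality is equivalent to nonnegativity of the symmetric three-way interaction information $I(M_{\alpha\beta}; M_{\beta\gamma}; M_{\gamma\alpha}) := I(M_{\beta\gamma}; M_{\gamma\alpha}) - I(M_{\beta\gamma}; M_{\gamma\alpha} \mid M_{\alpha\beta})$, and my plan is to prove this by induction on the total number of messages exchanged in the protocol, supported by one auxiliary Markov lemma that captures the independence structure of interactive protocols with independent inputs and randomness.

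The auxiliary lemma (a \emph{protocol Markov property}) is the following: at every stage of the protocol, for every user $i\in\{\alpha,\beta,\gamma\}$, the pair $(X_i, R_i)$ consisting of $i$'s input and private randomness is conditionally independent of everything else --- namely, the other two users' inputs and private randomness together with the current transcript on the link opposite $i$ --- given the pair of transcripts on the two links incident to $i$. This is proved by its own induction on the number of messages. The base case reduces to the given mutual independence of the three private tuples. For the inductive step, the next message $m$ is a deterministic function of its sender's current view (any coin tosses used to generate $m$ and any randomness in choosing the prefix-free codebook at that round are absorbed into $R_\cdot$ of the sender). Hence whether $m$ is appended to a link incident to $i$ or to the opposite link, and for either choice of sender, $m$ is a function of variables already appearing in the Markov chain for $i$; a one-line computation of the form $p(x,y\mid w,m)=p(x|w)\,p(y|w)\,[m=g(w,y)]/p(m|w)$ then shows that the Markov structure is preserved when one augments either the conditioning side by $m$ (when $m$ lies on an incident link) or the ``everything-else'' side by $m$ (when $m$ lies on the opposite link).

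With the protocol Markov property in hand the main induction is short. With no messages sent, all three transcripts are empty and the interaction information is zero. When a new message $m$ is appended to, say, transcript $A\in\{M_{\alpha\beta},M_{\beta\gamma},M_{\gamma\alpha}\}$, the chain rule for interaction information
\begin{equation*}
I(A,m;B;C) = I(A;B;C) + I(m;B;C\mid A)
\end{equation*}
reduces the task to showing the incremental term $I(m;B;C\mid A)\ge 0$. Expanding this in the two available symmetric ways,
\begin{equation*}
I(m;B;C\mid A) = I(m;B\mid A) - I(m;B\mid A,C) = I(m;C\mid A) - I(m;C\mid A,B),
\end{equation*}
one of the subtracted terms can be forced to vanish by the protocol Markov property applied to the sender of $m$: given the pair of transcripts on the sender's two incident links, $m$ is a function of the sender's private state, which the Markov lemma asserts is independent of the transcript on the link opposite the sender. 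The unsubtracted term that remains is an ordinary conditional mutual information and is therefore $\ge 0$. Iterating over the six possibilities (three links times two possible senders) by cyclic symmetry closes the induction.

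The main obstacle is bookkeeping: for each (link, sender) possibility, correctly matching the appropriate instance of the protocol Markov property to the appropriate decomposition of $I(m;B;C\mid A)$ so that the term that needs to vanish actually does vanish. A mild subtlety is that the protocol model permits the prefix-free codebook at each round to be chosen randomly as a function of past messages; this extra randomness can be folded into the sender's private $R_\cdot$ so that the ``$m$ is a function of the sender's view'' statement used by the Markov lemma continues to hold in the required sense.
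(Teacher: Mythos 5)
Your proof is correct and takes essentially the same route as the paper's: induction on the number of messages, with the per-message increment being a conditional interaction information that is nonnegative because the new message is conditionally independent of the transcript on the link opposite its sender given the sender's two incident transcripts (the paper phrases the induction in terms of the equivalent symmetric entropy expression rather than the interaction-information chain rule). The only substantive difference is that you establish this key Markov chain rigorously via your auxiliary ``protocol Markov property,'' whereas the paper simply asserts it from the absence of shared randomness.
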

This inequality provides us with a means to exploit the protocol structure
behind transcripts and helps us to lower-bound
$I(M_{\gamma\alpha};M_{\beta\gamma})$ in terms of the distribution
$p_{XYZ}$. This is achieved in a few easy steps:
\Lemmaref{infoineq}, combined with \eqref{eq:residual_info}, lets us
derive that $I(M_{\gamma\alpha};M_{\beta\gamma}) \ge \RI(M_{\gamma\alpha},M_{\alpha\beta};M_{\beta\gamma},M_{\alpha\beta})$.
Further combined with \Lemmaref{general_cutset}, the right hand side can be
equated with the residual information of the views of the users
$\alpha$ and $\beta$. Finally, by the secure data processing inequality
(\Lemmaref{monotone}), this can be lower-bounded in terms of the
residual information of (one of) the inputs and the output (if $\{\alpha,\beta\}\neq \{1,2\}$) or both the inputs (if $\{\alpha,\beta\}=\{1,2\}$).

\begin{proof}[Proof of \Lemmaref{infoineq}]
For any choice of distinct $\alpha,\beta,\gamma$ in $\{1,2,3\}$, the inequality of \Lemmaref{infoineq} is equivalent to the following inequality:
\begin{align}
&H(M_{12})+H(M_{23})+H(M_{31}) - H(M_{23},M_{31}) \nonumber \\
& - H(M_{31},M_{12}) - H(M_{12},M_{23}) + H(M_{12},M_{23},M_{31}) \geq 0. \label{eq:equi_infoineq}
\end{align}
To prove \eqref{eq:equi_infoineq} we apply induction on the number of rounds of the protocol.\\
{\em Base case:} At the beginning of the protocol, all the transcripts $M_{12}, M_{23}, M_{31}$ are empty. So the inequality \eqref{eq:equi_infoineq} is trivially true. \\
{\em Inductive step:} Assume that the inequality \eqref{eq:equi_infoineq} is true at the end of round $t$, and we prove it for $t+1$. Notice that the new message in round $t+1$ can be one of the six different messages -- from user-$i$ to user-$j$, and vice versa; but since \eqref{eq:equi_infoineq} is symmetric in all the transcripts, it is enough to prove the inequality when the new message sent in round $t+1$ is, say, from user-1 to user-2.

For simplicity, let us denote the transcript on 1-2 link at the end of round $t$ by $M_{12}$ itself, the new message sent by user-1 to user-2 in round $t+1$ by $\Delta M_{\vec{12}}$, and the transcript  on 1-2 link at the end of round $t+1$ by $\widetilde{M}_{12}$. Notation for the remaining transcripts are defined similarly. Hence, $\widetilde{M}_{12}=(M_{12},\Delta M_{\vec{12}})$, $\widetilde{M}_{23}=M_{23}$, $\widetilde{M}_{31}=M_{31}$. Since the users do not share any common or correlated randomness, the new message $\Delta M_{\vec{12}}$ is conditionally independent of the transcript $M_{23}$ between the other two users, conditioned on transcripts $(M_{12},M_{31})$ on both the links to which user-1 is associated with. So we have the following Markov chain: 
\begin{align}
\Delta M_{\vec{12}} - (M_{12},M_{31}) - M_{23} \label{eq:mc_infoineq}
\end{align}
Now we can show that the inequality in \eqref{eq:equi_infoineq} holds at the end of round $t+1$ as follows:
\begin{align}
&H(\widetilde{M}_{12}) + H(\widetilde{M}_{23}) + H(\widetilde{M}_{31}) - H(\widetilde{M}_{23},\widetilde{M}_{31}) \nonumber \\
&\quad - H(\widetilde{M}_{31},\widetilde{M}_{12}) - H(\widetilde{M}_{12},\widetilde{M}_{23}) + H(\widetilde{M}_{12},\widetilde{M}_{23},\widetilde{M}_{31}) \nonumber \\
&= H(M_{12},\Delta M_{\vec{12}})+H(M_{23})+H(M_{31}) - H(M_{23},M_{31}) \nonumber \\
&\qquad - H(M_{31},M_{12},\Delta M_{\vec{12}}) - H(M_{12},\Delta M_{\vec{12}},M_{23}) \nonumber \\
&\qquad \qquad + H(M_{12},\Delta M_{\vec{12}},M_{23},M_{31}) \nonumber \\
&\geq H(\Delta M_{\vec{12}}|M_{12}) - H(\Delta M_{\vec{12}}|M_{12},M_{31}) \nonumber \\
&\ - H(\Delta M_{\vec{12}}|M_{12},M_{23}) + H(\Delta M_{\vec{12}}|M_{12},M_{23},M_{31}) \label{eq:using_induction-hypo} \\
&= I(\Delta M_{\vec{12}} ; M_{23}|M_{12}) - I(\Delta M_{\vec{12}};M_{23}|M_{12},M_{31}) \nonumber \\
&\geq 0 \nonumber
\end{align}
In \eqref{eq:using_induction-hypo} we use the induction hypothesis, and last inequality follows from \eqref{eq:mc_infoineq} and the fact that the conditional mutual information is always non-negative.
\end{proof}

\subsubsection{Main Lower Bounds}
\begin{thm} \label{thm:lowerbound}
For a secure computation problem $(p_{XY},p_{Z|XY})$, where $p_{XY}$ has full support and $p_{Z|XY}$ is in normal form, if $(R_{12},R_{23},R_{31},\rho)\in\R^{\PS}$, then, 
\begin{align}
R_{31} &\geq \max \left\{ \begin{array}{l} \displaystyle \max_{p_{X}p_{Y'}} RI(Y';Z') \\ \quad + \displaystyle \max_{p_{XY''}} RI(X;Y'') + H(X,Z''|Y''),\\ 
\displaystyle \max_{p_{XY'}} RI(Y';Z') + H(X,Z'|Y')
\end{array}\right\}, \label{eq:M31_NoCond} \\
R_{23} &\geq \max \left\{ \begin{array}{l} \displaystyle \max_{p_{X'}p_{Y}} RI(X';Z') \\ \quad + \displaystyle \max_{p_{X''Y}} RI(X'';Y) + H(Y,Z''|X''),\\ 
\displaystyle \max_{p_{X'Y}} RI(X';Z') + H(Y,Z'|X')
\end{array}\right\}, \label{eq:M23_NoCond} \\
R_{12} &\geq \max \left\{ \begin{array}{l} \displaystyle \max_{p_{X'}p_{Y'}} RI(Y';Z') \\ \quad + \displaystyle \max_{p_{X'Y''}} RI(X';Z'') + H(X',Y''|Z''),\\ 
\displaystyle \max_{p_{X'}p_{Y'}} RI(X';Z') \\ \quad + \displaystyle \max_{p_{X''Y'}} RI(Y';Z'') + H(X'',Y'|Z'')
\end{array}\right\}, \label{eq:M12_bound}
\end{align}
where the maxima are over distributions having full support. The terms in the right hand side of \eqref{eq:M31_NoCond} are evaluated using the distribution $p_X$ of the data $X$ of Alice. The terms in~\eqref{eq:M31_NoCond}, for instance, are evaluated using
\begin{align*}
\text{$1^{\text{st}}$ bound: } p_{X,Y',Z'}(x,y,z)&=p_{X}(x)p_{Y'}(y)p_{Z|X,Y}(z|x,y),\\
p_{X,Y'',Z''}(x,y,z)&=p_{XY''}(x,y)p_{Z|X,Y}(z|x,y), \\
\text{$2^{\text{nd}}$ bound: } p_{X,Y',Z'}(x,y,z)&=p_{XY'}(x,y)p_{Z|X,Y}(z|x,y).
\end{align*}
Similarly, the terms in \eqref{eq:M23_NoCond} are evaluated using the distribution $p_Y$ of the data $Y$ of Bob. The lower bound in \eqref{eq:M12_bound} does not depend on the distributions $p_X$ and $p_Y$ of the data. The terms in the first bound of~\eqref{eq:M12_bound}, for instance, are evaluated using
\begin{align*}
\text{$1^{\text{st}}$ bound: }p_{X',Y',Z'}(x,y,z)&=p_{X'}(x)p_{Y'}(y)p_{Z|X,Y}(z|x,y),\\
p_{X',Y'',Z''}(x,y,z)&=p_{X'Y''}(x,y)p_{Z|X,Y}(z|x,y).
\end{align*}
\end{thm}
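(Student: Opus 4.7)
My plan is to bound $H(M_{31})$, which, together with the prefix-free code length inequality of the Preliminaries and the direct-sum observation \Remarkref{direct-sum_result}, yields the bound on $R_{31}$ in \eqref{eq:M31_NoCond}. The bounds on $R_{23}$ and $R_{12}$ then follow by the obvious symmetries (Charlie's asymmetric role -- he has no input -- explaining why \eqref{eq:M12_bound} contains two rows rather than one). By \Remarkref{direct-sum_result} I may work at $n=1$ throughout.

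The starting point is the chain-rule decomposition
\[
H(M_{31}) = H(M_{31} \mid M_{12}, M_{23}) + I(M_{31}; M_{12} \mid M_{23}) + I(M_{31}; M_{23}).
\]
The first term is bounded using \Lemmaref{general_cutset} (which forces $X$ to be a function of $(M_{12},M_{31})$ and $Z$ to be a function of $(M_{23},M_{31})$) together with privacy against Bob (which gives $I(M_{12},M_{23};X,Z\mid Y)=0$), yielding $H(M_{31}\mid M_{12},M_{23}) \ge H(X,Z\mid Y)$. The second term is bounded by taking $Q=M_{23}$ in the variational definition \eqref{eq:residual_info} and applying \Lemmaref{monotone} to the views of users 2 and 3 (whose hypotheses are privacy against Bob and against Charlie), yielding $I(M_{31};M_{12}\mid M_{23}) \ge RI(Y;Z)$. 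The simpler (second) bound in \eqref{eq:M31_NoCond} now results from retaining only these two terms, and invoking distribution switching: because the same protocol is secure under any full-support $p_{XY'}$ (with the input-conditional transcript law unchanged), the bound reads $H(X,Z'\mid Y') + RI(Y';Z')$ under $p_{XY'}$, and taking the supremum gives the claim.

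The stronger (first) bound in \eqref{eq:M31_NoCond} uses the third, still-unused term $I(M_{31};M_{23})$ from the decomposition above, optimized under a \emph{different} distribution than the first two terms. For this, I switch to a product distribution $p_X p_{Y'}$ with $X,Y'$ independent. Under this choice, the characteristic bipartite graph of $p_Xp_{Y'}$ is connected, so \Lemmaref{XYZ_inde_M123} gives $I(M_{12};X,Y',Z')=0$ and $M_{12}$ is decoupled from the inputs; then \Lemmaref{infoineq} (with $\gamma=3,\alpha=1,\beta=2$, giving $I(M_{31};M_{23}) \ge I(M_{31};M_{23}\mid M_{12})$) combined with \Lemmaref{monotone} applied once more to the views of users 2 and 3 yields $I(M_{31};M_{23}) \ge RI(Y';Z')$. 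Since each of the three summands depends only on the (distribution-invariant) conditional law $p_{M_{12}M_{23}M_{31}Z\mid XY}$ together with a chosen input marginal, the marginals for the first two summands (under $p_{XY''}$) and the third summand (under $p_Xp_{Y'}$) can be chosen independently, producing the sum-of-maxes structure in the first bound of \eqref{eq:M31_NoCond}.

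The main obstacle is exactly this last point: justifying that the two summands of the first bound may legitimately be maximized under different full-support input distributions at the same time. This is the crux of distribution switching -- every inequality in the derivation (cut-set, privacy Markov chains, \Lemmaref{monotone}, \Lemmaref{XYZ_inde_M123}, \eqref{eq:residual_info}) is a statement about the protocol's conditional transcript law, and hence carries over verbatim when the input marginal is switched. A secondary technical care is the precise placement of \Lemmaref{infoineq}: it must condition on $M_{12}$ (not on $M_{23}$ or $M_{31}$) so that under the independent input distribution the decoupling of $M_{12}$ from the inputs is what liberates $I(M_{31};M_{23})$ for separate optimization via \Lemmaref{monotone}. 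The rest of the argument, including the symmetric proofs of the bounds for $R_{23}$ and $R_{12}$ (with either Alice or Bob playing the ``pivot'' role in the two rows of \eqref{eq:M12_bound}), is routine algebra.
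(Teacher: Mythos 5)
Your overall architecture (chain rule, the na\"ive bound on $H(M_{31}|M_{12},M_{23})$, distribution switching, \Lemmaref{infoineq}, \Lemmaref{monotone}) matches the paper's, and your derivation of the second row of \eqref{eq:M31_NoCond} is correct. But the derivation of the first (stronger) row contains a genuine error in the pairing of information terms with pairs of views. You expand $H(M_{31}) = I(M_{31};M_{23}) + I(M_{31};M_{12}|M_{23}) + H(M_{31}|M_{12},M_{23})$ and try to extract $RI(Y';Z')$ from the unconditional term via $I(M_{31};M_{23}) \ge I(M_{31};M_{23}|M_{12})$ followed by ``\Lemmaref{monotone} applied to the views of users 2 and 3.'' That last step does not go through: $I(M_{31};M_{23}|M_{12}) = I(M_{12},M_{31};M_{12},M_{23}|M_{12})$, which by \Lemmaref{general_cutset} and \eqref{eq:residual_info} (with $Q=M_{12}$, the \emph{common} part of the views of users 1 and 2) is lower-bounded by $RI(X;Y)$ --- not by $RI(Y;Z)$. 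To reach $RI(Y;Z)$ through the views of users 2 and 3 one must condition on $M_{23}$ (their common link), i.e., one needs the quantity $I(M_{31};M_{12}|M_{23})$, which you have already spent on the other summand. Carried out correctly, your decomposition yields $RI(X;Y') + RI(Y'';Z'') + H(X,Z''|Y'')$, and since you evaluate the first summand under a product distribution $p_Xp_{Y'}$ it vanishes, so your ``stronger'' bound collapses to the weaker second row.

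The fix is to use the other ordering of the chain rule, $H(M_{31}) = I(M_{31};M_{12}) + I(M_{31};M_{23}|M_{12}) + H(M_{31}|M_{12},M_{23})$, so that the term isolated for separate optimization is $I(M_{31};M_{12})$. This also repairs a second defect in your argument: the justification for evaluating the isolated term under a different input distribution. $(M_{12},M_{31})$ is Alice's transcript, whose joint law depends only on $p_X$ by privacy against Alice, so $I(M_{31};M_{12})$ is unchanged when $p_{XY''}$ is switched to $p_Xp_{Y'}$ with the same marginal $p_X$; in contrast, $(M_{23},M_{31})$ is Charlie's view, whose law depends on $p_Z$, which is \emph{not} preserved when only the marginal of $X$ is held fixed --- so $I(M_{31};M_{23})$ cannot legitimately be pulled out and optimized over a different $p_{Y'}$. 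With the corrected decomposition one gets $I(M_{31};M_{12}) \ge I(M_{31};M_{12}|M_{23}) \ge RI(Y';Z')$ under $p_Xp_{Y'}$ (by \Lemmaref{infoineq} and then $Q=M_{23}$ with the views of users 2 and 3), while $I(M_{31};M_{23}|M_{12}) \ge RI(X;Y'')$ and $H(M_{31}|M_{12},M_{23}) \ge H(X,Z''|Y'')$ under $p_{XY''}$, which is exactly the first row of \eqref{eq:M31_NoCond}. (Your invocation of \Lemmaref{XYZ_inde_M123} is unnecessary for the $M_{31}$ bound --- privacy against Alice alone suffices; that lemma is needed for the $M_{12}$ bound, where both marginals must be switched.)
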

\begin{proof}
We again prove this only for $n=1$. To see the general case of, say, $n=m$
we simply invoke the $n=1$ result with inputs $X^m,Y^m$, and 
$p_{Z^m|X^m,Y^m}$ defined as the $m$-wise product of $p_{Z|X,Y}$. Making
memoryless choices for the primed random variables in the bound, the result
follows from \Lemmaref{RI_tensorizes}.

Suppose we have a secure protocol for computing $p_{Z|XY}$ in the normal form under $p_{XY}$ which has full support. Consider $H(M_{31})$,
\begin{align*}
H(M_{31}) &= I(M_{31};M_{12}) \\
&\qquad + I(M_{31};M_{23}|M_{12}) + H(M_{31}|M_{12},M_{23}).
\end{align*}
By privacy against Alice, conditioned on $X$, $M_{31}$ is independent of $Y$. 
So, by distribution switching, keeping the marginal distribution of $X$ same, we may switch the distribution $p_{XY}$ to, say, $p_{XY''}$, which also has full support, and the resulting $M_{31}$ has the same distribution as under $p_{XY}$, i.e.,
\begin{align*}
H(M_{31}) &= \max_{p_{XY''}} I(M_{31};M_{12}) \\
&\qquad + I(M_{31};M_{23}|M_{12}) + H(M_{31}|M_{12},M_{23}).
\end{align*}
Under this switched distribution, let us consider the first term $I(M_{31};M_{12})$. 
Let us notice that, by privacy against Alice, $(M_{31},M_{12})$ must again be independent of $Y''$. 
Hence, even if we switch the distribution of $Y$ to, say, $p_{Y'}$ (keeping the marginal of $X$ same), the joint distribution of $(M_{31},M_{12})$ must remain unchanged. 
Hence, we have that $I(M_{31};M_{12})$ under the distribution $p_{XY''}$ is the same as that under $p_{XY'}$. Therefore,
\begin{align*}
H(M_{31}) &= \max_{p_{XY'}} I(M_{31};M_{12}) \\
&\quad + \max_{p_{XY''}} I(M_{31};M_{23}|M_{12}) + H(M_{31}|M_{12},M_{23}).
\end{align*}
If we take the distribution under the first $\max$ to be the product distribution, i.e., $p_{XY'}=p_Xp_{Y'}$, then we can apply \Lemmaref{infoineq} and get the following:
\begin{align*}
H(M_{31}) &\geq \max_{p_Xp_{Y'}} I(M_{31};M_{12}|M_{23}) \\
&\quad + \max_{p_{XY''}} I(M_{31};M_{23}|M_{12}) + H(M_{31}|M_{12},M_{23}).
\end{align*}
We can bound each of the three terms separately as follows: 
$(i)$ Using the definition of residual information, \Lemmaref{general_cutset}, and \Lemmaref{monotone}, we can show that $I(M_{31};M_{12}|M_{23}) \geq RI(Y;Z)$ and $I(M_{31};M_{23}|M_{12}) \geq RI(X;Y)$; and 
$(ii)$ using \Lemmaref{general_cutset} and privacy against Bob, we get $H(M_{31}|M_{12},M_{23}) \geq H(X,Z|Y)$. 
This gives the first bound on $H(M_{31})$ (first row of \eqref{eq:M31_NoCond}).
\begin{align*}
H(M_{31}) &\geq \max_{p_Xp_{Y'}} RI(Y';Z') \\
&\quad + \max_{p_{XY''}} RI(X;Y'') + H(X,Z''|Y'').
\end{align*}
For the second bound on $H(M_{31})$ (second row of \eqref{eq:M31_NoCond}), we first expand $H(M_{31})$ in a different way as follows:
\begin{align*}
H(M_{31})&=I(M_{31};M_{23}) \\
&\qquad + I(M_{31};M_{12}|M_{23}) + H(M_{31}|M_{12},M_{23}).
\end{align*}
Now drop the first term $I(M_{31},M_{23})$ and proceed as above.

The bounds on $H(M_{23})$ follow in an identical fashion. 
To see the bounds on $H(M_{12})$, let us recall that $M_{12}$ is independent of $X,Y$ (by \Lemmaref{XYZ_inde_M123}),
 and hence we may switch the distributions of both $X$ and $Y$. 
Furthermore, let us note that we may write $H(M_{12})$ in two different ways.
\begin{align}
H(M_{12}) &= I(M_{12};M_{31}) \nonumber \\
&\quad + I(M_{12};M_{23}|M_{31})+H(M_{12}|M_{23},M_{31})\label{eq:M12intwoways1}\\
H(M_{12}) &= I(M_{12};M_{23}) \nonumber \\
&\quad + I(M_{12};M_{31}|M_{23})+H(M_{12}|M_{23},M_{31}).\label{eq:M12intwoways2}
\end{align}
Using \eqref{eq:M12intwoways1} and proceeding as we did for $H(M_{31})$ leads to the top row of the right hand side of \eqref{eq:M12_bound}, and \eqref{eq:M12intwoways2} leads to the bottom row. 
\end{proof}

When the function satisfies certain additional constraints, we can strengthen the lower bounds on the $H(M_{23})$ and $H(M_{31})$ as shown below.
\begin{thm} \label{thm:conditionallowerbounds}
For a secure computation problem $(p_{XY},p_{Z|XY})$, where $p_{XY}$ has full support and $p_{Z|XY}$ is in normal form, if $(R_{12},R_{23},R_{31},\rho)\in\R^{\PS}$, then, 
\begin{enumerate}
\item Suppose the function $p_{Z|XY}$ satisfies Condition~1 of \Lemmaref{XYZ_inde_M123}, that is, there is no non-trivial partition $\mathcal{X} = \mathcal{X}_1 \cup \mathcal{X}_2$ (i.e., $\mathcal{X}_1 \cap \mathcal{X}_2 = \varnothing$ and neither $\mathcal{X}_1$ nor $\mathcal{X}_2$ is empty), such that if
$\mathcal{Z}_k = \{z \in {\mathcal Z} : x \in \mathcal{X}_k, y \in \mathcal{Y}, p(z|x,y)>0\}, k=1,2$, their intersection $\mathcal{Z}_1 \cap \mathcal{Z}_2$ is empty. Then we have the following strengthening of \eqref{eq:M31_NoCond}.
\begin{align}
R_{31} &\geq \max \left\{ \begin{array}{l} \displaystyle \max_{p_{X'}p_{Y'}} RI(Y';Z') \\ + \displaystyle \max_{p_{X'Y''}} RI(X';Y'') + H(X',Z''|Y''), \\ 
\displaystyle \max_{p_{X'Y'}} RI(Y';Z') + H(X',Z'|Y')
\end{array}\right\} \label{eq:M31_WithCond}
\end{align}
where the maxima are over distributions having full support, and the terms in~\eqref{eq:M31_WithCond}, for instance, are evaluated using
\begin{align*}
&p_{X',Y',Z'}(x,y,z)=p_{X'}(x)p_{Y'}(y)p_{Z|X,Y}(z|x,y),\\
&p_{X',Y'',Z''}(x,y,z)=p_{X'Y''}(x,y)p_{Z|X,Y}(z|x,y), \\
&p_{X',Y',Z'}(x,y,z)=p_{X'Y'}(x,y)p_{Z|X,Y}(z|x,y).
\end{align*}
\item Suppose the function $p_{Z|XY}$ satisfies Condition~2 of \Lemmaref{XYZ_inde_M123}, that is, there is no non-trivial partition $\mathcal{Y} = \mathcal{Y}_1 \cup \mathcal{Y}_2$ such that if $\mathcal{Z}_k = \{z\in{\mathcal Z} : x \in \mathcal{X}, y \in \mathcal{Y}_k, p(z|x,y)>0\}, k=1,2$, their intersection $\mathcal{Z}_1 \cap \mathcal{Z}_2$ is empty. Then we have the following strengthening of \eqref{eq:M23_NoCond}.
\begin{align}
R_{23} &\geq \max \left\{ \begin{array}{l} \displaystyle \max_{p_{X'}p_{Y'}} RI(X';Z') \\ + \displaystyle \max_{p_{X''Y'}} RI(X'';Y') + H(Y',Z''|X''), \\ 
\displaystyle \max_{p_{X'Y'}} RI(X';Z') + H(Y',Z'|X')
\end{array}\right\} \label{eq:M23_WithCond}
\end{align}
where the maxima are over distributions having full
support, and the terms in~\eqref{eq:M23_WithCond}, for instance, are evaluated using
\begin{align*}
&p_{X',Y',Z'}(x,y,z)=p_{X'}(x)p_{Y'}(y)p_{Z|X,Y}(z|x,y),\\
&p_{X'',Y',Z''}(x,y,z)=p_{X''Y'}(x,y)p_{Z|X,Y}(z|x,y), \\
&p_{X',Y',Z'}(x,y,z)=p_{X'Y'}(x,y)p_{Z|X,Y}(z|x,y).
\end{align*}

\end{enumerate}
\end{thm}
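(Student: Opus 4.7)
The plan is to mimic the proof of \Theoremref{lowerbound}, but leverage the stronger independence guaranteed by Condition~1 (resp.\ Condition~2) to widen the class of input distributions available for distribution switching. Specifically, under Condition~1, \Lemmaref{XYZ_inde_M123} item~2 yields $I(M_{31};X,Y,Z)=0$, so $M_{31}$ is independent of $(X,Y,Z)$; equivalently $p_{M_{31}|X=x}$ is constant in $x$, which in turn means that the marginal $p_{M_{31}}$ (and hence $H(M_{31})$) is invariant under \emph{any} full-support switch $p_{XY} \to p_{X'Y'}$, not merely under switches that preserve $p_X$ (which is what was available in the proof of \Theoremref{lowerbound}).

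For the first bound of \eqref{eq:M31_WithCond}, I would start with the same decomposition $H(M_{31}) = I(M_{31};M_{12}) + I(M_{31};M_{23}|M_{12}) + H(M_{31}|M_{12},M_{23})$. Privacy against Alice yields the Markov chain $(M_{12},M_{31})$--$X$--$Y$, so $p_{M_{12},M_{31}|X}$ is independent of $p_{Y|X}$; hence both $I(M_{31};M_{12})$ and $H(M_{31}|M_{12})$ depend on $p_{XY}$ only through $p_X$. Fixing any full-support $p_{X'}$, I would independently switch $p_{Y|X'}$ to a product $p_{X'}p_{Y'}$ for the first term, apply \Lemmaref{infoineq} to obtain $I(M_{31};M_{12}) \geq I(M_{31};M_{12}|M_{23})$, and bound the latter by $RI(Y';Z')$ using \eqref{eq:residual_info} (with $Q=M_{23}$) together with \Lemmaref{general_cutset} and \Lemmaref{monotone}. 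Under $p_{X'Y''}$ for the remaining two terms, analogous applications give $I(M_{31};M_{23}|M_{12}) \geq RI(X';Y'')$ and $H(M_{31}|M_{12},M_{23}) \geq H(X',Z''|Y'')$ (the last using privacy against Bob and \Lemmaref{general_cutset}). Maximizing over $p_{Y'}$, $p_{Y''|X'}$, and finally $p_{X'}$ (all with full support) yields the first bound.

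For the second bound of \eqref{eq:M31_WithCond}, I would use the alternate decomposition $H(M_{31}) = I(M_{31};M_{23}) + I(M_{31};M_{12}|M_{23}) + H(M_{31}|M_{12},M_{23})$ and drop the nonnegative $I(M_{31};M_{23})$. Under any full-support $p_{X'Y'}$ (not required to be product) $H(M_{31})$ remains invariant by Condition~1, and the remaining two terms are lower bounded as before to $RI(Y';Z') + H(X',Z'|Y')$; note this bound needs no product assumption since it does not invoke \Lemmaref{infoineq}. A final max over $p_{X'Y'}$ finishes. The proof of \eqref{eq:M23_WithCond} under Condition~2 is structurally identical with the roles of Alice/Bob and $X/Y$ swapped.

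The main technical subtlety is bookkeeping which quantities are invariant under which kind of switch: Condition~1 upgrades the marginal $p_{M_{31}}$ to be invariant under any full-support $p_{X'Y'}$, but the joint $p_{M_{12},M_{31}}$ is \emph{not} automatically so---even though \Lemmaref{XYZ_inde_M123} tells us each of $M_{12}$ and $M_{31}$ is marginally independent of $(X,Y,Z)$, they need not be jointly independent of $X$. Consequently, in the first bound the two displayed maxima must be interpreted as jointly optimizing $p_{Y'}$ and $p_{Y''|X'}$ over a shared full-support $p_{X'}$ (consistent with the examples listed right after \eqref{eq:M31_WithCond}), rather than over fully independent $X'$-marginals; getting this accounting right is where the argument could most easily go astray.
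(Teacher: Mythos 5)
Your proposal is correct and follows essentially the same route as the paper's proof: \Lemmaref{XYZ_inde_M123} under Condition~1 (resp.\ Condition~2) makes $H(M_{31})$ (resp.\ $H(M_{23})$) invariant under arbitrary full-support switches of $p_{XY}$, after which one repeats the argument of \Theoremref{lowerbound} with this wider class of switches. Your closing observation that the two maxima in the first bound must share the marginal $p_{X'}$ --- because $I(M_{31};M_{12})$ still depends on $p_{X'}$ through privacy against Alice even though $M_{31}$ alone is fully independent of the inputs --- is exactly the right reading of the theorem's notation.
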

\begin{proof} We again prove this only for the case of $n=1$. The general
case follows for the same reasons as in the proof of
\Theoremref{lowerbound}.
By \Lemmaref{XYZ_inde_M123}, $M_{31}$ is independent of $p_{XY}$ under
condition~1. Hence, when condition~1 is satisfied, we may switch the
distribution $p_{XY}$ to $p_{X'Y'}$. Note that this is {\em unlike} what we
did in the proof of the lower bound on $H(M_{31})$ in
\Theoremref{lowerbound} where we switched $p_{XY}$ to $p_{XY'}$ keeping the
marginal distribution of $X$ same. Now proceeding similarly as there leads
us to~\eqref{eq:M31_WithCond}. Similarly, under condition~2, $M_{23}$ is
independent of $X,Y$ which leads to~\eqref{eq:M23_WithCond}.
\end{proof}

Note that in \Theoremref{lowerbound} and
\Theoremref{conditionallowerbounds}, any choice of $p_{X'},p_{Y'},p_{X'Y'}$, $p_{X'Y},p_{XY'}$, $p_{X''Y},p_{XY''}$, $p_{X''Y'},p_{X'Y''}$ (with full support) yields a lower
bound. 
For a given function, while all choices do yield valid lower bounds,
one is often able to obtain the {\em best} of these lower bounds analytically (as in
\Theoremref{ot}, where it is seen to be the best as it matches an upper
bound) or numerically (as in \Theoremref{and}).

To summarize, for any secure computation problem $(p_{XY},p_{Z|XY})$, where $(p_{XY},p_{Z|XY})$
is in normal form, \Theoremref{prelim_lbs} gives lower bounds on entropies of the transcripts 
on all three links. If $p_{XY}$ has full support and $p_{Z|XY}$ is in normal form, then 
\Theoremref{lowerbound} gives improved lower bounds on entropies of the transcripts on all 
three links. In addition, if $p_{Z|XY}$ satisfies condition~1 of \Lemmaref{XYZ_inde_M123}, 
then \eqref{eq:M31_WithCond} gives further improvements on the lower bound for $H(M_{31})$; if $p_{Z|XY}$ 
satisfies condition~2 of \Lemmaref{XYZ_inde_M123}, then \eqref{eq:M23_WithCond} further improves the 
 lower bound for $H(M_{23})$. The fact that these improvements
can be strict will be seen through examples in
Section~\ref{subsec:examples}; see the paragraph following \Theoremref{and} therein.

{
\subsection{Lower Bounds on Randomness}\label{subsec:randomness}
In this section we provide lower bounds on the amount of randomness required in secure 
computation protocols. Although our focus in this paper is to prove communication lower 
bounds, it turns out that we may apply the above lower bounds on communication to derive 
bounds on the amount of randomness required. We show in \Subsectionref{examples} that they 
give tight bounds on randomness required for the specific functions we analyze.
\begin{thm}\label{thm:general_randomness}
For a secure computation problem $(p_{XY},p_{Z|XY})$, where $(p_{XY},p_{Z|XY})$ is in normal form, if $(R_{12},R_{23},R_{31},\rho)\in\R^{\PS}$, then
\begin{align*}
\rho &\geq \max\{RI(X;Y)+RI(X;Z), \\
&\hspace{2.5cm} RI(X;Y)+RI(Y;Z), \\
&\hspace{3.75cm} RI(X;Z)+RI(Y;Z)\} \\
&\quad + H(Y,Z|X)+H(X,Z|Y)+H(X,Y|Z)-H(X,Y).
\end{align*}
\end{thm}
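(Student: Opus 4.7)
The plan is to reduce to $n=1$ via \Remarkref{direct-sum_result}; the bound will extend to $\R^{\PS}$ because $RI$ tensorizes (\Lemmaref{RI_tensorizes}) and the conditional entropy terms $H(X,Y|Z)$, $H(Y,Z|X)$, $H(X,Z|Y)$, $H(X,Y)$ are additive over i.i.d.\ coordinates. So I would work at block length one and rewrite the randomness term as a quantity in the transcripts alone. By \Lemmaref{general_cutset}, each of $X$, $Y$, $Z$ is a function of a pair of transcripts, so $(X,Y,Z)$ is a function of $(M_{12},M_{23},M_{31})$, giving
\[
H(M_{12},M_{23},M_{31},Z \mid X,Y) = H(M_{12},M_{23},M_{31}) - H(X,Y).
\]
It therefore suffices to lower bound $H(M_{12},M_{23},M_{31})$ by the claimed ``$\max$ of pair-sums of $RI$'s'' plus $H(X,Y|Z)+H(Y,Z|X)+H(X,Z|Y)$.

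The key idea is to expand $H(M_{12},M_{23},M_{31})$ via the chain rule in three different orderings---one for each choice of leading transcript---and take the maximum of the three resulting lower bounds. For the ordering starting with $M_{ij}$ (and letting $k$ denote the third index), write
\[
H(M_{12},M_{23},M_{31}) = H(M_{ij}) + H(M_{jk}\mid M_{ij}) + H(M_{ik}\mid M_{ij},M_{jk}),
\]
and lower bound each summand separately using the tools already established in the proof of \Theoremref{prelim_lbs}. The first summand is bounded by the preliminary bound itself, which contributes a $\max$ of two $RI$'s plus an $H(\cdot|\cdot)$ term. The third summand is bounded exactly as in that proof: by \Lemmaref{general_cutset} the relevant inputs/outputs can be inserted into the conditional entropy, and by the privacy of user $j$ (whose two incident links appear in the conditioning) they can then be removed, leaving a conditional entropy of data variables. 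For the middle summand, split $H(M_{jk}\mid M_{ij}) = I(M_{jk};M_{ik}\mid M_{ij}) + H(M_{jk}\mid M_{ij},M_{ik})$; the second piece is again an $H(\cdot|\cdot)$ term as above, while the first is bounded below by a single $RI$ using the same sequence of steps used for the analogous ``single $RI$'' bound in the proof of \Theoremref{prelim_lbs}---apply \Lemmaref{general_cutset} to paste in the relevant inputs/outputs, take $Q=M_{ij}$ in the alternate definition \eqref{eq:residual_info} of $RI$, and then invoke \Lemmaref{monotone} with the Markov chains forced by privacy.

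Summing the three bounded terms, each of the three orderings produces the same $H$-part, namely $H(X,Y|Z)+H(Y,Z|X)+H(X,Z|Y)$, while their $RI$-parts turn out to be $\max\{RI(X;Y)+RI(X;Z),\,RI(X;Y)+RI(Y;Z)\}$ for the $M_{12}$-ordering, $\max\{RI(X;Y)+RI(Y;Z),\,RI(X;Z)+RI(Y;Z)\}$ for the $M_{23}$-ordering, and $\max\{RI(X;Y)+RI(X;Z),\,RI(X;Z)+RI(Y;Z)\}$ for the $M_{31}$-ordering. Each of the three pair-sums $RI(X;Y)+RI(X;Z)$, $RI(X;Y)+RI(Y;Z)$, and $RI(X;Z)+RI(Y;Z)$ is dominated by the corresponding entry in one of these three maxima, so taking the best of the three orderings yields exactly the claimed $\max$-of-pair-sums; subtracting $H(X,Y)$ finishes the $n=1$ proof, and the direct-sum extension to $\R^{\PS}$ is routine.

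The main point to watch is the symmetry bookkeeping in the last step: for each of the three chain-rule orderings one must carefully identify which specific $RI(\cdot;\cdot)$ emerges from the middle conditional mutual information, and then verify that the three resulting max-of-two pair-sums jointly cover all three two-out-of-three sums in the theorem statement. Everything else is a fairly direct reuse of identities and inequalities already appearing in the proof of \Theoremref{prelim_lbs}.
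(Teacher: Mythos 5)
Your proof is correct, and it reaches the theorem by a slightly different decomposition than the paper's. The paper starts from $\rho = H(M_{12},M_{23},M_{31}|X,Y,Z)+H(Z|X,Y)$, drops to the maximum over \emph{pairs} of transcripts, uses privacy to rewrite e.g.\ $H(M_{12},M_{31}|X,Y,Z)=H(M_{12},M_{31}|X)\ge H(M_{12},M_{31})-H(X)$, and then bounds each pair entropy by a two-term chain rule (in both orders) using the ingredients of \Theoremref{prelim_lbs}. You instead observe, via \Lemmaref{general_cutset}, the exact identity $\rho=H(M_{12},M_{23},M_{31})-H(X,Y)$ and then lower-bound the joint entropy of all three transcripts by a three-term chain rule, taking the best of the three orderings. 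The two routes are equivalent in strength --- your bound restricted to any pair, e.g.\ $H(M_{12},M_{31})+H(M_{23}|M_{12},M_{31})-H(X,Y)\ge H(M_{12},M_{31})+H(Y,Z|X)-H(X,Y)$, collapses to the paper's $H(M_{12},M_{31})-H(X)+H(Z|X,Y)$ --- and both rely on exactly the same sub-lemmas (\Lemmaref{general_cutset}, \Lemmaref{monotone}, the $Q$-based definition of $RI$ in \eqref{eq:residual_info}, and the bounds of \Theoremref{prelim_lbs}). What your version buys is a cleaner bookkeeping: the $-H(X,Y)$ term appears once as an exact correction rather than emerging from the algebraic simplification $-H(X)+H(Z|X,Y)=H(Y,Z|X)-H(X,Y)$, and the coverage of all three pair-sums of $RI$'s is visible directly from the three chain-rule orderings. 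Your accounting of which $RI$ pair-sums each ordering produces, and the check that their union covers all three, is accurate, as is the reduction to $n=1$ via \Lemmaref{RI_tensorizes} and additivity of the entropy terms.
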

\begin{proof}
Again, we prove this for $n=1$, and the general result follows from
\Lemmaref{RI_tensorizes}.
Fix a protocol $\Pi(p_{XY},p_{Z|XY})$. We bound the randomness required by this protocol as follows:
\begin{align}
\rho &= H(M_{12},M_{23},M_{31},Z|X,Y), \qquad(\text{by definition})\nonumber \\
& = H(M_{12},M_{23},M_{31}|X,Y,Z) + H(Z|X,Y) \nonumber \\
&\geq \max\{H(M_{12},M_{31}|X,Y,Z), H(M_{12},M_{23}|X,Y,Z), \nonumber \\
&\hspace{1.5cm} + H(M_{23},M_{31}|X,Y,Z)\} + H(Z|X,Y) \label{eq:ps_rand_interim} \\
&= \max\{H(M_{12},M_{31}|X), H(M_{12},M_{23}|Y), \nonumber \\
&\hspace{1.5cm} + H(M_{23},M_{31}|Z)\} + H(Z|X,Y) \label{eq:ps_rand_interim1} \\
&\geq \max\{H(M_{12},M_{31})-H(X), H(M_{12},M_{23})-H(Y) \nonumber \\
&\hspace{1.5cm} + H(M_{23},M_{31})-H(Z)\} + H(Z|X,Y), \label{eq:ps_rand_interim2}
\end{align}
where, in \eqref{eq:ps_rand_interim1} we used privacy conditions \eqref{eq:ps_privacy_alice}-\eqref{eq:ps_privacy_charlie}, 
and in \eqref{eq:ps_rand_interim2} we used $H(M_{12},M_{31}|X)=H(M_{12},M_{31},X)-H(X)\geq H(M_{12},M_{31})-H(X)$.
We can bound $H(M_{12},M_{31})$, $H(M_{12},M_{23})$, and $H(M_{23},M_{31})$ using the 
bounds we have already developed.
First we bound $H(M_{12},M_{31})$ as follows:
\begin{align}
\label{eq:M31-M12_bound}
H(M_{12},M_{31}) &= H(M_{12}) + H(M_{31}|M_{12}).
\end{align}
Notice that one of the bounds on $H(M_{31})$ in \Theoremref{prelim_lbs} was obtained by bounding $H(M_{31}|M_{12})$ as follows:
\begin{align}
\label{eq:M31-cond-M12_bound}
H(M_{31}|M_{12}) &\geq RI(X;Y) + H(X,Z|Y).
\end{align}
Substituting the bound on $H(M_{12})$ from \Theoremref{prelim_lbs} and bound on $H(M_{31}|M_{12})$ from \eqref{eq:M31-cond-M12_bound} into \eqref{eq:M31-M12_bound} we get:
\begin{align*}
H(M_{12},M_{31}) &\geq \max\{RI(X;Z),RI(Y;Z)\} + H(X,Y|Z) \\
&\qquad + RI(X;Y) + H(X,Z|Y).
\end{align*}
Substituting this into the first term inside the $\max$ in \eqref{eq:ps_rand_interim2} we get:
\begin{align*}
\rho &\geq \max\{RI(X;Z),RI(Y;Z)\} + H(X,Y|Z) + RI(X;Y) \\
&\quad + H(X,Z|Y) - H(X) + H(Z|X,Y) \\
&= \max\{RI(X;Y)+RI(X;Z),RI(X;Y)+RI(Y;Z)\}  \\
&\quad + H(Y,Z|X) + H(X,Z|Y) + H(X,Y|Z) - H(X,Y).
\end{align*}
If we expand $H(M_{12},M_{31})$ in another way as $H(M_{31})+H(M_{12}|M_{31})$ and proceed similarly as above, we get the following:
\begin{align*}
\rho &\geq \max\{RI(X;Y)+RI(X;Z),RI(X;Z)+RI(Y;Z)\} \\
&\quad + H(Y,Z|X) + H(X,Z|Y) + H(X,Y|Z) - H(X,Y).
\end{align*}
Combining the above two bounds gives the desired result. 
Notice that bounding $H(M_{12},M_{23})$ and $H(M_{23},M_{31})$ result in the same bound.
\end{proof}
If $p_{XY}$ has full support (which allows us to assume, without loss of generality, that 
$p_{Z|XY}$ is in normal form), we can strengthen \Theoremref{general_randomness}  
using the ideas from \Subsectionref{improved_lbs}. For instance, now 
we can bound $H(M_{12})$ from \Theoremref{lowerbound}. Notice that by \eqref{eq:ps_privacy_bob}, i.e., privacy against Alice, 
$(M_{12},M_{31})$ is conditionally independent of $Y$ conditioned on $X$. 
This means that the distribution of $(M_{12},M_{31})$ will not change if we change the input 
distribution to $p_{XY'}$ keeping the marginal of Alice's input the same.
Applying this observation to \eqref{eq:M31-cond-M12_bound} we get the following:
\begin{align}
H(M_{31}|M_{12}) &\geq \max_{p_{Y'|X}} RI(X;Y') + H(X,Z'|Y'), \label{eq:M31-cond-M12_ds-bound}
\end{align}
where the terms on the right hand side are evaluated using the following distribution:
\[p_{XY'Z'}(x,y,z)=p_{X}(x)\cdot p_{Y'|X}(y|x)\cdot p_{Z|XY}(z|x,y).\]
Similarly we can prove the following bounds, where right hand side expressions are evaluated using appropriate distributions:
\begin{align}
H(M_{23}|M_{12}) &\geq \max_{p_{X'|Y}} RI(X';Y) + H(Y,Z'|X'),\label{eq:M23-cond-M12_ds-bound}\\
H(M_{12}|M_{31}) &\geq \max_{p_{Y'|X}} RI(X;Z') + H(X,Y'|Z'),\label{eq:M12-cond-M31_ds-bound}\\
H(M_{12}|M_{23}) &\geq \max_{p_{X'|Y}} RI(Y;Z') + H(X',Y|Z'),\label{eq:M12-cond-M23_ds-bound}\\
H(M_{23}|M_{31}) &\geq \max_{p_{X'Y'}} RI(X';Z) + H(Y',Z|X'),\label{eq:M23-cond-M31_ds-bound}\\
H(M_{31}|M_{23}) &\geq \max_{p_{X'Y'}} RI(Y';Z) + H(X',Z|Y'),\label{eq:M31-cond-M23_ds-bound}
\end{align}
where the $p_{X'Y'}$ in the maximization in
\eqref{eq:M23-cond-M31_ds-bound} and \eqref{eq:M31-cond-M23_ds-bound} are
over distributions such that they result in the same marginal distribution of $Z$, i.e., $p_{X'Y'}$ should be such that the following holds for every value $z$ in the alphabet $\Z$:
\[\sum_{x,y}p_{X'Y'}(x,y)p_{Z|XY}(z|x,y)=\sum_{x,y}p_{XY}(x,y)p_{Z|XY}(z|x,y).\]
The above observations, along with \eqref{eq:ps_rand_interim2}, lead to the following theorem.
\begin{thm}\label{thm:rand_lowerbound}
For a secure computation problem $(p_{XY},p_{Z|XY})$, where $p_{XY}$ has full support and 
$p_{Z|XY}$ is in normal form, $(R_{12},R_{23},R_{31},\rho)\in\R^{\PS}$ only if
\begin{align*}
\rho &\geq H(M_{12}) + H(M_{31}|M_{12}) - H(X) + H(Z|XY), \\
\rho &\geq H(M_{12}) + H(M_{23}|M_{12}) - H(Y) + H(Z|XY), \\
\rho &\geq H(M_{31}) + H(M_{12}|M_{31}) - H(X) + H(Z|XY), \\
\rho &\geq H(M_{31}) + H(M_{23}|M_{31}) - H(Z) + H(Z|XY), \\
\rho &\geq H(M_{23}) + H(M_{12}|M_{23}) - H(Y) + H(Z|XY), \\
\rho &\geq H(M_{23}) + H(M_{31}|M_{23}) - H(Z) + H(Z|XY),
\end{align*}
where lower bounds on the entropy terms may be taken from \Theoremref{lowerbound}, and lower bounds on
the conditional entropy terms may be taken from \eqref{eq:M31-cond-M12_ds-bound}-\eqref{eq:M31-cond-M23_ds-bound}.
\end{thm}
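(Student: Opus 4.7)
The plan is to essentially follow the template of \Theoremref{general_randomness}, but stop one step earlier: rather than substituting specific bounds on the joint entropies $H(M_{ij},M_{kl})$ in terms of $RI$ and conditional entropy, I will split each such joint entropy via the chain rule in both possible orders, keeping $H(M_{ij})$ and $H(M_{kl}\mid M_{ij})$ as atomic terms. Since distribution switching already buys us strong bounds on the conditional entropies (namely \eqref{eq:M31-cond-M12_ds-bound}--\eqref{eq:M31-cond-M23_ds-bound}) and on the marginal entropies (\Theoremref{lowerbound}), this stop-early strategy yields six independent inequalities, one for each ordered pair of transcripts incident at a user.

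It suffices to prove the $n=1$ case; the general case then follows by invoking the single-letter statement on the $n$-fold product functionality and using \Lemmaref{RI_tensorizes} as in \Theoremref{general_randomness}. The first step is the identity
\[
\rho \;=\; H(M_{12},M_{23},M_{31},Z\mid X,Y)\;=\;H(Z\mid X,Y)+H(M_{12},M_{23},M_{31}\mid X,Y,Z),
\]
by chain rule. The next step is to drop one transcript and get, for each of the three pairs,
$H(M_{12},M_{23},M_{31}\mid X,Y,Z)\ge H(M_{ij},M_{kl}\mid X,Y,Z)$. I then invoke the privacy conditions \eqref{eq:ps_privacy_alice}--\eqref{eq:ps_privacy_charlie}: privacy against Alice yields $H(M_{12},M_{31}\mid X,Y,Z)=H(M_{12},M_{31}\mid X)$, privacy against Bob yields $H(M_{12},M_{23}\mid X,Y,Z)=H(M_{12},M_{23}\mid Y)$, and privacy against Charlie yields $H(M_{23},M_{31}\mid X,Y,Z)=H(M_{23},M_{31}\mid Z)$. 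Using the trivial bound $H(A\mid W)=H(A,W)-H(W)\ge H(A)-H(W)$, this converts into
\[
\rho \;\ge\; H(M_{ij},M_{kl})-H(W_{ij,kl})+H(Z\mid X,Y),
\]
with $W_{12,31}=X$, $W_{12,23}=Y$, $W_{23,31}=Z$.

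The final step is to expand each of the three joint entropies using the chain rule in both orders, e.g.\ $H(M_{12},M_{31})=H(M_{12})+H(M_{31}\mid M_{12})=H(M_{31})+H(M_{12}\mid M_{31})$, and analogously for the other two pairs. This yields the six inequalities in the statement. The bounds on the marginal entropies $H(M_{ij})$ follow from \Theoremref{lowerbound}, and the bounds on the conditional entropies $H(M_{kl}\mid M_{ij})$ follow from \eqref{eq:M31-cond-M12_ds-bound}--\eqref{eq:M31-cond-M23_ds-bound}, both of which are obtained independently (and make essential use of distribution switching on the appropriate conditional distribution).

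There is no single hard step here; the only subtlety is bookkeeping which privacy condition kills which input/output pair inside the conditioning, and making sure that the two chain-rule orderings yield the two corresponding rows in the theorem. The strength of the theorem lies not in this combinatorial splitting but in the quality of the marginal and conditional entropy bounds plugged in afterwards, for which distribution switching is the key tool.
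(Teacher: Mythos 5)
Your proposal is correct and follows essentially the same route as the paper: the paper derives exactly your intermediate bound $\rho \ge \max\{H(M_{12},M_{31})-H(X),\, H(M_{12},M_{23})-H(Y),\, H(M_{23},M_{31})-H(Z)\}+H(Z|X,Y)$ as \eqref{eq:ps_rand_interim2} in the proof of \Theoremref{general_randomness} (via the same chain rule, monotonicity, and privacy steps), and then obtains \Theoremref{rand_lowerbound} by expanding each pairwise joint entropy with the chain rule in both orders, just as you do. The single-letterization via \Lemmaref{RI_tensorizes} is also handled the same way.
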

Note that a consequence of the first bound in \Theoremref{rand_lowerbound},
combined with \eqref{eq:M31-cond-M12_ds-bound} (with $Y'$ taken to be
independent of $X$, so that $H(X,Z'|Y') \ge H(X)$), is that
\begin{align}
\label{eq:rho-M12}
\rho &\geq H(M_{12}).
\end{align}
When the function satisfies certain additional constraints, we can further strengthen 
\Theoremref{rand_lowerbound} using the improved bounds on $H(M_{31})$ and $H(M_{23})$ from 
\Theoremref{conditionallowerbounds}. 
The resulting bounds on randomness are stated in the following theorem.
\begin{thm}\label{thm:rand_conditionallowerbounds}
For a secure computation problem $(p_{XY},p_{Z|XY})$, where $p_{XY}$ has full support and 
$p_{Z|XY}$ is in normal form, if $(R_{12},R_{23},R_{31},\rho)\in\R^{\PS}$ then following must hold.
\begin{enumerate}
\item If $p_{Z|XY}$ satisfies condition 1 of \Lemmaref{XYZ_inde_M123}, then
\begin{align*}
\rho &\geq H(M_{31}) + H(M_{12}|M_{31}) - H(X) + H(Z|XY), \\
\rho &\geq H(M_{31}) + H(M_{23}|M_{31}) - H(Z) + H(Z|XY),
\end{align*}
where lower bound on $H(M_{31})$ may be taken from \Theoremref{conditionallowerbounds}, 
and lower bounds on the conditional entropy terms may be taken from \eqref{eq:M12-cond-M31_ds-bound} and~\eqref{eq:M23-cond-M31_ds-bound}.
\item If $p_{Z|XY}$ satisfies condition 2 of \Lemmaref{XYZ_inde_M123}, then 
\begin{align*}
\rho &\geq H(M_{23}) + H(M_{12}|M_{23}) - H(Y) + H(Z|XY), \\
\rho &\geq H(M_{23}) + H(M_{31}|M_{23}) - H(Z) + H(Z|XY),
\end{align*}
where lower bound on $H(M_{23})$ may be taken from \Theoremref{conditionallowerbounds}, 
and lower bounds on the conditional entropy terms may be taken from \eqref{eq:M12-cond-M23_ds-bound} and~\eqref{eq:M31-cond-M23_ds-bound}.
\end{enumerate}
\end{thm}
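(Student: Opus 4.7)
The strategy is to reuse the derivation in the proof of Theorem~\ref{thm:rand_lowerbound} essentially verbatim for the four inequalities being asserted, and simply substitute the stronger ingredients that become available under Conditions~1 and~2 of Lemma~\ref{lem:XYZ_inde_M123}.

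Concretely, I would begin with the identity $\rho = H(M_{12},M_{23},M_{31}|X,Y,Z) + H(Z|X,Y)$ as in \eqref{eq:ps_rand_interim}, and lower-bound $H(M_{12},M_{23},M_{31}|X,Y,Z)$ by each of $H(M_{12},M_{31}|X,Y,Z)$, $H(M_{12},M_{23}|X,Y,Z)$, and $H(M_{23},M_{31}|X,Y,Z)$. By the three privacy conditions these equal $H(M_{12},M_{31}|X)$, $H(M_{12},M_{23}|Y)$, and $H(M_{23},M_{31}|Z)$ respectively, each of which is at least the corresponding unconditional joint entropy minus $H(X)$, $H(Y)$, or $H(Z)$, as in \eqref{eq:ps_rand_interim1}--\eqref{eq:ps_rand_interim2}.

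For part~(1), under Condition~1, I would expand $H(M_{12},M_{31}) = H(M_{31}) + H(M_{12}|M_{31})$ and $H(M_{23},M_{31}) = H(M_{31}) + H(M_{23}|M_{31})$. Substituting the improved lower bound on $H(M_{31})$ from Theorem~\ref{thm:conditionallowerbounds} (which is precisely where Condition~1 is used) together with the conditional-entropy bounds \eqref{eq:M12-cond-M31_ds-bound} and \eqref{eq:M23-cond-M31_ds-bound} then yields the two stated inequalities. The argument for part~(2) is entirely symmetric: under Condition~2, expand $H(M_{12},M_{23}) = H(M_{23}) + H(M_{12}|M_{23})$ and $H(M_{23},M_{31}) = H(M_{23}) + H(M_{31}|M_{23})$, and plug in the strengthened bound on $H(M_{23})$ from Theorem~\ref{thm:conditionallowerbounds} together with \eqref{eq:M12-cond-M23_ds-bound} and \eqref{eq:M31-cond-M23_ds-bound}.

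Since the inequality structure is identical to that of Theorem~\ref{thm:rand_lowerbound} and all supporting lemmas are already established, I do not anticipate any real technical obstacle; this theorem is essentially a corollary whose only substantive content is the observation that Conditions~1 and~2 permit substituting the stronger entropy bounds of Theorem~\ref{thm:conditionallowerbounds} into a derivation we have already carried out. As in the earlier randomness arguments, it suffices to prove the $n=1$ statement and then appeal to Lemma~\ref{lem:RI_tensorizes} with memoryless choices of the auxiliary distributions to conclude for general block length.
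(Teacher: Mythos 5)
Your proposal is correct and follows essentially the same route as the paper: the paper presents this theorem without a separate proof precisely because it is obtained by taking the decomposition $\rho \geq \max\{H(M_{12},M_{31})-H(X),\, H(M_{23},M_{31})-H(Z),\,\dots\}+H(Z|X,Y)$ from \eqref{eq:ps_rand_interim2}, expanding the joint entropies via the chain rule with $M_{31}$ (resp.\ $M_{23}$) first, and substituting the strengthened bounds of \Theoremref{conditionallowerbounds} together with \eqref{eq:M12-cond-M31_ds-bound}--\eqref{eq:M31-cond-M23_ds-bound}. Your handling of the $n=1$ reduction and tensorization also matches the paper's convention.
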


To summarize, for any secure computation problem $(p_{XY},p_{Z|XY})$, 
where $(p_{XY},p_{Z|XY})$ is in normal form, \Theoremref{general_randomness} gives a lower 
bound on the randomness. If $p_{XY}$ has full support and $p_{Z|XY}$ is in normal form, then 
\Theoremref{rand_lowerbound} gives an improved lower bound on randomness. In addition, if 
$p_{Z|XY}$ satisfies condition~1 or condition~2 of \Lemmaref{XYZ_inde_M123}, then lower 
bound of \Theoremref{rand_conditionallowerbounds} may, in some cases, be better than that of 
\Theoremref{rand_lowerbound}.

\begin{remark}\label{remark:ps_remark_rand}
{\em As argued in \Sectionref{prelims}, for communication requirements, if $p_{XY}$ has full support, we 
can assume, without loss of generality, that $p_{Z|XY}$ is in normal form. In case $p_{Z|XY}$ is not in normal form, we 
redefine the problem to $(p_{X^*Y^*},p_{Z^*|X^*Y^*})$, where $p_{Z^*|X^*Y^*}$ is in normal 
form, such that any secure protocol for the former can be transformed to a secure protocol for 
the latter {\em with the same communication costs, and vice versa}.
This implies that the communication lower bounds developed for the modified problem 
$(p_{X^*Y^*},p_{Z^*|X^*Y^*})$ are equally good for the original problem $(p_{XY},p_{Z|XY})$.
But this may not hold for randomness complexity, i.e., we do not know how to transform a 
secure protocol for $(p_{X^*Y^*},p_{Z^*|X^*Y^*})$ to a secure protocol for $(p_{XY},p_{Z|XY})$ 
with the same randomness requirement. 
Hence, although the lower bounds for randomness developed for the problem 
$(p_{X^*Y^*},p_{Z^*|X^*Y^*})$ also serve as valid lower bounds for the original problem 
$(p_{XY},p_{Z|XY})$, improvements may be possible by considering the original problem.}
\end{remark}
}

\subsection{Application to Specific Functions}\label{subsec:examples}
In this section we consider a few important examples where we will apply
our generic lower bounds from above and also give secure protocols. In some cases
we will obtain the optimal rate regions. While some of these results are
natural to conjecture, they are not easy to prove (see, for
instance,~\Footnoteref{addition}). In all the examples except
one\footnote{The exceptional case arises as a special case of {\sc controlled erasure} where
we need to exploit blocks of input for source compression.}, our protocols are for a
block-length of 1; hence, where they are optimal, $\R^{1,\PS}=\R^{\PS}$.
Before presenting our examples, we make the following two definitions:

\paragraph{Communication-Ideal Protocol.}
We say that a protocol $\Pi(p_{XY},p_{Z|XY})$ for securely computing a randomized function 
$p_{Z|XY}$ for a distribution $p_{XY}$ is {\em communication-ideal}, if for each 
$ij \in \{12,23,31\}$, \[ H(M^{\Pi}_{ij}) = \inf_{\Pi'(p_{XY},p_{Z|XY})} H(M^{\Pi'}_{ij}), \]
where the infimum is over all secure protocols for $p_{Z|XY}$ with the same distribution 
$p_{XY}$. 
That is, a communication-ideal protocol achieves the least entropy possible for 
every link, simultaneously. We remark that it is not clear, {\em a priori}, how to determine 
if a given function $p_{Z|XY}$ has a communication-ideal protocol for a given distribution 
$p_{XY}$.

\paragraph{Randomness-Optimal Protocol.}
We say that a protocol $\Pi(p_{XY},p_{Z|XY})$ for securely computing a randomized function 
$p_{Z|XY}$ for a distribution $p_{XY}$ is {\em randomness-optimal}, if
\[ \rho^{\Pi}(p_{XY},p_{Z|XY}) = \inf_{\Pi'(p_{XY},p_{Z|XY})} \rho^{\Pi'}(p_{XY},p_{Z|XY}), \]
where the infimum is over all secure protocols for $p_{Z|XY}$ with the same distribution 
$p_{XY}$. That is, a protocol is randomness optimal if the randomness (measured in bits) used by 
the protocol is the least among all protocols.
As defined in the beginning of \Sectionref{ps_lowerbounds}, the amount of randomness required 
by a protocol $\Pi$ is
$\rho^{\Pi}(p_{XY},p_{Z|XY})=H(M_{12},M_{23},M_{31},Z|X,Y)$.\\

{\bf 1. Secure Computation of Remote Oblivious Transfer {\sc remote $\binom{m}{1}$-OT$^n_2$}:}
The {\sc remote $\binom{m}{1}$-OT$^n_2$} function, is defined as follows:
Alice's input $X=(X_0,X_1,\hdots,X_{m-1})$ is made up of $m$ bit-strings
each of length $n$, and Bob has an input $Y\in \{0,1,\hdots,m-1\}$. Charlie
wants to compute $Z=f(X,Y)=X_Y$. This can be seen as a 3 user variant of
oblivious transfer~\cite{Wiesner83,EvenGL85}.
\Figureref{ot} gives the simple protocol for this function from \cite{FeigeKiNa94}
(rephrased as a protocol in our model). It requires $nm$ bits to be
exchanged over the Alice-Charlie (31) link, $n+\log m$  bits over the
Bob-Charlie (23) link, and $nm+\log m$ bits over the Alice-Bob (12) link. The total number of random bits used in the protocol is $nm+\log m$. We show that this protocol achieves the optimal rate region, i.e., it is a {\em communication-ideal} as well as {\em randomness-optimal} protocol.
\begin{figure}[htb]
\hrule height 1pt
\vspace{.05cm}
{\bf Algorithm 1:} {Secure Computation of \textsc{remote $\binom{m}{1}$-OT$^n_2$}} 
\hrule
\begin{algorithmic}[1]
\REQUIRE Alice has $m$ input bit strings $X_0,X_1,\hdots,X_{m-1}$ each of length $n$ \& Bob has an input $Y \in \{0,1,\hdots,m-1\}$.
\ENSURE Charlie securely computes the {\sc remote $\binom{m}{1}$-OT$^n_2$}: $Z=X_Y$.

\medskip

\STATE Alice samples $nm + \log m$ independent, uniformly distributed bits from her private randomness. Denote the first $m$ blocks each of length $n$ of this random string by $K_0,K_1,\hdots,K_{m-1}$ and the last $\log m$ bits by $\pi$. Alice sends it to Bob as $M_{\vec{12},1}=(K_0,K_1,\hdots,K_{m-1},\pi)$.

\STATE Alice computes $M^{(i)}=X_{\pi+i\; (\text{mod }m)}\oplus K_{\pi+i\; (\text{mod } m)}, \quad i\in\{0,1,\hdots,m-1\}$ and sends to Charlie 
$M_{\vec{13},2}=(M^{(0)},M^{(1)},\hdots,M^{(m-1)})$. Bob computes $C=Y-\pi\; (\text{mod } m), K=K_Y$ and sends to Charlie $M_{\vec{23},2}=(C, K)$.

\STATE Charlie outputs $Z=M^{(C )}\oplus K$. 

\end{algorithmic}

\hrule
\caption{A protocol to securely compute {\sc remote
$\binom{m}{1}$-OT$^n_2$}, which is a special case of the general protocol
given in \cite{FeigeKiNa94}. The protocol requires $nm$ bits to be
exchanged over the Alice-Charlie (13) link, $n+\log m$  bits over the
Bob-Charlie (23) link, and $nm+\log m$ bits over the Alice-Bob (12) link. We
show optimality of our protocol by showing that any protocol must exchange
an expected $nm$ bits  over the Alice-Charlie (31) link, $n+\log m$ bits
over the Bob-Charlie (23) link, and $nm+\log m$ bits over the Alice-Bob (12)
link.}
\label{fig:ot}
\end{figure}
\begin{thm}\label{thm:ot}
Any secure protocol for computing {\sc remote $\binom{m}{1}$-OT$^n_2$} for inputs $X$ and $Y$, where $p_{XY}$ has full support, must satisfy
\begin{align*}
R_{31} \geq\ &nm, \quad R_{23} \geq n+\log m, \text{ and}\\
&R_{12}, \rho \geq nm+\log m.
\end{align*}
Hence the protocol in \Figureref{ot} is optimal.
\end{thm}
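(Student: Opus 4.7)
The plan is to derive each bound by applying \Theoremref{lowerbound} and \Theoremref{conditionallowerbounds} to the specific function $p_{Z|XY}(z|x,y) = 1_{z=x_y}$, with full-support test distributions chosen via distribution switching to saturate the bounds. As a preliminary I would check that this function is in normal form and that both Condition~1 and Condition~2 of \Lemmaref{XYZ_inde_M123} hold. Condition~2 is vacuous: for every nonempty $\mathcal{Y}_k$, $\mathcal{Z}_k = \{x_y : x \in \{0,1\}^{nm},\ y \in \mathcal{Y}_k\} = \{0,1\}^n$, so $\mathcal{Z}_1 \cap \mathcal{Z}_2 \neq \varnothing$. Condition~1 can be verified by noting that any nontrivial partition of $\{0,1\}^{nm}$ must place a mixed vector such as $(z_1,z_2,0,\ldots,0)$ in one part, forcing $\{z_1,z_2\} \subseteq \mathcal{Z}_k$ there, while the constants $(c,c,\ldots,c)$ distributed across the two parts then force $\mathcal{Z}_1 \cap \mathcal{Z}_2 \neq \varnothing$. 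These verifications let me invoke the strengthened bounds \eqref{eq:M31_WithCond} and \eqref{eq:M23_WithCond}, whose maxima are over arbitrary full-support $p_{X'Y'}$ with no inherited marginal constraints.

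For $R_{31}$, I would use the one-line row of \eqref{eq:M31_WithCond} with $p_{X'}\cdot p_{Y'}$ the uniform product: then $Z'=X'_{Y'}$ is a deterministic function of $(X',Y')$ and $X' \perp Y'$, so $H(X',Z'|Y') = H(X'|Y') = nm$, and $\RI(Y';Z') \geq 0$ gives $R_{31} \geq nm$. For $R_{23}$, I would use the one-line row of \eqref{eq:M23_WithCond} with $Y'$ uniform and independent of $X'$, choosing $X'$ to be the $\epsilon$-perturbation $p_{X'} = (1-\epsilon)\, p^*_{X'} + \epsilon\,\mathrm{Unif}(\{0,1\}^{nm})$, where $p^*_{X'}$ puts mass $2^{-n}$ on each constant vector $(c,c,\ldots,c)$. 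The uniform part ensures full support and makes the characteristic bipartite graph of $p_{X'Z'}$ connected (any $z_1,z_2$ are linked through $(z_1,z_2,0,\ldots,0)$), so $\RI(X';Z') = I(X';Z')$. A direct calculation shows $Z'$ is uniform on $\{0,1\}^n$ while $H(Z'|X') \le \epsilon \log m$, so $\RI(X';Z') \to n$ as $\epsilon \to 0$. Combined with $H(Y',Z'|X') = H(Y') = \log m$, and the fact that the bound holds for each $\epsilon > 0$, this gives $R_{23} \geq n + \log m$.

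For $R_{12}$, I would use the second row of \eqref{eq:M12_bound}: the first maximum $\max_{p_{X'}p_{Y'}} \RI(X';Z')$ is handled by the same $\epsilon$-perturbation, yielding a supremum of $n$. For the second maximum, take $X''$ and $Y'$ independent and uniform: conditioned on any $y$, $Z''=X''_y$ is uniform on $\{0,1\}^n$, so $I(Y';Z'')=\RI(Y';Z'')=0$, while $H(X'',Y'|Z'') = H(X'',Y') - H(Z'') = nm + \log m - n$. Summing yields $n + (nm + \log m - n) = nm + \log m$, so $R_{12} \geq nm + \log m$. Finally, the randomness bound follows from \eqref{eq:rho-M12}: since $H(Z|XY) = 0$ for the deterministic OT function, $\rho \geq H(M_{12}) \geq nm + \log m$. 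The main technical obstacle I anticipate is cleanly handling the $\epsilon \to 0$ limit, since the maxima in the theorems are formally over full-support distributions; the resolution is that each $\epsilon > 0$ yields a valid full-support choice with a corresponding bound on $R_{23}$ or $R_{12}$, so the supremum of these bounds is itself a valid lower bound on the rate.
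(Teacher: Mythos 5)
Your proposal is correct and follows essentially the same route as the paper's proof: it invokes the bottom rows of \eqref{eq:M31_WithCond}, \eqref{eq:M23_WithCond}, and \eqref{eq:M12_bound} with the same distribution choices (uniform product for $R_{31}$; a near-constant-vector $X'$ with independent uniform $Y'$ for $R_{23}$ and for the first maximum in the $R_{12}$ bound; uniform product for the second maximum), and obtains $\rho$ from \eqref{eq:rho-M12}. The only differences are cosmetic: you spell out the verification of Conditions~1 and~2 that the paper leaves as "easy to see," and you use a slightly different (and cleaner) $\epsilon$-perturbation of the constant-vector distribution.
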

\begin{proof}
\textsc{remote $\binom{m}{1}$-OT$^n_2$} satisfies Condition~1 and Condition~2 of \Lemmaref{XYZ_inde_M123}, which implies $RI(Y;Z)=I(Y;Z)$ and $RI(Z;X)=I(Z;X)$. For $H(M_{31})$ and $H(M_{23})$ the bottom rows in~\eqref{eq:M31_WithCond} and~\eqref{eq:M23_WithCond} simplify to the following:
\begin{align*}
H(M_{31}) \geq \max_{p_{X'Y'}} I(Y';Z') + H(X'|Y'), \\
H(M_{23}) \geq \max_{p_{X'Y'}} I(X';Z') + H(Y'|X').
\end{align*}
Taking $X'$ and $Y'$ to be independent and uniform in their respective domains, we get $H(M_{31}) \geq nm$.
To derive a lower bound on $H(M_{23})$, take $X',Y'$ to be independent with $Y' \sim \text{unif}\{0,1,\hdots,m-1\}$ and $X'$ distributed as below:
\begin{align*}
&p_{X_0',X_1',\hdots,X_{m-1}'}(x_0,x_1,\hdots,x_{m-1}) \\
&\qquad \qquad =\left\{\begin{array}{ll} \frac{1}{2^n}-\epsilon, & x_0=x_1=\hdots=x_{m-1},\\ 
\epsilon/(2^{n(m-1)}-1),& \text{otherwise},\end{array}\right.
\end{align*}
where $\epsilon>0$ can be made arbitrarily small to make $I(Z';X')$ as close
to $n$ as desired. This gives a bound of $H(M_{23})\geq n+\log m$. For $H(M_{12})$, the bottom row of \eqref{eq:M12_bound} simplifies to
\begin{align*}
H(M_{12}) &\geq \max_{p_{X'}p_{Y'}} I(X';Z') \\
&\qquad \qquad + \max_{p_{X''Y'}} I(Y';Z'') + H(X'',Y'|Z'').
\end{align*}
Taking $X''$ and $Y'$ to be independent and uniform (in the second maximum), and $X'$ to be distributed as below
\begin{align*}
&p_{X_0',X_1',\hdots,X_{m-1}'}(x_0,x_1,\hdots,x_{m-1}) \\
&\qquad \qquad =\left\{\begin{array}{ll} \frac{1}{2^n}-\epsilon, & x_0=x_1=\hdots=x_{m-1},\\ 
\epsilon/(2^{n(m-1)}-1),& \text{otherwise},\end{array}\right.
\end{align*}
where $\epsilon>0$ can be made arbitrarily small to make $I(X';Z')$ as close to $n$ as desired. This gives a bound of $H(M_{12})\geq nm+\log m$.

Finally, from \eqref{eq:rho-M12} and the above bound on $H(M_{12})$, we get
$\rho\ge H(M_{12}) \ge nm+\log m$.  This implies that the above protocol is
randomness-optimal.
\end{proof}

The above optimality result for {\sc remote ot} function has the following implications:
\begin{enumerate}
\item[(i)]{\em Optimality of the FKN Protocol.} Feige et al.\ \cite{FeigeKiNa94}
provided a generic (non-interactive) secure computation protocol for all
3-user functions in our model. This protocol uses a straight-forward (but
``inefficient'') reduction from an arbitrary function to the remote OT function, 
and then gives a simple protocol for the remote OT function. 
While the resulting protocol turns out to be suboptimal for most functions, \Theoremref{ot}
shows that the protocol that \cite{FeigeKiNa94} used for {\sc remote OT}
itself is optimal. 

\item[(ii)] {\em Separating Secure and Insecure Computation.}
A basic question of secure computation is whether it needs more bits to be
communicated to the user who wants to learn the output than the input-size itself (which suffices for insecure
computation). While natural to expect, it is not easy to prove this.
\cite{FeigeKiNa94}, in their restricted
model\footnote{Recall that the model of~\cite{FeigeKiNa94} can be thought of as our
protocol model with the following restrictions: (i) Alice and Bob share a
common random variable independent of their inputs, but are otherwise
unable communicate with each other. (ii) Alice and Bob may send only one
message each to Charlie who may not send any messages to the other users.},
showed a non-explicit result, that for securely computing {\em most}
Boolean functions on the domain $\{0,1\}^n \times \{0,1\}^n$, Charlie is
required to receive at least $3n-4$ bits, which is significantly more than
the $2n$ bits sufficient for insecure computation. 

{\sc remote $\binom{2}{1}$-OT$^n_2$} from above already gives us an
explicit example of a function where this is true: the total input size is
$2n+1$, but the communication is at least $H(M_{31})+H(M_{23}) \ge 3n+1$.
To present an easy comparison to the lower bound of \cite{FeigeKiNa94}, we
can consider a symmetrized variant of {\sc remote $\binom{2}{1}$-OT$^n_2$},
in which two instances of {\sc remote $\binom{2}{1}$-OT$^n_2$} are
combined, one in each direction. More specifically, $X=(A_0,A_1,a)$ where
$A_0,A_1$ are of length $(n-1)/2$ (for an odd $n$) and $a$ is a single bit;
similarly $Y=(B_0,B_1,b)$; the output of the function is defined as an
$(n-1)$ bit string $f(X,Y)=(A_b,B_a)$. Considering (say) the uniform input
distribution over $X,Y$, the bounds for {\sc remote
$\binom{2}{1}$-OT$^n_2$} add up to give us $H(M_{31}) \ge 3(n-1)/2+1$ and
$H(M_{23}) \ge 3(n-1)/2+1$, so that the communication with Charlie is
lower-bounded by $H(M_{31})+H(M_{23}) \ge 3n-1$.

This compares favourably with the bound of \cite{FeigeKiNa94} in many ways:
our lower bound holds even in a model that allows interaction; in particular,
this makes the gap between insecure computation ($n-1$ bits in our case,
$2n$ bits for \cite{FeigeKiNa94}) and secure computation (about $3n$ bits
for both) somewhat larger. More importantly, our lower bound is explicit (and
tight for the specific function we use), whereas that of \cite{FeigeKiNa94}
is existential. However, our bound does not subsume that of
\cite{FeigeKiNa94}, who considered {\em Boolean} functions. Our results do
not yield a bound significantly larger than the input size, when the output
is a single bit. It appears that this regime is more amenable to
combinatorial arguments, as pursued in \cite{FeigeKiNa94}, rather than
information theoretic arguments. 
We leave it as a fascinating open problem to
obtain tight bounds in this regime, possibly by combining combinatorial and
information-theoretic approaches.
\end{enumerate}

\renewcommand{\star}{+}
{\bf 2. Secure Computation of {\sc group-add}:}
Let $\mathbb{G}$ be a (possibly non-abelian) group with binary operation $\star$. The 
function {\sc group-add} is defined as follows: Alice has an input $X\in\mathbb{G}$, Bob has an input
$Y\in\mathbb{G}$ and Charlie should get $Z=f(X,Y)=X \star Y$.

In \Figureref{group}, we recapitulate a well-known simple protocol for
securely computing the above function. The protocol requires a
$|\mathbb{G}|$-ary symbol to be exchanged per computation over each link. We show below that 
this protocol is a communication-ideal as well as randomness-optimal for any input
distribution with full support.  As mentioned in
\Footnoteref{addition}, this is easy to see for the uniform distribution,
and using distribution switching, we can see that the same holds as long as
the input distribution has full support.
\begin{figure}[htb]
\hrule height 1pt
\vspace{.05cm}
{\bf Algorithm 2:} {Secure Computation of {\sc group-add}} 
\hrule
\begin{algorithmic}[1]
\REQUIRE Alice \& Bob have input $X,Y\in\mathbb{G}$, respectively.
\ENSURE Charlie securely computes $Z=X\star Y$.

\medskip

\STATE Charlie samples a uniformly distributed element $K$ from $\mathbb{G}$ using his private randomness; sends it to Bob as $M_{\vec{32}}=K$.

\STATE Bob sends $M_{\vec{21}}=Y \star M_{\vec{32}}$ to Alice.

\STATE Alice sends $M_{\vec{13}}=X \star M_{\vec{21}}$ to Charlie.

\STATE Charlie outputs $Z= M_{\vec{13}} - K$.
\end{algorithmic}
\hrule
\caption{An optimal protocol for secure computation in any group
$\mathbb{G}$. The protocol requires a $|\mathbb{G}|$-ary symbol to be exchanged over each link.}
\label{fig:group}
\end{figure}
\begin{thm}\label{thm:group}
Any secure protocol for computing in a Group $\mathbb{G}$, where $p_{XY}$ has full support over $\mathbb{G}\times\mathbb{G}$, must satisfy 
\begin{align*}
R_{12}, R_{23}, R_{31},\rho \geq \log|\mathbb{G}|.
\end{align*}
Hence the protocol in \Figureref{group} is optimal.
\end{thm}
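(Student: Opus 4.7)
My plan is to verify the structural hypotheses needed to invoke the machinery developed in Theorems~\ref{thm:lowerbound}, \ref{thm:conditionallowerbounds}, and the randomness bound \eqref{eq:rho-M12}, and then to show that the single choice of the uniform independent input distribution $p_{X'Y'}(x,y) = 1/|\mathbb{G}|^2$ already saturates every bound. First I would check that $f(x,y) = x \star y$ is in normal form: left/right cancellation in a group immediately gives $x \equiv x' \Rightarrow x=x'$ and $y \equiv y' \Rightarrow y=y'$, and the deterministic nature of $f$ together with the fact that every $z\in\mathbb{G}$ arises as some product forces $z \equiv z' \Rightarrow z=z'$. I would then verify Conditions~1 and~2 of Lemma~\ref{lem:XYZ_inde_M123}: for any partition $\mathcal{X} = \mathcal{X}_1 \cup \mathcal{X}_2$, the set $\mathcal{Z}_k = \{x \star y : x\in\mathcal{X}_k, y\in\mathbb{G}\}$ is all of $\mathbb{G}$ (pick $y = x^{-1}\star z$), so $\mathcal{Z}_1 \cap \mathcal{Z}_2 = \mathbb{G} \neq \varnothing$; symmetrically for partitions of $\mathcal{Y}$. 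This makes the strengthened bounds of Theorem~\ref{thm:conditionallowerbounds} available for both $R_{31}$ and $R_{23}$.

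Next I would plug in uniform independent $X',Y'$. A key calculation is that $Z' = X' \star Y'$ is uniform on $\mathbb{G}$ and is pairwise independent of each of $X'$ and $Y'$ (conditioned on either coordinate, the product remains uniform by the group invariance of the counting measure). Consequently, every residual-information term of the form $RI(X';Z')$, $RI(Y';Z')$, $RI(X';Y')$ vanishes, and the conditional entropies reduce to single-variable entropies. Specifically, $H(X',Z'\mid Y') = H(X'\mid Y') = \log|\mathbb{G}|$ because $Z'$ is a function of $(X',Y')$ and $X'\perp Y'$; similarly $H(Y',Z'\mid X') = \log|\mathbb{G}|$; and $H(X',Y''\mid Z'') = H(X',Y'') - H(Z'') = 2\log|\mathbb{G}| - \log|\mathbb{G}| = \log|\mathbb{G}|$.

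Now I would read the lower bounds off the theorems. From the second row of \eqref{eq:M31_WithCond}, $R_{31} \geq RI(Y';Z') + H(X',Z'\mid Y') = 0 + \log|\mathbb{G}|$; from the second row of \eqref{eq:M23_WithCond}, $R_{23} \geq \log|\mathbb{G}|$; and from the first row of \eqref{eq:M12_bound} (with both the product and joint distributions taken to be uniform), $R_{12} \geq 0 + 0 + H(X',Y''\mid Z'') = \log|\mathbb{G}|$. Finally, the randomness bound $\rho \geq H(M_{12}) \geq \log|\mathbb{G}|$ follows from \eqref{eq:rho-M12}. Together with the protocol in \Figureref{group}, which achieves each bound with equality since each transmitted symbol is a single group element (equivalently, has entropy $\log|\mathbb{G}|$), this establishes optimality.

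I do not anticipate a hard obstacle here: the distribution-switching framework decouples the problem into checking (i) the normal-form hypothesis and Conditions~1, 2 so that the strongest theorems apply, and (ii) that the uniform distribution simultaneously kills every $RI$ term and maximizes the residual conditional-entropy terms. The only subtlety worth flagging is that in the non-abelian case one must be slightly careful in the pairwise-independence verification, but this is immediate from the translation invariance of the uniform distribution on $\mathbb{G}$ and applies equally to left and right multiplication.
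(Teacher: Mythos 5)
Your proposal is correct and takes essentially the same route as the paper: verify Conditions~1 and~2 of \Lemmaref{XYZ_inde_M123}, invoke the distribution-switched bounds of \Theoremref{conditionallowerbounds} and \eqref{eq:M12_bound}, evaluate at the uniform product distribution (where the group's cancellation laws make each conditional-entropy term equal $\log|\mathbb{G}|$), and finish with $\rho \geq H(M_{12})$ from \eqref{eq:rho-M12}. The only cosmetic difference is that you compute $H(X',Y''\mid Z'')$ via the chain rule and explicitly evaluate the $RI$ terms as zero, whereas the paper simply drops them and uses the determination of $Y$ from $(X,Z)$; the resulting bounds are identical.
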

\begin{proof}
It is easy to see that the above function satisfies Condition~1 and Condition~2 of \Lemmaref{XYZ_inde_M123}. 
We will only need the last terms (corresponding to the na\"ive bounds $H(X',Y''|Z'')$ etc., but
with distribution switching) of \eqref{eq:M12_bound}, \eqref{eq:M31_WithCond}, and \eqref{eq:M23_WithCond} for $H(M_{12})$, $H(M_{31})$, and $H(M_{23})$, respectively.
Since we are computing a deterministic function, and $Y$ can be determined from $(X,Z)$, the last terms in each of the these bounds will reduce to the following:
\begin{align*}
H(M_{12}) &\geq \max_{p_{X'Y''}}  H(X'|Z''),\\
H(M_{31}) &\geq \max_{p_{X'Y''}}  H(X'|Y''),\\
H(M_{23}) &\geq \max_{p_{X''Y'}}  H(Y'|X'').
\end{align*}
The optimum bounds for $M_{12}$, $M_{31}$, and $M_{23}$ are obtained by evaluating all the expressions above with product distributions, where each random variable is uniformly distributed over $\mathbb{G}$; this gives $H(M_{12}), H(M_{31}), H(M_{23}) \geq \log|\mathbb{G}|$.

Finally, from \eqref{eq:rho-M12} and the above bound on $H(M_{12})$, we get
$\rho \ge H(M_{12}) \ge \log |\mathbb{G}|$, which implies that the above protocol is randomness-optimal.
\end{proof}

{\bf 3. Secure Computation of {\sc controlled erasure}:}
The controlled erasure function is defined as follows: Alice and Bob have bits $X$ and $Y$, respectively. Alice's input $X$ acts as
the ``control'', which decides whether Charlie receives an erasure ($\Delta$) or Bob's input $Y$. 
\begin{center}
\begin{tabular}{ccc}
\toprule
& \multicolumn{2}{c}{y}\\
\cmidrule(r){2-3} 
x&\multicolumn{1}{c}{0}&\multicolumn{1}{c}{1}\\
\midrule
0& $\Delta$ & $\Delta$\\
1& 0 & 1\\
\bottomrule
\end{tabular}
\end{center}
Notice that Charlie always finds out Alice's control bit, but does not learn
Bob's bit when it is erased. This function does not satisfy Condition~1 of
\Lemmaref{XYZ_inde_M123}.

\Figureref{erasure} gives a protocol for securely computing this
function on each location of strings of length $n$. Bob sends his input
string to Charlie under the cover of a one-time pad and reveals the key
used to Alice. Alice sends his input to Charlie compressed using a Huffman
code (replaced by Lempel-Ziv if we want the protocol to be distribution
independent). He also sends to Charlie those key bits he received
from Bob that corresponds to the locations where there is no erasure (i.e., where his
input bit is 1). When $X\sim\text{Bernoulli}(p)$ and
$Y\sim\text{Bernoulli}(q)$, i.i.d., where $p,q\in(0,1)$, the expected
message length for Alice-Charlie link is ${\mathbb E}[L_{31}] < nH_2(p)+1
+ np$, the messages lengths on the other two links are deterministically
$n$ each, $L_{12}=L_{23}=n$. Here we prove the optimality of this protocol
for $X\sim\text{Bernoulli}( p)$ and $Y\sim\text{Bernoulli}( q)$, where $p,q
\in (0,1)$. We also prove that this protocol is randomness-optimal.
\begin{figure}[htb]
\hrule height 1pt
\vspace{.05cm}
{\bf Algorithm 3:} {Secure Computation of \textsc{controlled erasure}} 
\hrule
\begin{algorithmic}[1]
\REQUIRE Alice \& Bob have input bits $X^n,Y^n\in\{0,1\}^n$.
\ENSURE Charlie securely computes the {\sc controlled erasure} function
\[Z_i=f(X_i,Y_i),\qquad i=1,\ldots,n.\]

\medskip

\STATE Bob samples $n$ i.i.d. uniformly distributed bits $K^n$ from his private randomness; sends it to Alice as $M_{\vec{21},1}=K^n$. Bob sends to Charlie his input $Y^n$ masked (bit-wise) with $K^n$ as $M_{\vec{23},1}=Y^n\oplus K^n$. 

\STATE Alice sends his input $X^n$ to Charlie compressed using a Huffman
code (or Lempel-Ziv if we want the protocol to not depend on the input
distribution of $X^n$); let $c(X^n)$ be the codeword. Alice also sends to
Charlie the sequence of key bits $K_i$ corresponding to the locations where
his input $X_i$ is 1.
\[ M_{\vec{12},2}= c(X^n),(K_i)_{i:X_i=1}.\]

\STATE Charlie outputs \[Z_i=\left\{ \begin{array}{ll}\Delta& \text{ if } X_i=0,\\ (Y_i\oplus K_i)\oplus K_i& \text{ if } X_i=1.\end{array}\right.\]
\end{algorithmic}
\hrule

\caption{A protocol to compute {\sc controlled erasure} function. For
$X\sim\text{Bernoulli}(p)$ and $Y\sim\text{Bernoulli}(q)$, both i.i.d and
$p,q\in(0,1)$, the
expected message lengths are ${\mathbb E}[L_{31}] < n(H_2(p)+p)+1$,
$L_{12} = n$, and $L_{23} =n$. We show that these are asymptotically
optimal by showing the following lower bounds: $H(M_{31}) \geq n(H_2(p) +
p)$, $H(M_{12})\geq n$, and $H(M_{23})\geq n$.}
\label{fig:erasure}
\end{figure}

\begin{thm}\label{thm:erasure}
Any secure protocol for computing {\sc controlled erasure} with inputs $X^n,Y^n$, where $(X_i,Y_i)\sim p_{XY}$, i.i.d., has full support, and induced $X\sim$ {\em Bernoulli}$(p)$ and $Y\sim$ {\em Bernoulli}$(q)$ with $p,q\in(0,1)$, must satisfy
\begin{align*}
R_{31} \geq n(H_2(p) + p),\quad R_{12}, R_{23},\rho &\geq n.
\end{align*}
Hence the protocol in \Figureref{erasure} is optimal.
\end{thm}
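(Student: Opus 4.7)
The plan is to invoke the improved lower bounds of \Theoremref{lowerbound}, \Theoremref{conditionallowerbounds}, and \Theoremref{rand_lowerbound} with carefully chosen switched input distributions. By \Remarkref{direct-sum_result} and \Lemmaref{RI_tensorizes}, these bounds are direct-sum, so it suffices to establish the per-symbol inequalities $H(M_{31})\ge H_2(p)+p$, $H(M_{12})\ge 1$, $H(M_{23})\ge 1$, and $\rho\ge 1$ for $n=1$, using memoryless primed choices to tensorize up to block length $n$.

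First I would check which structural conditions of \Lemmaref{XYZ_inde_M123} are satisfied by the controlled-erasure function. Condition~2 holds: for every nontrivial partition of $\mathcal{Y}=\{0,1\}$, the induced sets $\mathcal{Z}_k$ both contain $\Delta$ (which is produced whenever $X=0$, irrespective of $Y$), so they cannot be disjoint. Condition~1, however, fails: the partition $\mathcal{X}_1=\{0\}$, $\mathcal{X}_2=\{1\}$ gives $\mathcal{Z}_1=\{\Delta\}$ and $\mathcal{Z}_2=\{0,1\}$, which are disjoint. Hence I may use the strengthened bound \eqref{eq:M23_WithCond} for $H(M_{23})$ but must fall back to \eqref{eq:M31_NoCond} for $H(M_{31})$.

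For $H(M_{31})$, I apply the second row of \eqref{eq:M31_NoCond} with $Y'$ independent of $X$ and $Y'\sim\text{Bern}(1/2)$. The characteristic bipartite graph of $p_{Y'Z'}$ is connected, since both $Y'=0$ and $Y'=1$ share the edge to $Z'=\Delta$; hence $RI(Y';Z')=I(Y';Z')=p\,H_2(1/2)=p$, and since $Z'$ is deterministic in $(X,Y')$ with $X\perp Y'$, one has $H(X,Z'|Y')=H(X)=H_2(p)$, summing to $H_2(p)+p$. For $H(M_{23})$, I apply the second row of \eqref{eq:M23_WithCond} with $X'\perp Y'$ and $Y'\sim\text{Bern}(1/2)$: here $X'$ is a deterministic function of $Z'$ (namely $X'=0$ exactly when $Z'=\Delta$), so $RI(X';Z')=0$, but $H(Y',Z'|X')=H(Y'|X')=1$ still yields the bound $1$. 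For $H(M_{12})$, use the second row of \eqref{eq:M12_bound} with $X''\perp Y'$, $Y'\sim\text{Bern}(1/2)$ and $X''\sim\text{Bern}(p'')$: the $(Y',Z'')$-graph is again connected through $\Delta$, giving $RI(Y';Z'')=p''$, and one computes $H(X'',Y'|Z'')=(1-p'')$; these sum to $1$ independently of $p''$. Finally, \eqref{eq:rho-M12} immediately gives $\rho\ge H(M_{12})\ge n$.

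The main subtlety is choosing the switched distributions so that the residual-information and conditional-entropy terms are simultaneously non-degenerate. The delicate point is that whenever a variable like $X'$ becomes a deterministic function of $Z'$, the term $RI(X';Z')$ collapses to zero, but the accompanying conditional-entropy term recovers the lost contribution once $Y'$ is taken uniform. Throughout, the connectedness of the characteristic bipartite graphs, engineered by the shared erasure symbol $\Delta$, is precisely what allows $RI$ to coincide with $I$ in every place we need a positive residual-information contribution.
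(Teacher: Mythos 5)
Your proposal is correct and follows essentially the same route as the paper: it verifies that only Condition~2 of \Lemmaref{XYZ_inde_M123} holds, applies the bottom rows of \eqref{eq:M31_NoCond}, \eqref{eq:M23_WithCond}, and \eqref{eq:M12_bound} with the same switched distributions (uniform independent $Y'$), and closes with $\rho \ge H(M_{12})$ via \eqref{eq:rho-M12}, tensorizing by memoryless choices exactly as in \Remarkref{direct-sum_result}. The only cosmetic difference is that you evaluate the $M_{12}$ bound for a general $\text{Bern}(p'')$ marginal of $X''$ and observe the sum $p''+(1-p'')=1$, where the paper notes the expression collapses to $H(Y')$ outright; the computations coincide.
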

\begin{proof}
It is easy to see that this function satisfies only Condition~2 of \Lemmaref{XYZ_inde_M123}, which implies $RI(Y;Z)=I(Y;Z)$; but Condition~1 of \Lemmaref{XYZ_inde_M123} is not satisfied -- in fact $RI(X;Z)=0$. Our best bounds for $H(M_{31})$ and $H(M_{23})$ are given by \eqref{eq:M31_NoCond} and \eqref{eq:M23_WithCond}, respectively. The bottom row of \eqref{eq:M31_NoCond} simplifies to the following:
\begin{align*}
H(M_{31}) &\geq \max_{p_{{X}^n{Y'}^n}}I({Y'}^n;{Z'}^n) + H(X^n|{Y'}^n).
\end{align*}
The optimum bound for $H(M_{31})$ is obtained by taking ${Y'}^n$, i.i.d., Bernoulli(1/2), independent of $X^n$; this gives $H(M_{31}) \geq n(p+H_2(p ))$. For $H(M_{23})$, the bottom row of \eqref{eq:M23_WithCond} simplifies to the following:
\begin{align*}
H(M_{23}) &\geq \max_{p_{{X'}^n}p_{{Y'}^n}} H({Y'}^n | {X'}^n).
\end{align*}
Taking ${Y'}^n$, i.i.d., Bernoulli(1/2), independent of ${X'}^n$ gives $H(M_{23}) \geq n$. For $H(M_{12})$, putting $RI(X';Z')=0$ in the bottom row of \eqref{eq:M12_bound} and simplifying further, we get the following:
\begin{align*}
H(M_{12}) &\geq \max_{p_{{X''}^n{Y'}^n}} I({Y'}^n;{Z''}^n) + H({X''}^n,{Y'}^n|{Z''}^n).
\end{align*}
Since $X''$ is a function of $Z''$ for {\sc controlled-erasure}, the above expression simplifies to $H(M_{12})\geq \sup_{p_{{Y'}^n}}H({Y'}^n)$, which gives $H(M_{12})\geq n$ by taking ${Y'}^n$, i.i.d., Bernoulli(1/2).

Finally, from \eqref{eq:rho-M12} and the above bound on $H(M_{12})$, we get
$\rho \ge H(M_{12}) \ge n$, which implies that the above protocol is randomness-optimal.
\end{proof}

{\bf 4. Secure Computation of {\sc sum}:}
The {\sc sum} function is defined as follows: Alice and Bob have one bit input $X\in\{0,1\}$ and $Y\in\{0,1\}$, respectively, and Charlie wants to compute the arithmetic sum $Z=f(X,Y)=X+Y$. \Figureref{sum} recapitulates a simple protocol for this function. This protocol requires a ternary symbol to be exchanged per computation over each link. We show in below that our bounds give $H(M_{31}),H(M_{23})\geq \log(3)$ and $H(M_{12})\geq 1.5$. Thus, while the protocol matches the lower bound on $H(M_{31})$ and $H(M_{23})$, there is a gap for $H(M_{12})$; while the protocol requires $H(M_{12})=\log(3)$, the lower bound is only $H(M_{12})\geq1.5$. We also show that this protocol is randomness-optimal, which proves a recent conjecture of \cite{LeeAb14} for three users. 
For $U,V\in\{0,1,2\}$, we write $U+V$ to denote the addition modulo-3.
\begin{figure}[htb]
\hrule height 1pt
\vspace{.05cm}
{\bf Algorithm 4:} {Secure Computation of \textsc{sum}}
\hrule
\begin{algorithmic}[1]
\REQUIRE Alice and Bob have input $X,Y\in\{0,1\}$, respectively.
\ENSURE Charlie securely computes {\sc sum} $Z=X+Y$.

\medskip

\STATE Charlie samples a uniformly distributed element $K$ from $\{0,1,2\}$ using his private randomness; sends it to Alice as $M_{\vec{31}}=K$.

\STATE Alice sends $M_{\vec{12}}=M_{\vec{31}}+X$ to Bob.

\STATE Bob sends $M_{\vec{23}}=M_{\vec{12}}+Y$ to Charlie.

\STATE Charlie outputs $Z=M_{\vec{23}}-K$.
\end{algorithmic}
\hrule
\caption{A protocol to compute {\sc sum}. The protocol requires a ternary symbol to be exchanged over all the three links. We show a lower bound of $\log(3)$ both on Alice-Charlie and Bob-Charlie links and a lower bound of 1.5 on Alice-Bob link.}
\label{fig:sum}
\end{figure}
\begin{thm}\label{thm:sum}
Any secure protocol for computing {\sc sum}, where $p_{XY}$ has full support over $\{0,1\}\times\{0,1\}$, must satisfy 
\begin{align*}
R_{31},R_{23},\rho \geq \log(3) \text{ and } R_{12} \geq 1.5.
\end{align*}
\end{thm}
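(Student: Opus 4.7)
The plan is to invoke the generic lower bounds developed earlier in the paper, with input distributions chosen to tighten each bound to the claimed value. First, I would verify that {\sc sum} satisfies both Condition~1 and Condition~2 of \Lemmaref{XYZ_inde_M123}: the only non-trivial partition of $\X=\{0,1\}$ is $\{\{0\},\{1\}\}$, for which the associated output sets are $\Z_1=\{0,1\}$ and $\Z_2=\{1,2\}$, sharing the element $1$; the same holds for partitions of $\Y$. Consequently \Theoremref{conditionallowerbounds} applies to both $R_{23}$ and $R_{31}$, \Theoremref{lowerbound} gives the bound for $R_{12}$, and \Theoremref{rand_conditionallowerbounds} gives the randomness bound.

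For $R_{31}\ge\log 3$ I would use the bottom row of \eqref{eq:M31_WithCond}. Since $Z'=X'+Y'$ is a deterministic function of $(X',Y')$, the term $H(X',Z'|Y')$ reduces to $H(X'|Y')$. Moreover, whenever $p_{X'Y'}$ has full support, the characteristic bipartite graph of $(Y',Z')$ is connected through $Z'=1$, so $RI(Y';Z')=I(Y';Z')=H(Z')-H(Z'|Y')=H(Z')-H(X'|Y')$. The two pieces telescope to $H(Z')$. Choosing $p_{X'Y'}(0,0)=p_{X'Y'}(1,1)=1/3$ and $p_{X'Y'}(0,1)=p_{X'Y'}(1,0)=1/6$ induces a uniform marginal on $Z'$ over $\{0,1,2\}$, giving $H(Z')=\log 3$. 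The bound $R_{23}\ge\log 3$ follows symmetrically from \eqref{eq:M23_WithCond}.

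For $R_{12}\ge 1.5$ I would apply the first row of \eqref{eq:M12_bound}, reading the two inner maxima under the natural convention that $p_{X'}$ is a common marginal. Again $Z''=X'+Y''$ is deterministic, so $H(X',Y''|Z'')=H(X'|Z'')$ and $RI(X';Z'')=I(X';Z'')$, giving the second piece $I(X';Z'')+H(X'|Z'')=H(X')$. Under the product distribution $p_{X'}p_{Y'}$ in the first piece, $H(Z'|Y')=H(X')$, so $RI(Y';Z')=H(Z')-H(X')$. The total collapses to $H(Z')$ with $Z'=X'+Y'$ and $X',Y'$ independent Bernoulli. Over full-support Bernoulli product distributions the maximum is at $p_{X'}=p_{Y'}=\mathrm{Bernoulli}(1/2)$, a critical point by symmetry, yielding the multinomial probabilities $(1/4,1/2,1/4)$ and $H(Z')=1.5$. (A brief check that this is the global maximum can be done by noting that uniform $Z'$ would require $p(1-p)q(1-q)=1/9>1/16$, which is impossible, and by the one-parameter slice $p=q=t$ the second derivative is negative at $t=1/2$.) The delicate point here, which is what prevents the bound from inflating above $1.5$, is the coupled interpretation of the two inner maxima.

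Finally, for $\rho\ge\log 3$ I would invoke the first bound in part~(1) of \Theoremref{rand_conditionallowerbounds}: $\rho\ge H(M_{31})+H(M_{12}|M_{31})-H(X)+H(Z|XY)$. Since {\sc sum} is deterministic, $H(Z|XY)=0$. To show $H(M_{12}|M_{31})\ge H(X)$ I would apply \eqref{eq:M12-cond-M31_ds-bound} with $Y'$ chosen uniform on $\{0,1\}$ and independent of $X$: the characteristic graph of $(X,Z')$ is connected through $Z'=1$, so $RI(X;Z')=I(X;Z')=H(X)-H(X|Z')$, and $H(X,Y'|Z')=H(X|Z')$ because $Y'=Z'-X$; the sum is $H(X)$. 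Combining with $H(M_{31})\ge\log 3$ gives $\rho\ge\log 3+H(X)-H(X)=\log 3$, as desired.
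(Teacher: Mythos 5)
Your proposal is correct and follows essentially the same route as the paper: verify Conditions~1 and~2, use the bottom rows of \eqref{eq:M31_WithCond} and \eqref{eq:M23_WithCond} with the distribution $p_{X'Y'}(0,0)=p_{X'Y'}(1,1)=1/3$, $p_{X'Y'}(0,1)=p_{X'Y'}(1,0)=1/6$ to get $\log 3$ on the 31 and 23 links, the top row of \eqref{eq:M12_bound} with independent Bernoulli$(1/2)$ inputs to get $1.5$ on the 12 link, and the first bound of \Theoremref{rand_conditionallowerbounds} (which collapses to $\rho\ge H(M_{31})$) for the randomness. Your reading of the two inner maxima in \eqref{eq:M12_bound} as sharing the marginal $p_{X'}$ is indeed the intended one, and your explicit verification that $H(M_{12}|M_{31})\ge H(X)$ via \eqref{eq:M12-cond-M31_ds-bound} just spells out the step the paper states tersely.
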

\begin{proof}
It is easy to see that \textsc{sum} satisfies Condition~1 and Condition~2 of \Lemmaref{XYZ_inde_M123}, which implies $RI(Y;Z)=I(Y;Z)$ and $RI(Z;X)=I(Z;X)$. Since $X$ can be determined from $(Y,Z)$ and $Y$ can be determined from $(X,Z)$, the bottom rows of the bounds in~\eqref{eq:M31_WithCond} and~\eqref{eq:M23_WithCond} for $H(M_{31})$ and $H(M_{23})$, respectively, simplify to the following:
\begin{align*}
H(M_{31}) &\geq \max_{p_{X'Y'}} H(Z'), \\
H(M_{23}) &\geq \max_{p_{X'Y'}} H(Z').
\end{align*}
For $H(M_{31})$, taking $p_{X'Y'}(0,0)$ = $p_{X'Y'}(1,1)=1/3$ and $p_{X'Y'}(0,1)$ = $p_{X'Y'}(1,0)=1/6$ gives $H(M_{31}),H(M_{23})\geq \log(3)$. For $H(M_{12})$, the bound in top row in \eqref{eq:M12_bound} simplifies to
\begin{align*}
H(M_{12}) &\geq \max_{p_{X'}p_{Y'}}\left( H(Z') - H(X') \right) + \max_{p_{X'Y''}} H(X').
\end{align*}
Taking $X',Y'\sim$ Bern(1/2) gives $H(M_{12})\geq 1.5$.

For the {\sc sum} function the first bound in \Theoremref{rand_conditionallowerbounds} 
simplifies to $\rho\geq H(M_{31})$. This together with the above bound on $H(M_{31})$ gives 
$\rho \geq \log(3)$, which implies randomness-optimality of the above protocol.
\end{proof}

{\bf 5. Secure Computation of {\sc and}:}
The {\sc and} function is defined as follows:
Alice and Bob have one bit input $X\in\{0,1\}$ and $Y\in\{0,1\}$, respectively, and  
Charlie wants to compute $Z=f(X,Y)=X\wedge Y$.
The best known protocol for {\sc and} first appeared in \cite{FeigeKiNa94}, and 
we recapitulate it here in \Figureref{and} (rephrased as a protocol in our model). 
This protocol requires a ternary symbol to be exchanged over Alice-Charlie (13) and Bob-Charlie (23) links, and symbols from an alphabet of size 6 over the Alice-Bob (12) link. We show in below that our bounds give $H(M_{31}),H(M_{23})\geq \log(3)$ and $H(M_{12}),\rho\geq 1.826$. Thus, while the protocol matches the lower bound on $H(M_{31})$ and $H(M_{23})$, there is a gap for $H(M_{12})$ and $\rho$; 
while the protocol requires $H(M_{12})=\rho=\log(6)\approx 2.585$, the lower bound is only $H(M_{12}),\rho\geq1.826$.
\begin{figure}[htb]
\hrule height 1pt
\vspace{.05cm}
{\bf Algorithm 5:} {Secure Computation of \textsc{and}}
\hrule
\begin{algorithmic}[1]
\REQUIRE Alice has an input bit $X$ \& Bob has a bit $Y$.
\ENSURE Charlie securely computes the {\sc and} $Z=X\wedge Y$.

\medskip

\STATE Alice samples a uniform random permutation $(\alpha,\beta,\gamma)$
of $(0,1,2)$ from her private randomness; sends it to Bob
$M_{\vec{12}}=(\alpha,\beta,\gamma)$ (using a symbol from an alphabet of
size 6).

\STATE Alice sends $\alpha$ to Charlie if $X=1$, and $\beta$ if $X=0$. Bob
sends $\alpha$ to Charlie if $Y=1$ and $\gamma$ if $Y=1$.
\begin{align*}
M_{31}=\begin{cases}
\alpha \qquad \text{if }X=1, \\
\beta  \qquad \text{if }X=0,
\end{cases}
&
\qquad
M_{23}=\begin{cases}
\alpha \qquad \text{if }Y=1, \\
\gamma  \qquad \text{if }Y=0.
\end{cases}
\end{align*}

\STATE Charlie outputs $Z=1$ if $M_{31}=M_{23}$, and 0 otherwise.
\end{algorithmic}
\hrule
\caption{A protocol to compute {\sc and} \cite{FeigeKiNa94}. The protocol requires a ternary symbol to be exchanged over the Alice-Charlie (13) and Bob-Charlie (23) links and symbols from an alphabet of size 6 over the Alice-Bob (12) link.}
\label{fig:and}
\end{figure}
\begin{thm}\label{thm:and}
Any secure protocol for computing {\sc and} for inputs $X$ and $Y$, where $p_{XY}$ has full support over $\{0,1\}\times\{0,1\}$, must satisfy
\begin{align*}
R_{31},R_{23} \geq \log(3) \text{ and } R_{12}, \rho \geq 1.826.
\end{align*}
\end{thm}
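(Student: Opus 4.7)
The plan is to apply \Theoremref{conditionallowerbounds} for the bounds on $R_{31}$ and $R_{23}$, and \Theoremref{lowerbound} together with~\eqref{eq:rho-M12} for those on $R_{12}$ and $\rho$. First I will check that AND satisfies both Condition~1 and Condition~2 of \Lemmaref{XYZ_inde_M123}: the only non-trivial partition of $\{0,1\}$ produces output sets $\mathcal{Z}_1=\{0\}$ and $\mathcal{Z}_2=\{0,1\}$, which share the element $0$. Consequently~\eqref{eq:M31_WithCond} and~\eqref{eq:M23_WithCond} are available, and for any full-support $p_{X'Y'}$ the characteristic bipartite graphs of the induced $p_{Y'Z'}$ and $p_{X'Z''}$ are connected (the three edges $(0,0)$, $(1,0)$, $(1,1)$ all carry positive mass), so each residual-information term that will appear reduces to an ordinary mutual information.

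For $R_{31}$ I will use the bottom row of~\eqref{eq:M31_WithCond}. Because $Z'=X'\wedge Y'$ is a deterministic function of $(X',Y')$, $H(X',Z'|Y')=H(X'|Y')$, and the identities $H(Z'|Y'{=}0)=0$ and $H(Z'|Y'{=}1)=H(X'|Y'{=}1)$ collapse the bound to $R_{31}\ge\sup[H_2(d)+\alpha H_2(\beta)]$, where $\alpha=\Pr[Y'{=}0]$, $\beta=\Pr[X'{=}0\mid Y'{=}0]$, $d=\Pr[Z'{=}1]$, and full support requires $d<1-\alpha$. Setting $\beta=1/2$ and, for $\alpha\in(1/2,1)$, pushing $d$ toward $1-\alpha$, the supremum approaches $H_2(1-\alpha)+\alpha$, which is maximized at $\alpha=2/3$ with value $H_2(1/3)+2/3=\log 3$; the limiting distribution places mass $1/3-o(1)$ on each of $(0,0),(1,0),(1,1)$ and $o(1)$ on $(0,1)$. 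This gives $R_{31}\ge\log 3$, and the $X\leftrightarrow Y$ symmetry of AND applied to~\eqref{eq:M23_WithCond} yields $R_{23}\ge\log 3$ by the identical argument.

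For $R_{12}$ I will use the top row of~\eqref{eq:M12_bound} (no extra conditions on the function are needed here). Fixing the shared marginal $p=\Pr[X'{=}1]$, the second inner supremum $\sup_{p_{Y''|X'}}[I(X';Z'')+H(X',Y''|Z'')]$ simplifies, using that $Z''=X'\wedge Y''$ is deterministic, to $H_2(p)+(1-p)H_2(\beta_0)$, which is maximized at $\beta_0=\Pr[Y''{=}1\mid X'{=}0]=1/2$ to give $H_2(p)+(1-p)$; the first inner supremum $\sup_{p_{Y'}}I(Y';Z')=\max_q[H_2(pq)-qH_2(p)]$ is attained at the unique $q\in(0,1)$ solving $p\log\frac{1-pq}{pq}=H_2(p)$ and admits no closed form. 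Summing these two expressions and optimizing over $p\in(0,1)$ numerically yields a maximum of approximately $1.826$, attained near $p\approx 0.475$; the bottom row of~\eqref{eq:M12_bound} produces the same value by symmetry. Finally,~\eqref{eq:rho-M12} immediately yields $\rho\ge H(M_{12})\ge 1.826$. The main obstacle I anticipate is precisely this last calculation: whereas $R_{31}$ and $R_{23}$ admit the clean analytical value $\log 3$, the $R_{12}$ and $\rho$ bounds have no closed-form optimizer and must be extracted by a univariate numerical optimization, which is why the stated constant $1.826$ is not a cleaner number.
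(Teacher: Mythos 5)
Your proposal is correct and follows essentially the same route as the paper: verifying Conditions~1 and~2 so that the residual informations reduce to mutual informations, evaluating the bottom rows of \eqref{eq:M31_WithCond} and \eqref{eq:M23_WithCond} at the near-degenerate distribution putting mass $1/3$ on $(0,0),(1,0),(1,1)$ to get $\log 3$, evaluating the top row of \eqref{eq:M12_bound} with $Y''\sim\text{Bern}(1/2)$ given $X'=0$ and a numerical optimization to get $1.826$, and invoking \eqref{eq:rho-M12} for $\rho$. The only differences are cosmetic: you additionally show $\log 3$ is the actual supremum rather than just exhibiting the achieving distribution, and your reported argmax $p\approx 0.475$ differs slightly from the paper's $p_{X'}(1)=0.456$, which is immaterial given the flatness of the objective near its maximum.
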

\begin{proof}It is easy to see that \textsc{and} satisfies Condition~1 and Condition~2 of \Lemmaref{XYZ_inde_M123}, which implies $RI(Y;Z)=I(Y;Z)$ and $RI(Z;X)=I(Z;X)$. For $H(M_{31})$ and $H(M_{23})$ the bottom rows of \eqref{eq:M31_WithCond} and \eqref{eq:M23_WithCond} simplify to the following:
\begin{align*}
H(M_{31}) \geq \max_{p_{X'Y'}} I(Y';Z') + H(X',Z'|Y'), \\
H(M_{23}) \geq \max_{p_{X'Y'}} I(X';Z') + H(Y',Z'|X').
\end{align*}
For $H(M_{31})$, take $p_{X'Y'}(0,0)=p_{X'Y'}(1,0)=p_{X'Y'}(1,1)=(1-\epsilon)/3$ and $p_{X'Y'}(0,1)=\epsilon$, where $\epsilon>0$ can be made arbitrarily small to make $H(M_{31})$ as close to $\log(3)$ as we desire. For $H(M_{23})$, take $p_{X'Y'}(0,0)=p_{X'Y'}(0,1)=p_{X'Y'}(1,1)=(1-\epsilon)/3$ and $p_{X'Y'}(1,0)=\epsilon$, where $\epsilon>0$ can be made arbitrarily small to make $H(M_{23})$ as close to $\log(3)$ as we desire.

For $H(M_{12})$, the top row of \eqref{eq:M12_bound} simplifies to
\begin{align*}
H(M_{12}) &\geq \max_{p_{X'}p_{Y'}} I(Y';Z') \\
&\qquad \qquad + \max_{p_{X'Y''}}I(X';Z'') + H(X',Y''|Z'').
\end{align*}
The expression after the second maximum simplifies to $H(X') + p_{X'}(0)H(Y''|X'=0)$, which is always upper-bounded by $H(X') + p_{X'}(0)$ and can be made equal to this by taking $Y''\sim$ Bernoulli(1/2) and independent of $X'$. Now taking $p_{X'}(1)=0.456$ and $p_{Y'}(1)=0.397$ gives $H(M_{12}) \geq 1.826$.

Finally, from \eqref{eq:rho-M12} and the above bound on $H(M_{12})$, we get
$\rho \ge H(M_{12}) \ge 1.826$, whereas the protocol requires $1+\log 3 \approx 2.585$ random bits.
\end{proof}
Here we explicitly show, through the above example {\sc and}, the progressive improvements on the communication lower bounds from applying \Theoremref{prelim_lbs} to \Theoremref{conditionallowerbounds}. Let $X,Y$ be i.i.d. binary random variables distributed uniformly in $\{0,1\}$. For the secure computation of {\sc and} for this input distribution, \Theoremref{prelim_lbs} gives $(R_{31},R_{23},R_{12})\geq(1.311,1.311,1.5)$, \Theoremref{lowerbound} gives $(R_{31},R_{23},R_{12})\geq(1.5,1.5,1.826)$, and \Theoremref{conditionallowerbounds} gives $(R_{31},R_{23},R_{12})\geq(\log(3),\log(3),1.826)$. Notice that in \Theoremref{conditionallowerbounds} we only improve bounds on $R_{31},R_{23}$ over \Theoremref{lowerbound}.

{\em Separating Secure Computation and Secret Sharing:}
Another natural separation one expects is between the amount of
communication needed when the views (or transcripts) are generated by a
secure computation protocol, versus when they are generated by an
omniscient ``dealer'' so that the security requirements are met. The latter
setting corresponds to the share sizes in a CMSS scheme (see
\Appendixref{CMSS_sampling}). Again, while such a separation is expected,
it is not very easy to establish this, especially with explicit examples.
It requires us to establish a strong lower bound for the secure computation
problem as well as provide a CMSS scheme that is better.

We establish the separation using the 3-user {\sc and} function. There is a
CMSS scheme that achieves $\log(3)\le 1.6$ bits of entropy for all three
shares $M_{12},M_{23}$, and $M_{31}$ (see \Theoremref{gap_CMSS} in \Appendixref{CMSS_sampling}). However,
\Theoremref{and} shows that in a secure computation protocol, $H(M_{12})$
should be strictly larger than this.  

{\bf Note:} We need the use of \Lemmaref{infoineq} (information inequality) only to improve the bound on $H(M_{12})$, in particular for {\sc sum, remote-ot}, and {\sc and}. Bounds on the other two links in all the functions above do not need the use of information inequality.

\section{Outer Bounds on the Rate-Region for Asymptotically Secure Computation}\label{sec:asymp_lowerbounds}
In this section we restrict ourselves to the secure computation of deterministic functions $f:\X\times\Y\to\Z$, i.e., where $p_{Z|XY}$ is a deterministic mapping of the inputs to the output. We consider block-wise computation (with block length $n$), where Alice has input $X^n=(X_1,X_2,\hdots,X_n)$, Bob has input $Y^n=(Y_1,Y_2,\hdots,Y_n)$, and Charlie wants to compute the output $Z^n=(Z_1,Z_2,\hdots,Z_n)$, where $Z_i=f(X_i,Y_i)$ and $(X_i,Y_i)\sim p_{XY}$, i.i.d.; see \Figureref{asymp_setup}. Protocol is allowed to be asymptotically secure, i.e., it can make an error in the function computation -- Charlie may produce an output $\hat{Z}^n$ such that $\Pr[\hat{Z}^n\neq Z^n]\to0$ as $n\to\infty$, and it allows for small information leakage; see \Definitionref{asymp_secure-protocol} for a formal definition. Recall that for a protocol $\Pi_n$, we define the rate quadruple $(R_{12},R_{23},R_{31},\rho)$ as $R_{ij} := \frac{1}{n}\mathbb{E}[L_{ij}]$, $i,j=1,2,3$, $i\neq j$, and $\rho := \frac{1}{n}H(M_{12},M_{23},M_{31}|X^n,Y^n)$.
\begin{figure}[tb]
\centering
\begin{tikzpicture}[>=stealth', font=\sffamily\Large\bfseries, thick]
\draw [fill=lightgray] (-2.5,0) circle [radius=0.4]; \node at (-2.5,0) {1};
\draw [fill=lightgray] (2.5,0) circle [radius=0.4]; \node at (2.5,0) {2};
\draw [fill=lightgray] (0,2) circle [radius=0.4]; \node at (0,2) {3};

\draw [<->] (-2.1,0) -- (2.1,0); \node [scale=0.8] at (0,-0.3) {$M_{12}$};
\draw [<->] (-2.15,0.25) -- (-0.3,1.7); \node [scale=0.8] at (-1.6,1.2) {$M_{31}$};
\draw [<->] (2.15,0.25) -- (0.3,1.7); \node [scale=0.8] at (1.6,1.2) {$M_{23}$};

\draw [->] (-3.8,0) -- (-2.9,0); \node [scale=0.8] at (-3.4,0.3) {${X^n}$};
\draw [->] (3.8,0) -- (2.9,0); \node [scale=0.8] at (3.5,0.3) {${Y^n}$};
\draw [->] (0,2.4) -- (0,3.1); \node [scale=0.8] at (0.4,2.75) {${\hat{Z}^n}$};

\node [right, scale=0.7] at (1.7,2.5) {$Z=f(X,Y)$};

\end{tikzpicture}
\caption{A setup for 3-user secure computation; privacy is required against single users (i.e., no collusion). Here $(X^n,Y^n)\sim p_{XY}$, i.i.d., and $Z=f(X,Y)$ where $f$ is the function being computed.}
\label{fig:asymp_setup}
\end{figure}
\begin{defn}\label{defn:asymp_secure-protocol}
For a secure computation problem $(f,p_{XY})$, the rate $(R_{12},R_{23},R_{31},\rho)$ is {\em achievable} with asymptotic security, if there exists a sequence of protocols $\Pi_n$ with rate $(R_{12},R_{23},R_{31},\rho)$, such that for every $\epsilon >0$, there is a large enough $n$, such that
\begin{align}
\Pr[\hat{Z}^n \neq Z^n] &\leq \epsilon, \label{eq:correct_cond}\\
I(M_{12},M_{31} ; Y^n | X^n ) &\leq \epsilon, \label{eq:privacy_alice} \\
I(M_{12},M_{23} ; X^n | Y^n ) &\leq \epsilon, \label{eq:privacy_bob} \\
I(M_{23},M_{31} ; X^n,Y^n | Z^n ) &\leq \epsilon. \label{eq:privacy_charlie}
\end{align}
The rate-region $\R^{\AS}$ is closure of the set of all achievable rate quadruples.
\end{defn}
Here, \eqref{eq:privacy_alice}-\eqref{eq:privacy_bob} ensure that Alice and Bob learn negligible additional information about each other's inputs, and \eqref{eq:privacy_charlie} ensures that Charlie learns negligible additional information about $(X^n,Y^n)$ than revealed by $Z^n$.

We prove bounds on the entropies $H(M_{ij})$, which, as argued in \Sectionref{prelims}, is a lower bound on the expected length of the transcript $M_{ij}$. Let $\Pi_n$ be a sequence of protocols which imply the achievability of rate $(R_{12},R_{23},R_{31},\rho)$ as per \Definitionref{asymp_secure-protocol}. Then, let $\epsilon >0$ and $n$ large enough be such that \eqref{eq:correct_cond}-\eqref{eq:privacy_charlie} are satisfied.

We need three key lemmas: \Lemmaref{cutset}, using cutset arguments, gives an upper bound on the amount of information present about inputs on different cuts; \Lemmaref{data-processing} gives a secure data-processing inequality for residual information; and \Lemmaref{infoineq}, which gives an information inequality for interactive protocols.

Our main results provide outer bounds on the rate-region for secure computation of a given function $f$ and an input distribution $p_{XY}$. The results are stated in \Theoremref{main_lbs_dependent} and \Theoremref{main_lbs_independent}. In \Subsectionref{asymp_examples}, we present some example functions for which our outer bounds are tight.

\subsection{Cutset Bounds}\label{subsec:cutset_bounds}
Our first lemma will imply that the cut separating Alice must reveal
information about $X^n$ at a rate of at least $H_{G_X}(X|Y)$ (see
Figure~\ref{fig:setup_two-user}). The rough intuition is that since Alice
is not allowed to learn any (significant amount of) new information about
Bob's input $Y^n$, this is essentially the function computation problem
with one-sided communication of Orlitsky and Roche~\cite{OrlitskyRoche} for
which the converse result there implies that Alice must send information
about $X^n$ at a rate of at least $H_{G_X}(X|Y)$.
\begin{figure}
\centering
\begin{tikzpicture}[>=stealth', font=\sffamily\Large\bfseries, thick]
\draw [fill=lightgray] (-2.5,0) circle [radius=0.4]; \node at (-2.5,0) {1};
\draw [fill=lightgray] (2.5,0) circle [radius=0.4]; \node at (2.5,0) {2};
\draw [fill=lightgray] (0,2) circle [radius=0.4]; \node at (0,2) {3};

\draw [<->] (-2.1,0) -- (2.1,0); \node [scale=0.8] at (0,-0.3) {$M_{12}$};
\draw [<->] (-2.15,0.25) -- (-0.3,1.7); \node [scale=0.8] at (-1.6,1.2) {$M_{31}$};
\draw [<->] (2.15,0.25) -- (0.3,1.7); \node [scale=0.8] at (1.6,1.2) {$M_{23}$};

\draw [->] (-3.8,0) -- (-2.9,0); \node [scale=0.8] at (-3.4,0.3) {${X^n}$};
\draw [->] (4.5,0) -- (2.9,0); \node [scale=0.8] at (4,0.3) {${Y^n}$};
\draw [->] (0.35,2.2) -- (1.6,3.25); \node [scale=0.8] at (0.9,3) {${\hat{Z}^n}$};

\draw[ultra thick, cyan] (-2.4,0.8) to [out=-10,in=100] (-1.4,-0.5);
\draw [ultra thick, cyan] (1.25,0.97) ellipse [x radius=1cm, y radius=2.25cm, rotate=53];

\end{tikzpicture}
\caption{A cut separating Alice from Bob \& Charlie. Protocol $\Pi_n$ induces a 2-user secure computation protocol between Alice and {\em combined Bob-Charlie} with privacy requirement only against Alice.}
\label{fig:setup_two-user}
\end{figure}

\begin{lem}\label{lem:cutset}
The protocol $\Pi_n$ satisfies the following:
\begin{align}
H(X^n|M_{12},M_{31}) &\leq n(H(X)-H_{G_X}(X|Y) + \epsilon_1), \label{eq:cutset_alice} \\
H(Y^n|M_{12},M_{23}) &\leq n(H(Y)-H_{G_Y}(Y|X) + \epsilon_2), \label{eq:cutset_bob} \\
H(Z^n|M_{23},M_{31}) &\leq n\epsilon_3, \label{eq:cutset_charlie}
\end{align}
where $\epsilon_1,\epsilon_2,\epsilon_3 \to 0$ as $\epsilon \to 0$.
\end{lem}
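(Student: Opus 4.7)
All three bounds exploit a cut-set view of the protocol: each inequality says that some pair of links carries ``most'' of a designated quantity ($X^n$, $Y^n$, or $Z^n$). The first two reduce to the Orlitsky--Roche one-way coding-for-computing theorem applied across the cut isolating Alice (respectively Bob), while the third is a Fano-style argument based on Charlie's approximately correct output.

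\textbf{Bound \eqref{eq:cutset_alice}.} Set $W = (M_{12}, M_{31})$. Viewing Alice on one side of the cut and a ``combined Bob--Charlie'' on the other (see \Figureref{setup_two-user}), the combined party knows $Y^n$, simulates $M_{23}$ and Charlie's output-randomness internally, and produces $\hat Z^n$, which is a randomized function of $(W, Y^n)$ satisfying $\Pr[\hat Z^n \neq Z^n] \leq \epsilon$. Privacy against Alice \eqref{eq:privacy_alice} gives $I(W; Y^n | X^n) \leq \epsilon$, i.e., an approximate Markov chain $W - X^n - Y^n$. I would invoke an approximate version of the Orlitsky--Roche converse to conclude
\begin{equation*}
I(W; X^n | Y^n) \geq n\, H_{G_X}(X|Y) - n\,\delta(\epsilon),
\end{equation*}
with $\delta(\epsilon) \to 0$ as $\epsilon \to 0$. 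Using the identity
\begin{equation*}
I(W; X^n) = I(W; Y^n) + I(W; X^n | Y^n) - I(W; Y^n | X^n),
\end{equation*}
together with $I(W; Y^n) \geq 0$ and $I(W; Y^n | X^n) \leq \epsilon$, one obtains $I(W; X^n) \geq n H_{G_X}(X|Y) - n\delta(\epsilon) - \epsilon$. Subtracting from $H(X^n) = nH(X)$ gives \eqref{eq:cutset_alice} with $\epsilon_1 = \delta(\epsilon) + \epsilon/n$. The bound \eqref{eq:cutset_bob} follows symmetrically by swapping the roles of Alice and Bob and invoking \eqref{eq:privacy_bob}.

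\textbf{Bound \eqref{eq:cutset_charlie}.} Per the protocol model in \Sectionref{prelims}, $\hat Z^n$ is a stochastic function of Charlie's transcripts $(M_{23}, M_{31})$ with fresh randomness independent of everything else. Let $\tilde Z^n = \phi(M_{23}, M_{31})$ be the MAP estimator of $Z^n$ given $(M_{23}, M_{31})$. Since MAP minimizes error probability over all decoders whose randomness is independent of $Z^n$ given the observations, $\Pr[\tilde Z^n \neq Z^n] \leq \Pr[\hat Z^n \neq Z^n] \leq \epsilon$. Fano's inequality gives $H(Z^n | \tilde Z^n) \leq 1 + \epsilon n \log|\Z|$, and because $\tilde Z^n$ is a deterministic function of $(M_{23}, M_{31})$, $H(Z^n | M_{23}, M_{31}) \leq H(Z^n | \tilde Z^n) \leq n\epsilon_3$ with $\epsilon_3 = 1/n + \epsilon\log|\Z|$.

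\textbf{Main obstacle.} The delicate step is invoking the Orlitsky--Roche converse in this setting, since the standard statement assumes an exact Markov chain $W - X^n - Y^n$ and purely one-way communication, whereas here $W = (M_{12}, M_{31})$ includes messages in both directions across the cut and the Markov chain holds only up to the slack $\epsilon$ from \eqref{eq:privacy_alice}. I expect to handle the interactive aspect by absorbing incoming and outgoing messages on the cut into a single transcript and treating the combined Bob--Charlie as a single decoder with access to this transcript, its local randomness, and $Y^n$. For the approximate Markov condition, I would couple $W$ with an auxiliary $W'$ drawn from $p_{W|X^n}$ (hence an exact function of $X^n$ and fresh randomness); Pinsker's inequality applied to the privacy bound yields a total variation of order $\sqrt{\epsilon}$ between $(W, X^n, Y^n)$ and $(W', X^n, Y^n)$, so $W'$ inherits the approximate computability of $f^n$ with error $O(\sqrt{\epsilon})$, after which the standard Orlitsky--Roche converse applies to $W'$ and transfers back to $W$.
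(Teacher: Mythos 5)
Your proposal follows the same skeleton as the paper's proof: reduce \eqref{eq:cutset_alice} to a lower bound on $I(M_{12},M_{31};X^n|Y^n)$ across the cut isolating Alice, treat the combined Bob--Charlie as an Orlitsky--Roche decoder with side information $Y^n$ and internal randomness, and obtain \eqref{eq:cutset_charlie} from Fano (your MAP-estimator phrasing is equivalent to the paper's, which conditions on $\hat Z^n$ directly via the Markov chain $\hat Z^n - (M_{23},M_{31}) - (X^n,Y^n,M_{12})$). Where you genuinely diverge is in handling the approximate Markov chain $W - X^n - Y^n$. The paper does not reduce to the exact case: it introduces a two-parameter relaxation $R_f(\delta,D)$ of the Orlitsky--Roche rate function in \eqref{eq:approx_orlitsky-roche}, proves the associated region is compact and convex and that $R_f$ is right-continuous at $(0,0)$, and then single-letterizes $I(M;X^n|Y^n)$ directly with $U_i=(M,Y^{i-1},Y_{i+1}^n,\Theta)$, showing $\sum_i I(U_i;Y_i|X_i)\le 2n\epsilon$ straight from \eqref{eq:privacy_alice} — this is the content of \Lemmaref{two-user_lower-bound}. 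Your coupling route (replace $W$ by $W'\sim p_{W|X^n}$, apply Pinsker, transfer the computability and the mutual information across a total-variation gap of order $\sqrt{\epsilon}$, then cite the exact converse) can be made to work, but at two costs worth flagging: (i) the slack degrades from $O(\epsilon)$ to $O(\sqrt{\epsilon})$, and transferring $I(\cdot\,;X^n|Y^n)$ between $W$ and $W'$ requires an Alicki--Fannes-type continuity bound for conditional entropy whose constant is $\log|\X|$ per symbol (harmless since $\X$ is finite, but it must be stated, since the alphabet of $W$ itself is unbounded); and (ii) the ``standard Orlitsky--Roche converse'' you invoke as a black box for vanishing block error already contains the continuity-at-zero-distortion argument — precisely the closedness/right-continuity machinery the paper develops in its appendix — so that ingredient is outsourced rather than avoided. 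What the paper's direct approach buys is that both slacks (Markov and distortion) are absorbed into a single relaxed rate function with one continuity argument; what yours buys is a clean modular separation between the interactive, approximately-private setting and the classical one-way result.
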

\begin{remark}
{\em Note that this is {\em unlike} the case for perfectly secure
computation. By \Lemmaref{general_cutset}, which is analogous to the above
lemma for perfectly secure computation (if we restrict $p_{Z|XY}$ there to
be a deterministic function $f$ such that the pair $(p_{XY},f)$ is in
normal form), all three conditional entropies on the left-hand-side above
are equal to zero. However, for asymptotically secure computation (even if
we restrict the pair $(p_{XY},f)$ to be in normal form), these conditional
entropies may be far from zero -- in fact,
\eqref{eq:cutset_alice}-\eqref{eq:cutset_bob} above can hold with equality
asymptotically (see \Subsectionref{asymp_examples}).}
\end{remark}
\begin{proof}
\begin{align}
&H(X^n|M_{12},M_{31}) \nonumber \\
&= I(X^n;Y^n|M_{12},M_{31}) + H(X^n|M_{12},M_{31},Y^n) \nonumber \\
&\leq I(M_{12},M_{31},X^n;Y^n) + H(X^n|M_{12},M_{31},Y^n) \nonumber \\
&= I(X^n;Y^n) + \underbrace{I(M_{12},M_{31};Y^n|X^n)}_{\leq \ \epsilon \text{ by } \eqref{eq:privacy_alice}} + H(X^n|M_{12},M_{31},Y^n) \nonumber \\
&\leq nI(X;Y) + H(X^n|Y^n) - I(X^n;M_{12}M_{31}|Y^n) + \epsilon \nonumber \\
&\leq nH(X) - I(X^n;M_{12}M_{31}|Y^n) + \epsilon. \label{eq:interim_cutset}
\end{align}
We apply cutset arguments, and use correctness \eqref{eq:correct_cond} and privacy against Alice \eqref{eq:privacy_alice} to lower-bound the second term of \eqref{eq:interim_cutset}. Consider the cut separating Alice from the other two users; $\Pi_n$ induces a two-user secure computation protocol between Alice and {\em combined Bob-Charlie} (see Figure~\ref{fig:setup_two-user}), with privacy requirement only against Alice. For $0\leq D \leq 1$, we define
\begin{align}
R_f^{\WZ}(D) := \displaystyle \min_{\substack{p_{U|XY}: \\ U-X-Y \\ \exists g: \mathbb{E}[d_H(f(X,Y),g(U,Y))] \leq D}} I(U;X|Y), \label{eq:wyner-ziv}
\end{align}
where $d_H:\mathcal{U}\times\mathcal{Y}\to\{0,1\}$ is the Hamming distortion function.
$R_f^{\WZ}(D)$ is the optimal rate of Csisz\'ar-K\"orner's~\cite{CsiszarKo78} extension (also see \cite{yamamoto82}) of Wyner-Ziv problem \cite{WynerZi76} specialized to this function computation (without any privacy) as used by Orlitsky and Roche~\cite{OrlitskyRoche}.
\begin{lem}\label{lem:two-user_lower-bound}
$I(X^n;M_{12},M_{31}|Y^n) \geq n\left(R_f^{\WZ}(0)-\delta_{\epsilon}\right)$, where $\delta_{\epsilon}\to0$ as $\epsilon\to0$.
\end{lem}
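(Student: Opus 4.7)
The strategy is a cut-set argument combined with an approximate version of the Orlitsky--Roche converse for function computation with side information. Let $W := (M_{12}, M_{31})$ and view Bob--Charlie as a combined receiver with side information $Y^n$; in the protocol model, the output $\hat{Z}^n$ is a (randomized) function of $(W, Y^n)$ obtained through Bob's and Charlie's private randomness. The two operational guarantees we use are: correctness \eqref{eq:correct_cond} gives $\Pr[\hat{Z}^n \neq Z^n] \leq \epsilon$, and privacy against Alice \eqref{eq:privacy_alice} gives $I(W; Y^n \mid X^n) \leq \epsilon$, so that $W$ is $\epsilon$-approximately conditionally independent of $Y^n$ given $X^n$.

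Following the standard Wyner--Ziv single-letterization of Orlitsky--Roche, introduce $U_i := (W, X^{i-1}, Y^{i-1}, Y_{i+1}^n)$. By the chain rule together with $H(X_i \mid X^{i-1}, Y^n) = H(X_i \mid Y_i)$ (from i.i.d.\ of the input pairs),
\[
I(W; X^n \mid Y^n) \;=\; \sum_{i=1}^n I(U_i; X_i \mid Y_i).
\]
The Markov chain $U_i - X_i - Y_i$ holds approximately: i.i.d.\ of the pairs $(X_j, Y_j)$ gives $Y_i \perp (X^{i-1}, Y^{i-1}, Y_{i+1}^n) \mid X_i$, so a chain-rule manipulation using $X_{i+1}^n \perp Y_i \mid X^i, Y^{i-1}, Y_{i+1}^n$ yields
\[
I(U_i; Y_i \mid X_i) \;\leq\; I(W; Y_i \mid X^n, Y^{i-1}, Y_{i+1}^n) \;\leq\; I(W; Y^n \mid X^n) \;\leq\; \epsilon.
\]
For the distortion, since $\hat{Z}_i$ is a randomized function of $(U_i, Y_i)$, the MAP estimator yields a deterministic $g_i$ with $\Pr[g_i(U_i, Y_i) \neq Z_i] \leq \Pr[\hat{Z}_i \neq Z_i]$, whose average over $i$ is at most $\epsilon$.

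Introduce a time-sharing variable $Q$ uniform on $[n]$ independent of all else and set $U := (U_Q, Q)$, $X := X_Q$, $Y := Y_Q$. Then $I(U; X \mid Y) = \tfrac{1}{n} I(W; X^n \mid Y^n)$, $I(U; Y \mid X) \leq \epsilon$, and there is a $g$ with $\Pr[g(U, Y) \neq f(X, Y)] \leq \epsilon$. The definition of $R_f^{\WZ}$ requires the exact Markov chain $U - X - Y$, so replace $U$ by $\widetilde{U}$ sampled from $p_{U|X}$ conditionally independently of $Y$ given $X$; Pinsker's inequality applied to $D(p_{UXY} \,\|\, p_{\widetilde{U} XY}) = I(U; Y \mid X) \leq \epsilon$ bounds their TV distance by $\sqrt{\epsilon/2}$, the distortion of $g$ under $\widetilde{U}$ is at most $\epsilon + \sqrt{\epsilon/2}$, and by definition of $R_f^{\WZ}$ and continuity of $R_f^{\WZ}(\cdot)$ at $D=0$,
\[
I(\widetilde{U}; X \mid Y) \;\geq\; R_f^{\WZ}\bigl(\epsilon + \sqrt{\epsilon/2}\bigr) \;\geq\; R_f^{\WZ}(0) - \delta'_\epsilon.
\]

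The main obstacle is transferring this bound from $I(\widetilde{U}; X \mid Y)$ back to $I(U; X \mid Y)$ with an error vanishing with $\epsilon$ and independent of $n$; a naive Fannes-type continuity-of-entropy bound would be inadequate because the alphabet of $U$ grows with $n$. The resolution uses the identity
\[
I(U; X \mid Y) - I(\widetilde{U}; X \mid Y) \;=\; I(U; Y \mid X) - \bigl(H(Y \mid \widetilde{U}) - H(Y \mid U)\bigr),
\]
whose right-hand side involves only conditional entropies of $Y$. Applying the standard continuity bound for conditional entropy in the conditioned alphabet $|\mathcal{Y}|$ (rather than in the conditioning alphabet $|\mathcal{U}|$) keeps both terms on the right uniformly small in $\epsilon$ and independent of $n$, giving the desired $\delta_\epsilon \to 0$.
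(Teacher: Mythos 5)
Your proposal is correct, and it takes a genuinely different route at the one step where all the difficulty lives: reconciling the approximate Markov chain $U_i - X_i - Y_i$ (approximate because privacy against Alice only yields $I(U_i;Y_i|X_i)\leq\epsilon$) with the exact Markov constraint in the definition of $R_f^{\WZ}$. The paper never repairs the Markov chain; instead it introduces the two-parameter relaxation $R_f(\delta,D)$ of \eqref{eq:approx_orlitsky-roche} in which the Markov constraint is weakened to $I(U;Y|X)\leq\delta$, proves closedness and convexity of the associated region (\Lemmaref{properties_OR-region}) and right-continuity at $(\delta,D)=(0,0)$ (\Lemmaref{continuity_orlitsky-roche}), and plugs in $(2\epsilon,\epsilon)$. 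You keep the one-parameter $R_f^{\WZ}(D)$ and surgically replace $U$ by $\widetilde U\sim p_{U|X}$; your identity $I(U;X|Y)-I(\widetilde U;X|Y)=I(U;Y|X)-\bigl(H(Y|\widetilde U)-H(Y|U)\bigr)$ is correct (it follows from $I(\widetilde U;X)=I(U;X)$ and $I(\widetilde U;Y|X)=0$), and your observation that the residual discrepancy must be controlled by an Alicki--Fannes-type continuity bound for conditional entropy in the alphabet of $Y$ only is exactly the right point, since a bound in the alphabet of $U$ would blow up with $n$. Your single-letterization also differs cosmetically: you place $X^{i-1}$ in $U_i$ and derandomize $\hat Z_i$ by a MAP estimator, whereas the paper places the Bob--Charlie sampling randomness $\Theta$ in $U_i$ so that $\hat Z_i=g_i(U_i,Y_i)$ is deterministic; both are valid, and your per-coordinate bound $I(U_i;Y_i|X_i)\leq\epsilon$ is in fact slightly tighter than the paper's averaged $2\epsilon$. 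One caveat: your last step still invokes right-continuity of $R_f^{\WZ}(D)$ at $D=0$, which is not automatic (convexity gives continuity only in the interior of the domain) and requires essentially the same compactness/closedness argument as \Lemmaref{properties_OR-region} and \Lemmaref{continuity_orlitsky-roche}; so your approach trades away only the $\delta$-component of the paper's relaxation, not the continuity machinery itself. This should be stated and proved (or cited) rather than asserted, but it is true, so the gap is one of completeness rather than correctness.
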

The proof of the above lemma in \Appendixref{asymp_proofs} is along the lines of the converse of the Wyner-Ziv theorem in~\cite[Section 11.3]{ElgamalKim11} except for the following complication: in the Wyner-Ziv problem, communication is one-sided, but we allow messages in both directions over multiple rounds. However, as we show in \Appendixref{asymp_proofs}, privacy against Alice, $I(M_{12},M_{31};Y^n|X^n)\leq \epsilon$, which implies that very little new information about $Y^n$ flows back to Alice, allows us to prove the lemma.

We can relate $R_f^{\WZ}(D)$ with {\em conditional graph entropy} $H_{G_X}(X|Y)$ (defined 
in \Definitionref{conditional-graph-entropy}) using the following result from \cite{OrlitskyRoche}.

\begin{lem}[\cite{OrlitskyRoche}, Theorem 2]\label{lem:orlitsky-roche_graph-entropy}
For every $p_{XY}$ and $f$ \[ R_f^{\WZ}(0)=H_{G_X}(X|Y).\]
\end{lem}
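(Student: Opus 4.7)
The plan is to prove the equality by two inequalities, showing each minimization can match the other. Both sides are conditional-mutual-information minimizations; the key bridge is the following observation: the support $\{x : p_{X|U}(x|u) > 0\}$ for any $U$ achieving zero distortion must be an independent set of the characteristic graph $G_X$. This is the conceptual heart of the proof; everything else is assembling the inequalities.

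For the easy direction $R_f^{\WZ}(0) \le H_{G_X}(X|Y)$, I would take any minimizer $p_{W|X}$ for the conditional graph entropy (so $W-X-Y$, $X \in W$, and $W$ takes values in independent sets of $G_X$) and set $U := W$. The Markov condition $U-X-Y$ is immediate. To get a valid decoder, note that whenever $p_{U|X}(w|x), p_{U|X}(w|x') > 0$ and there is some $y$ with $p_{XY}(x,y), p_{XY}(x',y) > 0$, the pair $\{x,x'\}$ lies in the independent set $w$, so it is a non-edge in $G_X$, forcing $f(x,y) = f(x',y)$. Hence $g(w,y) := f(x,y)$ for any $x$ with $p_{X|U,Y}(x|w,y) > 0$ is well defined and achieves zero distortion. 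This yields $I(U;X|Y) = I(W;X|Y) = H_{G_X}(X|Y)$.

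For the harder direction $R_f^{\WZ}(0) \ge H_{G_X}(X|Y)$, I would take any $U$ feasible for the $\WZ$ problem with zero distortion, i.e.\ satisfying $U-X-Y$ and admitting $g$ with $g(U,Y) = f(X,Y)$ almost surely, and construct an independent-set-valued $W$ as a (deterministic) function of $U$. Specifically, define $h(u) := \{x \in \X : p_{X|U}(x|u) > 0\}$ and $W := h(U)$. Then automatically $X \in W$. To see that $h(u)$ is an independent set of $G_X$, pick $x_1, x_2 \in h(u)$ and suppose for contradiction that they are adjacent, i.e.\ there exists $y$ with $p_{XY}(x_i,y) > 0$ for $i=1,2$ and $f(x_1,y) \ne f(x_2,y)$. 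The Markov chain $U-X-Y$ gives $p_{U,X,Y}(u,x_i,y) = p_U(u) p_{X|U}(x_i|u) p_{Y|X}(y|x_i) > 0$, so by zero distortion $g(u,y) = f(x_i,y)$ for both $i$, forcing $f(x_1,y) = f(x_2,y)$, a contradiction.

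With $W = h(U)$ in hand, the remaining step is information-theoretic: since $W$ is a deterministic function of $U$ and $U-X-Y$ is a Markov chain, so is $W-X-Y$ (a direct check using $p(y|x,w) = \sum_u p(y|x,u) p(u|x,w) = p(y|x)$). Moreover $I(W;X|Y) \le I(U;X|Y)$, because $I(W,U;X|Y) = I(U;X|Y)$ (as $W$ is a function of $U$) and trivially $I(W;X|Y) \le I(W,U;X|Y)$. Thus $W$ is feasible in the definition of $H_{G_X}(X|Y)$ and its objective value does not exceed $I(U;X|Y)$. Taking the infimum over $U$ gives $R_f^{\WZ}(0) \ge H_{G_X}(X|Y)$. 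The main obstacle is not technical; it is identifying the correct deterministic map from $U$ to an independent-set-valued $W$, and this is exactly the support map $h(u) = \mathrm{supp}(p_{X|U=u})$, whose independence-set property is precisely what zero distortion buys us.
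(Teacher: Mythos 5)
Your proof is correct. Note that the paper does not prove this lemma at all: it is imported verbatim as Theorem~2 of Orlitsky and Roche, so there is no in-paper argument to compare against. Your two-inequality argument is a valid self-contained proof and matches the standard one: the forward direction correctly observes that an independent-set-valued $W$ with $X\in W$ admits the well-defined decoder $g(w,y)=f(x,y)$ (any $x$ in $w$ consistent with $y$ gives the same value, by the definition of the edges of $G_X$), and the converse direction correctly identifies the support map $h(u)=\{x: p_{X|U}(x|u)>0\}$ as producing an independent set under the zero-distortion constraint, with the Markov chain $W-X-Y$ and the bound $I(W;X|Y)\le I(U;X|Y)$ following from $W$ being a deterministic function of $U$. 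The only point worth making explicit is that zero expected Hamming distortion over finite alphabets forces $g(u,y)=f(x,y)$ pointwise on every atom $(u,x,y)$ of positive probability, which is exactly what your contradiction argument uses; with that spelled out, the proof is complete.
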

\noindent From \Lemmaref{two-user_lower-bound} and \Lemmaref{orlitsky-roche_graph-entropy} we get the following:
\begin{align}
I(X^n;M_{12},M_{31}|Y^n) \geq n(H_{G_X}(X|Y)-\delta_{\epsilon}), \label{eq:entropyXgivenY_bound}
\end{align}
where $\delta_{\epsilon}\to0$ as $\epsilon\to0$.

From \eqref{eq:interim_cutset} and \eqref{eq:entropyXgivenY_bound} we get $H(X^n|M_{12},M_{31}) \leq n(H(X)-H_{G_X}(X|Y) + \epsilon_1)$, where $\epsilon_1 = \epsilon + \delta_{\epsilon}$, which proves \eqref{eq:cutset_alice}.
Similarly, by considering the cut separating Bob from Alice and Charlie, we can show the following (which proves \eqref{eq:cutset_bob}):
\begin{align}
I(Y^n;M_{12},M_{23}|X^n) \geq n(H_{G_Y}(Y|X)-\delta_{\epsilon}). \label{eq:entropyYgivenX_bound}
\end{align}
Applying Fano's inequality gives \eqref{eq:cutset_charlie} as follows:
\begin{align*}
H(Z^n|M_{23},M_{31}) &\stackrel{\text{(b)}}{=} H(Z^n|M_{23},M_{31},\hat{Z}^n) \\
&\leq H(Z^n|\hat{Z}^n) \leq 1 + \epsilon(n\log |\mathcal{Z}| -1)\leq n\epsilon_3,
\end{align*}
where (b) follows from the Markov chain $\hat{Z}^n-(M_{23},M_{31})-(X^n,Y^n,M_{12})$, and $\epsilon_3\to 0$ as $\epsilon\to 0$.
\end{proof}

\subsection{Asymptotically secure Data Processing Inequality}\label{subsec:asymp_data-processing}
To prove \Theoremref{main_lbs_dependent} and \Theoremref{main_lbs_independent}, we need to prove an asymptotic version of the secure data processing inequality of \Lemmaref{monotone}.
\begin{lem}[Asymptotically secure data processing inequality]\label{lem:data-processing}
Privacy conditions \eqref{eq:privacy_alice}-\eqref{eq:privacy_charlie} imply the following:
\begin{align*}
\frac{1}{n}RI(M_{12},M_{31},X^n;M_{23},M_{31},Z^n) &\geq RI(X;Z) -\epsilon_{31},\\
\frac{1}{n}RI(M_{12},M_{23},Y^n;M_{23},M_{31},Z^n) &\geq RI(Y;Z) - \epsilon_{23},\\
\frac{1}{n}RI(M_{12},M_{31},X^n;M_{12},M_{23},Y^n) &\geq RI(X;Y) - \epsilon_{12},
\end{align*}
where $\epsilon_{12},\epsilon_{23},\epsilon_{31}\to0$ as $\epsilon\to 0$.
\end{lem}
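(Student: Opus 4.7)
By symmetry it suffices to prove the first inequality; let $A := (M_{12},M_{31},X^n)$ and $B := (M_{23},M_{31},Z^n)$, and focus on $\frac{1}{n}RI(A;B) \geq RI(X;Z) - \epsilon_{31}$. The plan is to mirror the natural proof of \Lemmaref{monotone} (Wolf--Wullschleger), which uses the alternate characterisation \eqref{eq:residual_info} of residual information, and then robustify it by tracking the slack introduced by the approximate Markov chains supplied by the privacy conditions.

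Let $Q^* := A\sqcap B$ be the witness attaining the minimum in \eqref{eq:residual_info}, so that $RI(A;B) = I(A;B\mid Q^*)$. Since $Q^*$ is a function of $A$, $I(Q^*;Z^n\mid X^n) \leq I(A;Z^n\mid X^n) = I(M_{12},M_{31};Z^n\mid X^n)$; similarly, since $Q^*$ is a function of $B$, $I(Q^*;X^n\mid Z^n) \leq I(M_{23},M_{31};X^n\mid Z^n)$. In the exact Markov setting both right-hand sides would vanish, making $Q^*$ a feasible witness in \eqref{eq:residual_info} for $RI(X^n;Z^n)$; together with the data-processing inequality $I(A;B\mid Q^*) \geq I(X^n;Z^n\mid Q^*)$, this would give $RI(A;B) \geq RI(X^n;Z^n) = n\,RI(X;Z)$, using \Lemmaref{RI_tensorizes}.

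Here the Markov chains hold only approximately: \eqref{eq:privacy_alice} gives $I(M_{12},M_{31};Z^n\mid X^n) \leq \epsilon$ and \eqref{eq:privacy_charlie} gives $I(M_{23},M_{31};X^n\mid Z^n)\leq \epsilon$, so $Q^*$ is only ``$\epsilon$-feasible'' for $RI(X^n;Z^n)$. The required conclusion therefore reduces to the robust claim: any $Q$ with $I(Q;Z^n\mid X^n),\,I(Q;X^n\mid Z^n) \leq \epsilon$ satisfies $I(X^n;Z^n\mid Q) \geq nRI(X;Z) - n\epsilon_{31}$ with $\epsilon_{31}\to 0$ as $\epsilon\to 0$. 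I would prove this via coupling: use Pinsker's inequality to construct a joint distribution for $(X^n,Z^n,\tilde Q)$ within total variation $O(\sqrt{n\epsilon})$ of that of $(X^n,Z^n,Q^*)$, with $\tilde Q$ exactly satisfying the two Markov chains; invoke \Lemmaref{monotone} on $\tilde Q$ to get $I(X^n;Z^n\mid\tilde Q) \geq n\,RI(X;Z)$; and use the Alicki--Fannes--Winter continuity of conditional mutual information on the product alphabet $(\mathcal{X}\times\mathcal{Z})^n$ to transfer the bound back to $Q^*$ with additive $o(n)$ slack.

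Putting the pieces together,
\[ \tfrac{1}{n}RI(A;B) = \tfrac{1}{n}I(A;B\mid Q^*) \geq \tfrac{1}{n}I(X^n;Z^n\mid Q^*) \geq RI(X;Z) - \epsilon_{31}, \]
as required; the remaining two inequalities follow from the identical template applied to the user pairs $(2,3)$ and $(1,2)$, using privacy against Bob with privacy against Charlie, and privacy against Alice with privacy against Bob, respectively. The main technical obstacle is the coupling/continuity step, because G\'acs--K\"orner-type quantities are known to be discontinuous in the joint distribution on a single letter; the argument is rescued by the i.i.d.\ product structure, which makes the benchmark $H(X^n\sqcap Z^n) = nH(X\sqcap Z)$ linear in $n$ and thereby absorbs the continuity slack on the per-symbol scale.
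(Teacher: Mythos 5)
Your first step is sound and matches the paper's: taking $Q^*=A\sqcap B$ (with $A=(M_{12},M_{31},X^n)$, $B=(M_{23},M_{31},Z^n)$), observing that privacy against Alice and against Charlie make $Q^*$ an ``$\epsilon$-feasible'' witness for $RI(X^n;Z^n)$ in the sense of \eqref{eq:residual_info}, and using $I(A;B\mid Q^*)\geq I(X^n;Z^n\mid Q^*)$. This correctly reduces the lemma to the robust claim that any $Q$ with $I(Q;Z^n\mid X^n),I(Q;X^n\mid Z^n)\leq\epsilon$ satisfies $I(X^n;Z^n\mid Q)\geq n\,RI(X;Z)-n\epsilon_{31}$, which is exactly the quantity the paper calls $RI_{n\epsilon}(X^n;Z^n)$ (up to the normalization of the constraints).

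The gap is in your proof of that robust claim. The coupling step --- ``use Pinsker's inequality to construct $\tilde Q$ within total variation $O(\sqrt{n\epsilon})$ of $Q$, with $\tilde Q$ exactly satisfying the two Markov chains'' --- is unjustified, and it is essentially the entire difficulty of the lemma. Pinsker lets you replace $p_{Q\mid X^nZ^n}$ by $p_{Q\mid X^n}$ to enforce \emph{one} Markov chain at small TV cost, but a $Q$ satisfying \emph{both} chains $Q-X^n-Z^n$ and $Q-Z^n-X^n$ must (by the double-Markov lemma) depend on $(X^n,Z^n)$ only through the common part $X^n\sqcap Z^n$, and there is no generic way to project an approximately double-Markov $Q$ onto this rigid set while controlling TV distance; if such a projection always existed, the right-continuity of the relaxed residual information at $\epsilon=0$ would be trivial, whereas it is a genuinely nontrivial fact. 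Appealing to the product structure does not rescue this: the $Q$ at hand is an arbitrary $n$-letter variable (a function of the protocol views), and the obstruction is the per-$Q$ projection, not the scaling of $H(X^n\sqcap Z^n)$. The paper avoids this entirely by a different route: it first \emph{single-letterizes} the relaxed minimization via a time-sharing variable $T$ and auxiliary $Q_T=(Q,X^{T-1},Z^{T-1})$, obtaining $\tfrac1n RI_{n\epsilon}(X^n;Z^n)\geq RI_\epsilon(X;Z)$ on the fixed single-letter alphabet, and only then proves right-continuity of $\epsilon\mapsto RI_\epsilon(X;Z)$ at $\epsilon=0$, using the closedness of the tension region $\mathfrak{T}(X;Z)$ from \cite{PrabhakaranPr14} (which relies on a Carath\'eodory cardinality bound and compactness, available only at the single-letter level and yielding an $\epsilon_{31}$ that is uniform in $n$). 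To repair your argument you would need to either reproduce this single-letterization-plus-continuity step or supply an actual proof of the coupling claim; as written, the coupling assumes the conclusion. (A minor point: the exact-Markov benchmark $I(X^n;Z^n\mid\tilde Q)\geq RI(X^n;Z^n)=n\,RI(X;Z)$ follows from \eqref{eq:residual_info} and \Lemmaref{RI_tensorizes}, not from \Lemmaref{monotone}.)
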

We prove this Lemma in \Appendixref{asymp_proofs}. Our proof only uses ``weak'' privacy constraints, i.e., the privacy conditions are upper bounded by $n\epsilon$, as opposed to the ``strong'' ones in \eqref{eq:privacy_alice}-\eqref{eq:privacy_charlie} which are upper bounded by $\epsilon$.

\subsection{Main Lower Bounds}\label{subset:main_lower-bounds}

\begin{thm}\label{thm:main_lbs_dependent}
For a secure computation problem $(f,p_{XY})$, if $(R_{12},R_{23},R_{31},\rho)\in\R^{\AS}$, then
\begin{align*}
R_{12} &\geq H_{G_X}(X|Y) + H_{G_Y}(Y|X) - H(Z) \\
&\hspace{3cm} + \max\{RI(X;Z),RI(Y;Z)\}, \\
R_{23} &\geq H_{G_Y}(Y|X) + RI(X;Z), \\
R_{31} &\geq H_{G_X}(X|Y) + RI(Y;Z),\\
\rho &\geq H_{G_X}(X|Y) + H_{G_Y}(Y|X) - H(Z) \\
&\qquad \qquad + RI(X;Z) + RI(Y;Z), \\
\rho &\geq H_{G_X}(X|Y) + H_{G_Y}(Y|X) - I(X;Y)   \\
&\ + RI(X;Y) + \max\{RI(X;Z),RI(Y;Z)\} - H(Z).
\end{align*}
\end{thm}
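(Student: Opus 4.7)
The plan is to derive each of the five bounds by combining the three key ingredients set up immediately before the theorem: the cutset bound (\Lemmaref{cutset}), the asymptotically secure data-processing inequality (\Lemmaref{data-processing}), and chain-rule decompositions. The broad strategy is that each ``$H_{G_X}(X|Y)$'' or ``$H_{G_Y}(Y|X)$'' summand should come from a cutset/Orlitsky--Roche argument, each ``$RI$'' summand should come from \Lemmaref{data-processing}, the $-H(Z)$ correction should come from $H(Z^n|M_{23},M_{31})\leq n\epsilon_3$, and the privacy conditions should be used both to control cut entropies and to furnish the approximate Markov structure needed by the data-processing inequality.

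For $R_{31}\geq H_{G_X}(X|Y)+RI(Y;Z)$, I would start from $H(M_{31})\geq H(M_{31}\mid M_{23})$ and split as $H(M_{31}\mid M_{23}) = I(M_{31};X^n\mid M_{23}) + H(M_{31}\mid M_{23},X^n)$. The first piece carries the graph-entropy contribution: expanding $I(M_{31};X^n\mid M_{23})=H(X^n\mid M_{23})-H(X^n\mid M_{23},M_{31})$, I would lower-bound $H(X^n\mid M_{23})$ using privacy against Bob and upper-bound $H(X^n\mid M_{23},M_{31})$ via \Lemmaref{cutset} (after introducing $M_{12}$ into the conditioning and arguing the additional $I(X^n;M_{12}\mid M_{23},M_{31})$ term is controlled). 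The second piece carries the $RI$ contribution: applying \Lemmaref{data-processing} to the views $V_2,V_3$, taking $Q=M_{23}$ in the definition \eqref{eq:residual_info} of residual information, and absorbing $H(Z^n\mid M_{23},M_{31})\leq n\epsilon_3$ from \Lemmaref{cutset}, one obtains $I(M_{12},Y^n;M_{31}\mid M_{23})\geq nRI(Y;Z)-o(n)$, which gives $H(M_{31}\mid M_{23},X^n)\geq nRI(Y;Z)-o(n)$ after checking that privacy against Alice lets us replace the $X^n$-conditioning at vanishing cost. Finally one divides by $n$ and sends $\epsilon\to 0$.

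The bound $R_{23}\geq H_{G_Y}(Y|X)+RI(X;Z)$ follows by swapping the roles of Alice and Bob and invoking the first inequality of \Lemmaref{data-processing} instead of the second. For $R_{12}\geq H_{G_X}(X|Y)+H_{G_Y}(Y|X)-H(Z)+\max\{RI(X;Z),RI(Y;Z)\}$, I would decompose $H(M_{12})$ so that both cutset bounds appear: intuitively, $M_{12}$ must simultaneously transport Alice's $X$-content and Bob's $Y$-content across the cuts that respectively separate Bob--Charlie from Alice and Alice--Charlie from Bob, while the $-H(Z)$ term arises because the overlap between the $X$- and $Y$-information each side needs to send has entropy at least $H(Z)$ (since $Z$ is jointly computable from $M_{23},M_{31}$ by \Lemmaref{cutset}). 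The $\max$ over $RI(X;Z),RI(Y;Z)$ comes from choosing whichever of the first two inequalities of \Lemmaref{data-processing} is stronger for the given $p_{XY}$.

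For the two bounds on $\rho$, I would use $\rho=\tfrac{1}{n}H(M_{12},M_{23},M_{31}\mid X^n,Y^n)=\tfrac{1}{n}H(M_{12},M_{23},M_{31})-\tfrac{1}{n}I(M_{12},M_{23},M_{31};X^n,Y^n)$. Expanding the joint entropy via a chain rule of the form $H(M_{12},M_{23},M_{31})=H(M_{ij})+H(M_{jk}\mid M_{ij})+H(M_{ki}\mid M_{ij},M_{jk})$ and plugging in the per-link bounds (and their obvious refinements, conditioned on one of the other transcripts) from the first three parts yields the first $\rho$-inequality, while upper-bounding the mutual information term by $H(X^n,Y^n)=nH(X,Y)$ and using $H(Z^n\mid M_{23},M_{31})\leq n\epsilon_3$ to introduce the $-H(Z)$ term and privacy against Charlie to introduce the $-I(X;Y)+RI(X;Y)$ combination yields the second.

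The main obstacle is obtaining the $H_{G_X}(X|Y)$ and $RI(Y;Z)$ (and analogous) contributions \emph{additively} rather than just as a max: a naive chain rule $H(M_{31})\geq I(M_{31};X^n\mid M_{12})$ plus \Lemmaref{cutset} loses an additive $nI(X;Y)$ slack because privacy against Bob only controls $I(X^n;M_{12}\mid Y^n)$, not $I(X^n;M_{12})$. The fix is to condition on $M_{23}$ (rather than $M_{12}$) in the decomposition, so that the same $M_{23}$ simultaneously underlies the $RI(Y;Z)$ argument via \Lemmaref{data-processing} and is ``absorbed'' into the common-part correction $H(V_2\sqcap V_3)\geq H(M_{23})$; verifying that the leftover terms after this identification are all $o(n)$ under the asymptotic privacy bounds is the delicate step.
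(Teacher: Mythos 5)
Your high-level architecture (cutset bounds for the graph-entropy terms, \Lemmaref{data-processing} for the $RI$ terms, conditioning on the ``opposite'' transcript so that the contributions add) is the same as the paper's, but the concrete decomposition you propose for the single-link bounds does not go through. Take $R_{31}$: you split $H(M_{31}\mid M_{23})$ as $I(M_{31};X^n\mid M_{23})+H(M_{31}\mid M_{23},X^n)$ and assign $H_{G_X}(X|Y)$ to the first piece and $RI(Y;Z)$ to the second. Both assignments fail. \Lemmaref{cutset} controls $H(X^n\mid M_{12},M_{31})$, not $H(X^n\mid M_{23},M_{31})$, and the correction $I(X^n;M_{12}\mid M_{23},M_{31})$ that you propose to ``control'' is not small: in the finite-field addition protocol of \Figureref{addition} with uniform independent inputs it is about $nH(X|Z)=n\log|\mathbb{F}|$, and indeed there $I(M_{31};X^n\mid M_{23})=0$ while $nH_{G_X}(X|Y)=n\log|\mathbb{F}|>0$, so the inequality ``first piece $\gtrsim nH_{G_X}(X|Y)$'' is simply false. (Even apart from this, lower-bounding $H(X^n\mid M_{23})$ relative to $H(X|Y)$ while upper-bounding $H(X^n\mid M_{23},M_{31})$ relative to $H(X)$ would cost an additive $nI(X;Y)$.) For the second piece, replacing $Q=M_{23}$ by $Q=(M_{23},X^n)$ in the minimization \eqref{eq:residual_info} is not legitimate: $(M_{23},X^n)$ does not even approximately satisfy the required Markov conditions with respect to the two views (e.g.\ $I(X^n;M_{31},Z^n\mid M_{12},M_{23},Y^n)$ is of order $n$ in the same example), and privacy against Alice does not rescue this. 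The ``delicate step'' you flag at the end is exactly where the argument breaks, and the fix you sketch (absorbing into $H(V_2\sqcap V_3)$) does not address it.

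The repair is a different split of the same quantity, which is what the paper does: write $H(M_{31}\mid M_{23})=H(M_{31}\mid M_{12},M_{23},Y^n)+I(M_{31};M_{12},Y^n\mid M_{23})$. In the first term, introduce $X^n$ and play $H(X^n\mid M_{12},M_{23},Y^n)\ge H(X^n\mid Y^n)-\epsilon$ (privacy against Bob, \eqref{eq:privacy_bob}) against $H(X^n\mid M_{12},M_{31},Y^n)\le n(H(X|Y)-H_{G_X}(X|Y)+\delta_\epsilon)$ from \eqref{eq:entropyXgivenY_bound}; both are referenced to $H(X^n\mid Y^n)$, so no $I(X;Y)$ is lost. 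In the second term, introduce $Z^n$, take $Q=M_{23}$ alone in \eqref{eq:residual_info} (it is literally common to both views, so the Markov conditions hold exactly), apply \Lemmaref{data-processing}, and pay only $H(Z^n\mid M_{23},M_{31})\le n\epsilon_3$. The $R_{12}$ bound is obtained from the analogous split $H(M_{12}\mid M_{31})=H(M_{12},X^n\mid M_{23},M_{31})-H(X^n\mid M_{12},M_{31})+I(M_{12},X^n;M_{23}\mid M_{31})$, with $-H(Z)$ emerging from $H(X^n,Y^n\mid M_{23},M_{31},Z^n)\ge nH(X,Y|Z)-\epsilon$; and the $\rho$ bounds come from $H(M_{12},M_{31}\mid X^n,Y^n)\ge H(M_{12},M_{31},X^n)-H(X^n)-\epsilon$ expanded two ways, not from the three-transcript chain rule minus $H(X,Y)$ that you sketch, which discards more than the paper's $-H(X^n)$ and has not been shown to recover the stated constants.
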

\begin{proof}
\begin{align}
&H(M_{12}) \geq H(M_{12}|M_{31}) \nonumber \\
&= H(M_{12}|M_{31},M_{23}) + I(M_{12};M_{23}|M_{31}) \nonumber \\
&= H(M_{12},X^n|M_{31},M_{23}) - H(X^n|M_{12},M_{31},M_{23}) \nonumber \\
&\qquad \quad + I(M_{12},X^n;M_{23}|M_{31}) - I(X^n;M_{23}|M_{12},M_{31}) \nonumber \\
&= H(M_{12},X^n|M_{31},M_{23}) - H(X^n|M_{12},M_{31}) \nonumber \\
&\qquad \quad + I(M_{12},X^n;M_{23}|M_{31}) \label{eq:M12_useful-later} \\
&\geq H(M_{12},X^n|M_{31},M_{23}) + I(M_{12},X^n;M_{23}|M_{31}) \nonumber \\
&\qquad \quad - n(H(X)-H_{G_X}(X|Y) + \epsilon_1), \label{eq:M12_interim}
\end{align}
where \eqref{eq:M12_interim} follows from \eqref{eq:cutset_alice}. The first two terms of \eqref{eq:M12_interim} can be bounded easily as follows:
{\allowdisplaybreaks
\begin{align}
&H(M_{12},X^n|M_{31},M_{23}) \nonumber \\
&= H(M_{12},X^n,Y^n|M_{31},M_{23}) - H(Y^n|M_{12},M_{23},M_{31},X^n) \nonumber \\
&\geq H(X^n,Y^n|M_{31},M_{23},Z^n) - H(Y^n|M_{12},M_{23},X^n) \nonumber \\
&\geq n(H(X,Y|Z)-\epsilon/n) - n(H(Y|X)-H_{G_Y}(Y|X)+\delta_{\epsilon}). \nonumber \\
&= n(H(X)-H(Z) + H_{G_Y}(Y|X) - \epsilon/n - \delta_{\epsilon}) \label{eq:M12_first_interim}
\end{align}
In the last inequality we use privacy against Charlie \eqref{eq:privacy_charlie} and $H(Y^n|M_{12},M_{23},X^n) \leq n(H(Y|X)-H_{G_Y}(Y|X)+\delta_{\epsilon})$ from \eqref{eq:entropyYgivenX_bound}.
}
{\allowdisplaybreaks
\begin{align}
& I(M_{12},X^n;M_{23}|M_{31}) \nonumber \\
&= I(M_{12},X^n;M_{23},Z^n|M_{31}) - \underbrace{I(M_{12},X^n;Z^n|M_{23},M_{31})}_{\leq \ H(Z^n|M_{23},M_{31}) \ \leq \ n\epsilon_3 \text{ by } \eqref{eq:cutset_charlie}} \nonumber \\
&\geq RI(M_{12},M_{31},X^n;M_{23},M_{31},Z^n) - n\epsilon_3 \label{eq:M12_second_intermediate} \\
&\geq n(RI(X;Z) - \epsilon_{31} - \epsilon_3) \quad \text{(by \Lemmaref{data-processing})}, \label{eq:M12_second_interim}
\end{align}
where \eqref{eq:M12_second_intermediate} follows from the definition of residual information in \eqref{eq:residual_info} by taking $U=(M_{12},M_{31},X^n)$, $V=(M_{23},M_{31},Z^n)$, and $Q=M_{31}$.
}
Substituting from \eqref{eq:M12_first_interim} and \eqref{eq:M12_second_interim} into \eqref{eq:M12_interim} and simplifying further, we get:
\begin{align*}
H(M_{12}) &\geq n(H_{G_X}(X|Y) + H_{G_Y}(Y|X) - H(Z) \\
&\hspace{4cm} + RI(X;Z) - \epsilon'_{12}),
\end{align*}
where $\epsilon'_{12}=\epsilon/n + \epsilon_1 + \epsilon_3 + \epsilon_{31} + \delta_{\epsilon}$ and  $\epsilon'_{12}\to0$ as $\epsilon\to0$.
By symmetry and letting $\epsilon \downarrow 0$, we have
\begin{align*}
R_{12} &\geq H_{G_X}(X|Y) + H_{G_Y}(Y|X) - H(Z) \\
&\hspace{3cm} + \max\{RI(X;Z),RI(Y;Z)\}.
\end{align*}
The remaining bounds on $R_{23}, R_{31}$ and $\rho$ are proved below.
{\allowdisplaybreaks
\begin{align*}
&H(M_{23}) \geq H(M_{23}|M_{31}) \\
&= H(M_{23}|M_{12},M_{31},X^n) + I(M_{23};M_{12},X^n|M_{31}) \\
&= H(M_{23},Y^n|M_{12},M_{31},X^n) - H(Y^n|M_{12},M_{23},M_{31},X^n) \\
&\quad + I(M_{23},Z^n;M_{12},X^n|M_{31}) - I(Z^n;M_{12},X^n|M_{23},M_{31}) \\
&\geq \underbrace{H(Y^n|M_{12},M_{31},X^n)}_{\geq\ H(Y^n|X^n) -\epsilon \text{ by } \eqref{eq:privacy_alice}} - \underbrace{H(Y^n|M_{12},M_{23},X^n)}_{\leq\ n(H(Y|X)-H_{G_Y}(Y|X)+\delta_{\epsilon}) \text{ by } \eqref{eq:entropyYgivenX_bound}} \\
&\quad + \underbrace{RI(M_{23},M_{31},Z^n;M_{12},M_{31},X^n)}_{\geq\ n(RI(Z;X)-\epsilon_{31})\text{ by \Lemmaref{data-processing}}} - \underbrace{H(Z^n|M_{23},M_{31})}_{\leq\ n\epsilon_3 \text{ by } \eqref{eq:cutset_charlie}} \\
&\geq n(H_{G_Y}(Y|X) + RI(X;Z) - \delta_{23}) \\
&\qquad (\text{where } \delta_{23}=\epsilon/n + \epsilon_3 + \epsilon_{31} + \delta_{\epsilon}, \text{ and } \delta_{23}\to0 \text{ as }\epsilon\to0)
\end{align*}
}
By letting $\epsilon \downarrow 0$, we get
\begin{align*}
R_{23}\geq H_{G_Y}(Y|X) + RI(X;Z).
\end{align*}
Similarly, we can prove the bound on $H(M_{31})$ by first expanding as $H(M_{31}) \geq H(M_{31}|M_{23})$ and then proceed as in $H(M_{23})$. For the rate of private randomness $\rho$ required, we bound $H(M_{12},M_{23},M_{31}|X^n,Y^n)$ as follows:
{\allowdisplaybreaks
\begin{align}
n\rho_n &= H(M_{12},M_{23},M_{31}|X^n,Y^n) \nonumber \\
&\geq H(M_{12},M_{31}|X^n,Y^n) \nonumber \\
&= H(M_{12},M_{31}|X^n) - \underbrace{I(M_{12},M_{31};Y^n|X^n)}_{\leq \ \epsilon \text{ by } \eqref{eq:privacy_alice}} \nonumber \\
&\geq H(M_{12},M_{31},X^n) - H(X^n) - \epsilon. \label{eq:rand_interim}
\end{align}
We bound the first term of \eqref{eq:rand_interim} as follows:
}
{\allowdisplaybreaks
\begin{align}
&H(M_{12},M_{31},X^n) \nonumber \\
&= H(M_{31}) + H(M_{12}|M_{31}) + H(X^n|M_{12},M_{31}). \label{eq:rand_first-way_interim}
\end{align}
We can bound the second and third term of \eqref{eq:rand_first-way_interim} together as follows:
\begin{align}
&H(M_{12}|M_{31}) + H(X^n|M_{12},M_{31}) \nonumber \\
&\qquad \geq H(M_{12},X^n|M_{23},M_{31}) + I(M_{12},X^n;M_{23}|M_{31}) \nonumber \\
&\qquad \geq n(H(X) - H(Z) + H_{G_Y}(Y|X) -\epsilon/n - \delta_{\epsilon}) \nonumber \\
&\qquad \quad + n(RI(X;Z)-\epsilon_{31}-\epsilon_3), \label{eq:rand_first-way_interim2}
\end{align}
where the first inequality follows from \eqref{eq:M12_useful-later}; \eqref{eq:rand_first-way_interim2} follows from \eqref{eq:M12_first_interim} and \eqref{eq:M12_second_interim}. We lower-bound the first term of \eqref{eq:rand_first-way_interim} as follows:
}
{\allowdisplaybreaks
\begin{align}
&H(M_{31}) \geq H(M_{31}|M_{23}) \nonumber \\
&= H(M_{31}|M_{12},M_{23},Y^n) + I(M_{31};M_{12},Y^n|M_{23}) \nonumber \\
&= H(M_{31},X^n|M_{12},M_{23},Y^n) - H(X^n|M_{12},M_{23},M_{31},Y^n) \nonumber \\
&\qquad + I(M_{31},Z^n;M_{12},Y^n|M_{23}) -I(Z^n;M_{12},Y^n|M_{23},M_{31}) \nonumber \\
&\stackrel{\text{(a)}}{\geq} \underbrace{H(X^n|M_{12},M_{23},Y^n)}_{\geq\ H(X^n|Y^n)-\epsilon \text{ by } \eqref{eq:privacy_bob}} - \underbrace{H(X^n|M_{12},M_{31},Y^n)}_{\leq\ n(H(X|Y) - H_{G_X}(X|Y)+\delta_{\epsilon}) \text{ by }\eqref{eq:entropyXgivenY_bound}} \nonumber \\
&\qquad + RI(M_{23},M_{31},Z^n;M_{12},M_{23},Y^n) -H(Z^n|M_{23},M_{31}) \nonumber \\
&\geq n(H_{G_X}(X|Y)-\epsilon/n-\delta_{\epsilon}) + n(RI(Y;Z)-\epsilon_{23}) - n\epsilon_3, \label{eq:rand_first-way_interim1}
\end{align}
where, in (a) we use the definition of residual information \eqref{eq:residual_info} and simple Shannon information inequalities. In \eqref{eq:rand_first-way_interim1} we use \Lemmaref{data-processing} and \eqref{eq:cutset_charlie}.
}
From \eqref{eq:rand_interim}-\eqref{eq:rand_first-way_interim1} and letting $\epsilon\downarrow0$, we get the following:
\begin{align}
\rho &\geq H_{G_X}(X|Y) + H_{G_Y}(Y|X) + RI(X;Z) \nonumber \\
&\qquad \qquad + RI(Y;Z) - H(Z). \label{eq:rand_first-bound} 
\end{align}
There is another way to bound the first term of \eqref{eq:rand_interim} as follows:
{\allowdisplaybreaks
\begin{align}
&H(M_{12},M_{31},X^n) \nonumber \\
&= H(M_{12}) + H(M_{31}|M_{12}) + H(X^n|M_{12},M_{31}) \nonumber \\
&\geq H(M_{12}|M_{31}) + H(X^n|M_{12},M_{31}) + H(M_{31}|M_{12}). \label{eq:rand_second-way_interim}
\end{align}
The first two terms of \eqref{eq:rand_second-way_interim} can be bounded as in \eqref{eq:rand_first-way_interim2}.
We bound the last term of \eqref{eq:rand_second-way_interim} as follows:
}
{\allowdisplaybreaks
\begin{align}
&H(M_{31}|M_{12}) \nonumber \\
&= H(M_{31}|M_{12},M_{23}) + I(M_{31};M_{23}|M_{12}) \nonumber \\
&= H(M_{31}|M_{12},M_{23},Y^n) + I(M_{31};Y^n|M_{12},M_{23}) \nonumber \\
&\hspace{3cm}+ I(M_{31};M_{23}|M_{12}) \nonumber \\
&= H(M_{31},X^n|M_{12},M_{23},Y^n) - H(X^n|M_{12},M_{23},M_{31},Y^n) \nonumber \\
&\hspace{3cm} + I(M_{31};M_{23},Y^n|M_{12}) \nonumber \\
&\geq \underbrace{H(X^n|M_{12},M_{23},Y^n)}_{\geq \ H(X^n|Y^n) - \epsilon \text{ by }\eqref{eq:privacy_bob}} + \underbrace{I(M_{31},X^n;M_{23},Y^n|M_{12})}_{\geq\ n(RI(X;Y)-\epsilon_{12}) \text{ by } \eqref{eq:residual_info} \text{ and \Lemmaref{data-processing}}} \nonumber \\
& - I(X^n;M_{23},Y^n|M_{12},M_{31}) - H(X^n|M_{12},M_{23},M_{31},Y^n) \nonumber \\
&\geq H(X^n|Y^n) - \epsilon + n(RI(X;Y)-\epsilon_{12}) - H(X^n|M_{12},M_{31}) \nonumber \\
&\geq n(H_{G_X}(X|Y) + RI(X;Y) - I(X;Y) -\epsilon/n - \epsilon_1 - \epsilon_{12}). \label{eq:rand_second-way_interim2}
\end{align}
Last inequality \eqref{eq:rand_second-way_interim2} follows from \eqref{eq:cutset_alice}.
}
From \eqref{eq:rand_interim}, \eqref{eq:rand_first-way_interim2}, \eqref{eq:rand_second-way_interim}, \eqref{eq:rand_second-way_interim2}, and by letting $\epsilon\downarrow0$, we get the following:
\begin{align*}
\rho &\geq H_{G_X}(X|Y) + H_{G_Y}(Y|X) + RI(X;Y) \\
&\qquad \qquad + RI(X;Z)  - I(X;Y) - H(Z).
\end{align*}
By symmetry, we have
\begin{align}
\rho &\geq H_{G_X}(X|Y) + H_{G_Y}(Y|X) + RI(X;Y) \nonumber \\
&\ + \max\{RI(X;Z),RI(Y;Z)\}  - I(X;Y) - H(Z). \label{eq:rand_second-bound}
\end{align}
\eqref{eq:rand_first-bound} and \eqref{eq:rand_second-bound} together prove the bounds on $\rho$ in \Theoremref{main_lbs_dependent}.
\end{proof}
{
If the input distribution $p_{XY}$ is a product distribution, i.e., $p_{XY}=p_Xp_Y$, then we 
can improve \Theoremref{main_lbs_dependent}.
In the case of independent inputs we can assume, without loss of generality, that $p_X$ and 
$p_Y$ have full support; and as observed earlier, the input distribution having full support 
allows us to assume, without loss of generality, that the function $f:\X\times\Y\to\Z$ is in 
normal form (see \Sectionref{prelims} for details), which implies that the characteristic 
graphs $G_X, G_Y$ are complete graphs, and therefore, the conditional graph entropies are 
equal to conditional entropies, i.e., $H_{G_X}(X|Y)=H(X|Y), H_{G_Y}(Y|X)=H(Y|X)$. For independent 
inputs we have $I(X;Y)=0$, and the bounds in \Lemmaref{cutset} reduce to the following: 
$H(X^n|M_{12},M_{31})\leq n\epsilon_1$, $H(Y^n|M_{12},M_{23})\leq n\epsilon_2$, and 
$H(Z^n|M_{23},M_{31})\leq n\epsilon_3$.}
\begin{thm}\label{thm:main_lbs_independent}
{Consider a secure computation problem $(f,p_Xp_Y)$, where $p_X$ and $p_Y$ have full support and $f$ is in normal form. If $(R_{12},R_{23},R_{31},\rho)\in\R^{\AS}$, then}
\begin{align*}
R_{12} &\geq H(X,Y|Z) + RI(X;Z) + RI(Y;Z), \\
R_{23} &\geq H(Y|X) + RI(X;Z), \\
R_{31} &\geq H(X|Y) + RI(Y;Z), \\
\rho &\geq H(X,Y|Z) + RI(X;Z) + RI(Y;Z).
\end{align*}
\end{thm}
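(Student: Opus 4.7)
The bounds on $R_{23}$, $R_{31}$, and $\rho$ will follow immediately from \Theoremref{main_lbs_dependent} by the specializations noted just above the statement: since $p_X,p_Y$ have full support and $f$ is in normal form, the characteristic graphs $G_X,G_Y$ are complete, so $H_{G_X}(X|Y)=H(X|Y)=H(X)$ and $H_{G_Y}(Y|X)=H(Y|X)=H(Y)$; in particular $H_{G_X}(X|Y)+H_{G_Y}(Y|X)-H(Z)=H(X,Y|Z)$, which exactly matches the first $\rho$ bound of \Theoremref{main_lbs_dependent}. The only genuinely new content is the $R_{12}$ bound, which upgrades the $\max\{\RI(X;Z),\RI(Y;Z)\}$ of the dependent case into the sum $\RI(X;Z)+\RI(Y;Z)$.

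My plan for $R_{12}$ is to start from the identity
\[
H(M_{12}) = H(M_{12}|M_{23},M_{31}) + I(M_{12};M_{23}) + I(M_{12};M_{31}|M_{23}),
\]
invoke \Lemmaref{infoineq} (which requires independent inputs) to replace $I(M_{12};M_{23})$ by the smaller $I(M_{12};M_{23}|M_{31})$, and thus obtain
\[
H(M_{12}) \geq H(M_{12}|M_{23},M_{31}) + I(M_{12};M_{23}|M_{31}) + I(M_{12};M_{31}|M_{23}).
\]
I would then show that these three pieces contribute, respectively, $n(H(X,Y|Z)-o(1))$, $n(\RI(X;Z)-o(1))$, and $n(\RI(Y;Z)-o(1))$, after which summing and letting $\epsilon\downarrow 0$ yields the claim.

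For the first piece, I introduce $(X^n,Y^n)$ inside the conditional entropy; privacy against Charlie makes $H(X^n,Y^n|M_{23},M_{31},Z^n)\geq n(H(X,Y|Z)-\epsilon/n)$, while the overheads $H(X^n|M_{12},M_{31})$ and $H(Y^n|M_{12},M_{23})$ are $o(n)$ by \Lemmaref{cutset} (under independence and normal form the right-hand sides $n(H(X)-H_{G_X}(X|Y)+\epsilon_1)$ etc.\ collapse to $n\epsilon_1$ and $n\epsilon_2$). For the second piece, I write $I(M_{12};M_{23}|M_{31}) \geq I(M_{12},X^n;M_{23}|M_{31}) - H(X^n|M_{12},M_{31}) \geq I(M_{12},X^n;M_{23}|M_{31}) - n\epsilon_1$ and apply verbatim the chain already carried out in the proof of \Theoremref{main_lbs_dependent} for $I(M_{12},X^n;M_{23}|M_{31})$: add $Z^n$ at cost $H(Z^n|M_{23},M_{31})\leq n\epsilon_3$, recognize the result as a conditional mutual information with common function $Q=M_{31}$ lower-bounding $\RI(M_{12},M_{31},X^n;M_{23},M_{31},Z^n)$, and invoke \Lemmaref{data-processing} to get $\geq n(\RI(X;Z)-\epsilon_{31})$. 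The third piece is handled symmetrically using the small $H(Y^n|M_{12},M_{23})$ and the second bound of \Lemmaref{data-processing}, producing $n(\RI(Y;Z)-o(1))$.

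The hard part is the second (and symmetric third) step: one must pass from a mutual information between bare transcripts, $I(M_{12};M_{23}|M_{31})$, to a quantity involving the inputs and output so that \Lemmaref{data-processing} applies. In the dependent case the analogous step loses a term $H(X^n|M_{12},M_{31})$ that can be as large as $nI(X;Y)$, which is exactly why only a single $\RI$ term survives there; under independence and normal form \Lemmaref{cutset} forces this overhead to be $o(n)$, and it is precisely this vanishing of the cutset slack that allows \emph{both} $\RI$ contributions to appear additively in the final bound rather than collapsing to their maximum.
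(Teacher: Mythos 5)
Your proposal is correct and follows essentially the same route as the paper: the $R_{23}$, $R_{31}$, and $\rho$ bounds are obtained by specializing \Theoremref{main_lbs_dependent} (graph entropies become conditional entropies, $I(X;Y)=0$), and the $R_{12}$ bound is proved via the identical three-term decomposition, with \Lemmaref{infoineq} converting $I(M_{12};M_{23})$ into $I(M_{12};M_{23}|M_{31})$, the vanishing cutset slacks from \Lemmaref{cutset}, and \Lemmaref{data-processing} delivering the two $RI$ terms with $Q=M_{31}$ (resp.\ $Q=M_{23}$). Your closing observation about why the dependent case only retains $\max\{RI(X;Z),RI(Y;Z)\}$ while independence yields the sum is exactly the right diagnosis.
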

\begin{proof}
{We crucially use the information inequality for interactive protocols (\Lemmaref{infoineq}) to improve the bounds in \Theoremref{main_lbs_dependent}.
The improvement is only on $R_{12}$. 
The other bounds on $R_{23},R_{31},\rho$ can directly be obtained by substituting the conditional graph entropies by conditional entropies in \Theoremref{main_lbs_dependent}.}
{\allowdisplaybreaks
\begin{align}
&H(M_{12}) = H(M_{12}|M_{23},M_{31}) + I(M_{12};M_{23},M_{31}) \nonumber \\
&= H(M_{12}|M_{23},M_{31}) + I(M_{12};M_{23}) + I(M_{12};M_{31}|M_{23}) \nonumber \\
&\geq H(M_{12}|M_{23},M_{31}) + I(M_{12};M_{23}|M_{31}) \nonumber \\
&\qquad + I(M_{12};M_{31}|M_{23}) \qquad \text{ (by \Lemmaref{infoineq})} \label{eq:interim_M12_bound}
\end{align}
We bound the first term of \eqref{eq:interim_M12_bound} from below as follows:
}
{\allowdisplaybreaks
\begin{align}
&H(M_{12}|M_{23},M_{31}) \nonumber \\
&= H(M_{12},X^n|M_{23},M_{31}) - \underbrace{H(X^n|M_{12},M_{23},M_{31})}_{\leq \ H(X^n|M_{12},M_{31})\ \leq\ n\epsilon_1} \nonumber \\
&\geq H(M_{12},X^n,Y^n|M_{23},M_{31}) - \underbrace{H(Y^n|M_{12},M_{23},M_{31},X^n)}_{\leq \ H(Y^n|M_{12},M_{23})\ \leq \ n\epsilon_2} \nonumber \\
&\hspace{3cm} - n\epsilon_1 \nonumber \\
&\geq H(X^n,Y^n|M_{23},M_{31},Z^n) - n(\epsilon_1 + \epsilon_2) \nonumber \\
&\geq n(H(X^n,Y^n|Z^n) - \epsilon/n - \epsilon_1 - \epsilon_2) \quad \text{ by } \eqref{eq:privacy_charlie}\label{eq:first_M12_bound}
\end{align}
}
We bound the second term of \eqref{eq:interim_M12_bound} from below as follows:
{\allowdisplaybreaks
\begin{align}
&I(M_{12};M_{23}|M_{31}) \nonumber \\
&= I(M_{12},X^n;M_{23}|M_{31}) - \underbrace{I(X^n;M_{23}|M_{12},M_{31})}_{\leq\ H(X^n|M_{12},M_{31})\  \leq \ n\epsilon_1} \nonumber \\
&\geq I(M_{12},X^n;M_{23},Z^n|M_{31}) - \underbrace{I(M_{12},X^n;Z^n|M_{23},M_{31})}_{\leq\ H(Z^n|M_{23},M_{31})\  \leq \ n\epsilon_3} \nonumber \\
&\hspace{3cm} - n\epsilon_1 \nonumber \\
&\stackrel{\text{(a)}}{\geq} RI(M_{12},M_{31},X^n ; M_{23},M_{31},Z^n) - n(\epsilon_1+\epsilon_3) \nonumber \\
&\geq n(RI(X;Z) - \epsilon_1 - \epsilon_3 - \epsilon_{31}), \quad  \text{(by \Lemmaref{data-processing})} \label{eq:second_M12_bound}
\end{align}
where (a) follows from the definition of residual information \eqref{eq:residual_info}, by taking $U=(M_{12},M_{31},X^n)$, $V=(M_{23},M_{31},Z^n)$, and $Q=M_{31}$.
}
Similarly, we can bound the third term of \eqref{eq:interim_M12_bound} as follows:
\begin{align}
I(M_{12};M_{31}|M_{23}) &\geq n(RI(Y;Z) - \epsilon_2 - \epsilon_3 - \epsilon_{23}). \label{eq:third_M12_bound}
\end{align}
Substituting the values from \eqref{eq:first_M12_bound}-\eqref{eq:third_M12_bound} into \eqref{eq:interim_M12_bound}, we get
\begin{align}
H(M_{12}) &\geq n(H(X,Y|Z) + RI(X;Z) + RI(Y;Z) - \gamma_{12}), \nonumber
\end{align}
where $\gamma_{12} = \epsilon/n + 2(\epsilon_1 + \epsilon_2 + \epsilon_3) + \epsilon_{31} + \epsilon_{23}$, and $\gamma_{12} \to 0$ as $\epsilon\to0$. By letting $\epsilon\downarrow0$, we have
\[R_{12} \geq H(X,Y|Z) + RI(X;Z) + RI(Y;Z).\]
\end{proof}
{\begin{remark}\label{remark:as_remark_rand}
{\em The observation in \Remarkref{ps_remark_rand} on the fact that working with normal form may not be without loss of generality as far as randomness requirement is concerned even if $p_{XY}$ has full support holds here as well.
}
\end{remark}}

\subsection{Application to Specific Functions}\label{subsec:asymp_examples}
In this section we present some examples and show that secure protocols for some of these achieve the optimal rate-region. \\

{\bf 1. Secure computation of {\sc addition} in a finite field:}
Let $(\mathbb{F},+,\times)$ be a finite field and $(X,Y)\sim p_{XY}$, where $p_{XY}$ is a joint distribution over $\mathbb{F}^2$. The function {\sc addition} is defined as follows: $Z=X+Y$, where $+$ is performed in $\mathbb{F}$. Our protocol uses the following fact about data compression of a discrete memoryless source.

{\em Fact 1 \cite{Elias55}. }Let $U^n$ be a sequence of $n$ i.i.d. random variables, each distributed over $\mathbb{F}$. For fix $\epsilon>0$, let $R=H(U)/\log|\mathbb{F}|+\epsilon$, then there is a sequence of linear encoder and decoder pairs $(A_n, D_n)$, where $A_n\in\mathbb{F}^{nR\times n}$ and $D_n:\mathbb{F}^{nR}\to\mathbb{F}^n$, such that $\Pr[D_n(A_nU^n)\neq U^n]\to 0$ as $n\to\infty$.

The protocol in \Figureref{addition} is a secure version of K{\"o}rner-Marton scheme \cite{KornerMa79} in finite fields. All the arithmetic is in $\mathbb{F}$. The protocol requires $n\rho = |M_{12}| = |M_{23}| = |M_{31}| = n(H(Z)+\epsilon)$ (in bits). Below we show that if $p_{XY}$ is a product distribution, i.e., $p_{XY}=p_Xp_Y$, then this protocol achieves the optimal rate-region.

\begin{figure}[htb]
\hrule height 1pt
\vspace{.05cm}
{\bf Algorithm 6:} {Secure Computation of {\sc addition} in a finite field} 
\hrule
\begin{algorithmic}[1]
\REQUIRE Alice \& Bob have input vectors $X^n,Y^n\in\mathbb{F}^n$.
\ENSURE Charlie securely computes $\hat{Z}^n$ with $\Pr[\hat{Z}^n\neq Z^n]\to 0$, where $Z^n=(Z_1,Z_2,\hdots,Z_n), Z_i=X_i+Y_i$.

\medskip

\STATE For fix $\epsilon$, let $R=H(Z)/\log|\mathbb{F}|+\epsilon$ and $A_n$ be a $nR\times n$ matrix in $\mathbb{F}$ whose existence is ensured by the fact 1. Alice and Bob share $K\sim \text{Unif}(\mathbb{F}^{nR})$ over 1-2 link.

\STATE Alice sends $M_{\vec{13}} := A_nX^n + K$ (component-wise addition) to Charlie.

\STATE Bob sends $M_{\vec{23}}:= A_nY^n - K$ to Charlie.

\STATE Charlie computes $M_{13}+M_{23}$, which is equal to $A_nZ^n$, and recovers $Z^n$ with high probability.
\end{algorithmic}
\hrule
\caption{An optimal protocol for block-wise secure computation of {\sc addition} in any finite field $\mathbb{F}$. The protocol requires roughly $nH(Z)$ bits to be exchanged on average over each link.}
\label{fig:addition}
\end{figure}

\begin{thm}\label{thm:as_addition}
For any secure protocol for computing {\sc addition} in a finite field $\mathbb{F}$ for independent $X\sim p_X$, $Y\sim p_Y$ with full support, we have the following optimal bound on the rate-region:
\[R_{12},R_{23},R_{31},\rho \geq H(Z).\]
\end{thm}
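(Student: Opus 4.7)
The plan is to derive all four bounds as a direct application of \Theoremref{main_lbs_independent}, since the inputs are independent with full support. First I would verify the hypotheses: the addition function $f(x,y)=x+y$ on $\mathbb{F}$ is in normal form, because $x+y = x'+y$ for even a single $y$ forces $x=x'$ (and symmetrically for $y$), so the relations $\equiv$ on $\mathcal{X}, \mathcal{Y}, \mathcal{Z}$ are all trivial. Combined with $p_X, p_Y$ having full support, \Theoremref{main_lbs_independent} applies.

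Next, I would evaluate each of the four quantities appearing in the bounds on $R_{12}, R_{23}, R_{31}, \rho$ for addition under $p_X p_Y$. By independence, $H(Y \mid X) = H(Y)$ and $H(X \mid Y) = H(X)$, and since $Z = X+Y$ is a deterministic function of $(X,Y)$ while $Y$ is determined by $(X, Z)$ (and similarly $X$ by $(Y, Z)$), we have $H(X, Y \mid Z) = H(X) + H(Y) - H(Z)$.

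The key calculation will be to show that $RI(X;Z) = I(X;Z) = H(Z) - H(Y)$ and $RI(Y;Z) = I(Y;Z) = H(Z) - H(X)$. Since $p_X, p_Y$ have full support on $\mathbb{F}$, the induced distribution satisfies $p_{XZ}(x,z) = p_X(x)\, p_Y(z-x) > 0$ for every $(x,z) \in \mathbb{F} \times \mathbb{F}$, so the characteristic bipartite graph of $p_{XZ}$ is complete, hence connected. As noted in the paragraph preceding \Lemmaref{XYZ_inde_M123}, this forces $H(X \sqcap Z) = 0$ and thus $RI(X;Z) = I(X;Z)$; the same argument gives $RI(Y;Z) = I(Y;Z)$. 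Then $I(X;Z) = H(Z) - H(Z \mid X) = H(Z) - H(Y)$ by independence, and symmetrically for $I(Y;Z)$.

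Substituting these values into \Theoremref{main_lbs_independent} gives
\begin{align*}
R_{23} &\geq H(Y) + \bigl(H(Z) - H(Y)\bigr) = H(Z),\\
R_{31} &\geq H(X) + \bigl(H(Z) - H(X)\bigr) = H(Z),\\
R_{12} &\geq \bigl(H(X) + H(Y) - H(Z)\bigr) + \bigl(H(Z) - H(Y)\bigr) + \bigl(H(Z) - H(X)\bigr) = H(Z),
\end{align*}
and the identical computation for $\rho$ yields $\rho \geq H(Z)$. The argument is a direct specialization of the general theorem, so the only mild subtlety is the $RI = I$ reduction, which follows cleanly from full support; there is no substantial technical obstacle.
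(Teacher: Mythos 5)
Your proposal is correct and follows exactly the paper's proof: apply \Theoremref{main_lbs_independent}, use the full-support/independence assumptions to get $RI(X;Z)=I(X;Z)=H(Z)-H(Y)$ and $RI(Y;Z)=I(Y;Z)=H(Z)-H(X)$, and check that all four bounds collapse to $H(Z)$. The paper states these computations more tersely ("it can be verified easily"), but your substitutions and the justification via the connected characteristic bipartite graph of $p_{XZ}$ are precisely the intended argument.
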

\begin{proof}
For this function with $p_Xp_Y$ having full support, we have $RI(X;Z)=I(X;Z)$ and $RI(Y;Z)=I(Y;Z)$. For a product distribution, i.e., $p_{XY}=p_Xp_Y$, it can be verified easily that the all four bounds on $R_{12},R_{23},R_{31},\rho$ in \Theoremref{main_lbs_independent} reduce to $H(Z)$ (in bits), thereby achieving the optimal rate-region. Note that the converse of the optimality of this protocol needs full force of \Lemmaref{infoineq}.\\
\end{proof}
{\em Separating Perfectly and Asymptotically Secure Computation:} Note that the result of \Theoremref{group} 
in \Subsectionref{examples} also holds when restricted to independent inputs taking values in finite fields. Comparing that with the result of above \Theoremref{as_addition} establishes a gap in the rate regions of perfectly secure computation and asymptotically secure computation.
 
We give a tight characterization of the rate-region of this function only for independent input distributions, and we leave it as an interesting open problem to characterize the rate-region of this function for arbitrary $p_{XY}$. For arbitrary $p_{XY}$ our bounds in \Theoremref{main_lbs_dependent} reduce to $\rho,R_{23},R_{31}\geq H(Z)$ but $R_{12} \geq \max\{H(X|Y),H(Y|X))\}$. In general, the bound on $R_{12}$ does not match what our protocol achieves.
But for the special case of the joint distribution of K{\"o}rner-Marton \cite{KornerMa79}: $p_{XY}(x,y)=\frac{p}{2}1_{x\neq y}+\frac{1-p}{2}1_{x=y}$, where $x,y\in\{0,1\}$ and $0\leq p\leq 1/2$, we have $H(X|Y)=H(Y|X)=H(Z)$. This distribution is sometimes referred to as the doubly symmetric binary source (DSBS) with parameter $p$. Thus the secure computation of modular addition in a binary field with DSBS source requires $R_{12},R_{23},R_{31},\rho\geq H(Z)$, which is also an example with dependent inputs that separates perfect secure computation from asymptotically secure computation. \\

{\bf 2. Secure computation of {\sc controlled-erasure}:}
We again study the controlled erasure function from
\Subsectionref{examples} here in the asymptotic setting. 
In this function Alice and Bob have one bit input $X$ and $Y$, respectively, where Alice's input $X$ acts as
the ``control'' which decides whether Charlie receives an erasure
($\Delta$) or Bob's input $Y$. 

Let $(X,Y)\sim p_{XY}$, where $p_{XY}$ is a joint distribution over $\{0,1\}^2$ with marginal distributions $X\sim$ Bern($p$) and $Y\sim$ Bern($q$), $p,q\in(0,1)$.
The protocol in \Figureref{as_erasure} requires $\mathbb{E}[L_{31}]< n(H_2(p)+p)+1$, $|M_{12}|=n, \rho = n$, and
\begin{align*}
|M_{23}|&=nH(Y\oplus K|G,X) =n(pH(Y|X=1)+(1-p)\cdot 1)\\ &= n(H(Y|X) + (1-p)(1-H(Y|X=0))).
\end{align*}

\begin{figure}[htb]
\hrule height 1pt
\vspace{.05cm}
{\bf Algorithm 7:} {Secure Computation of \textsc{controlled erasure}} 
\hrule
\begin{algorithmic}[1]
\REQUIRE Alice \& Bob have input vectors $X^n,Y^n\in\{0,1\}^n$ with $(X,Y)\sim p_{XY}$; let $X\sim$ Bern($p$) and $Y\sim$ Bern($q$).
\ENSURE Charlie securely computes $\hat{Z}^n$ with $\Pr[\hat{Z}^n\neq Z^n]\to0$, where $Z_i, i=1,\hdots,n.$ is the {\sc controlled-erasure} function of $X_i,Y_i$.

\medskip

\STATE Alice and Bob share $n$ random bits $K^n$ over 1-2 link.

\STATE Alice sends $M_{13}:=(C(X^n),(K_i)_{i\in\{j:X_j=1\}})$ to Charlie, where $C(X^n)$ is the Huffman compression of $X^n$.

\STATE Let
\[G_i=
\begin{cases}
K_i & \text{if $X_i=1$}, \\
\bot & \text{if $X_i=0$}.
\end{cases}
\]
Charlie decodes $C(X^n)$ to get $X^n$ and obtains $(i)_{i\in\{j:X_j=1\}}$, which, together with the second component of $M_{13}$ gives $G^n=(G_1,G_2,\hdots,G_n)$. Since Charlie has $(G^n,X^n)$, by Slepian-Wolf theorem \cite[Section 10.3]{ElgamalKim11}, Bob only needs to send at rate $H(Y\oplus K|G,X)$ for Charlie to recover $Y^n\oplus K^n$ with high probability.

\STATE Now, having access to $Y^n\oplus K^n$ and $G^n$, Charlie can recover $Z^n$.
\end{algorithmic}
\hrule
\caption{A protocol to securely compute {\sc controlled erasure} function asymptotically. For $(X,Y)\sim p_{XY}$ with
$X\sim\text{Bernoulli}(p)$ and $Y\sim\text{Bernoulli}(q)$, both i.i.d and $0<p,q\leq 1/2$.}
\label{fig:as_erasure}
\end{figure}

\begin{thm}\label{thm:as_erasure}
For any secure protocol for computing {\sc controlled-erasure} function with $(X,Y)\sim p_{XY}$, $X\sim$ Bern($p$) and $Y\sim$ Bern($q$), $p,q \in (0,1)$, we have the following bound on the rate region:
\begin{align*}
R_{12},R_{23},\rho &\geq H(Y|X), \\
R_{31} &\geq H(X) + pH(Y|X=1).
\end{align*}
If $X$ and $Y$ are independent and $q=1/2$ (irrespective of the value of $p$), we have $\rho,R_{12},R_{23}\geq n$ and $R_{31}\geq H_2(p)+p$, achieving the optimal rate region.
\end{thm}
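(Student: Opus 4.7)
The plan is to apply \Theoremref{main_lbs_dependent} and specialize each of its bounds to the controlled-erasure function, computing the relevant conditional (graph) entropies and residual informations; for the independent $q=1/2$ case I will additionally invoke \Theoremref{main_lbs_independent}. I first record the structural properties of $f$: since $f(0,y)=\Delta$ and $f(1,y)=y$, the function is in normal form, and under the (implicit) assumption that $p_{XY}$ has full support consistent with the marginals $p,q\in(0,1)$, the characteristic graphs $G_X,G_Y$ on $\{0,1\}$ are both complete, giving $H_{G_X}(X|Y)=H(X|Y)$ and $H_{G_Y}(Y|X)=H(Y|X)$.

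Next I compute the residual informations via \Lemmaref{XYZ_inde_M123}. Condition~1 fails through the partition $\mathcal{X}_1=\{0\},\mathcal{X}_2=\{1\}$, which produces disjoint $\mathcal{Z}_1=\{\Delta\}$ and $\mathcal{Z}_2=\{0,1\}$; the characteristic bipartite graph of $p_{XZ}$ correspondingly splits into two connected components, and since $Z=\Delta$ if and only if $X=0$, one obtains $H(X\sqcap Z)=H_2(p)=I(X;Z)$, hence $RI(X;Z)=0$. Condition~2 holds because the erasure symbol $\Delta$ lies in $\mathcal{Z}_k$ for every nontrivial partition of $\mathcal{Y}$, so the characteristic bipartite graph of $p_{YZ}$ is connected and $RI(Y;Z)=I(Y;Z)$.

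The decisive structural observation is the identity $H(Z\mid Y)=H(X\mid Y)$: conditioned on $Y=y$, the map $X\mapsto f(X,y)$ is a bijection between $\{0,1\}$ and $\{\Delta,y\}$, so $H(Z|Y=y)=H(X|Y=y)$ and the identity follows on averaging. Combined with the chain-rule computation $H(Z)=H(X)+H(Z|X)=H_2(p)+pH(Y|X=1)$, this identity collapses three of the four bounds of \Theoremref{main_lbs_dependent} to the tight claimed forms: $R_{23}\geq H(Y|X)+RI(X;Z)=H(Y|X)$ is immediate; $R_{31}\geq H(X|Y)+RI(Y;Z)=H(X|Y)+H(Z)-H(Z|Y)=H(Z)=H(X)+pH(Y|X=1)$; and both the $R_{12}$-bound and the first $\rho$-bound in \Theoremref{main_lbs_dependent} reduce to $H(X|Y)+H(Y|X)-H(Z)+I(Y;Z)=H(Y|X)$ by the same cancellation. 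For the independent case with $q=1/2$, \Theoremref{main_lbs_independent} applies with $H(X|Y)=H_2(p)$, $H(Y|X)=1$, $H(X,Y|Z)=1-p$, $RI(X;Z)=0$, $RI(Y;Z)=I(Y;Z)=p$, yielding $R_{12},R_{23},\rho\geq 1$ and $R_{31}\geq H_2(p)+p$ --- matching the achievable rates of the protocol in \Figureref{as_erasure} and establishing optimality. The main hurdle --- or rather, the decisive step --- is spotting the identity $H(Z|Y)=H(X|Y)$, without which the generic bounds would fail to assemble into the claimed tight expressions.
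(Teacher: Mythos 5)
Your proposal is correct and follows essentially the same route as the paper: establish $RI(X;Z)=0$ and $RI(Y;Z)=I(Y;Z)$ via the connectivity of the characteristic bipartite graphs, then specialize the bounds of \Theoremref{main_lbs_dependent} (the paper leaves the resulting algebra, which your identity $H(Z|Y)=H(X|Y)$ makes explicit, as a routine verification). Your additional appeal to \Theoremref{main_lbs_independent} for the $q=1/2$ independent case yields the same numbers and is a harmless elaboration of the paper's ``can be verified easily'' step.
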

\begin{proof}
For this function and $p_{XY}$ having full support, we have $RI(X;Z)=0$ and $RI(Y;Z)=I(Y;Z)$, bounds in \Theoremref{main_lbs_dependent} reduce to the following:
\begin{align*}
R_{12},R_{23},\rho &\geq H(Y|X), \\
R_{31} &\geq H(X) + pH(Y|X=1).
\end{align*}
It can be verified easily that for independent $X,Y$ and $q=1/2$ (irrespective of the value of $p$), the lower bounds match the protocol requirements, thereby achieving the optimal rate-region.
\end{proof}

\section{Conclusion}\label{sec:conclusion}

In this work we presented generic lower bounds on communication and randomness
for perfectly and asymptotically secure 3-user computation, and showed that
they yield tight bounds for some interesting examples. However, the general
problem of obtaining tight lower bounds for communication and randomness complexity of secure
computation remains open.

For perfectly secure computation, the standard upper bound on the total
communication exchanged between all three users is linear in the size of the
circuit computing the function \cite{BenorGoWi88, ChaumCrDa88}. This
implication to circuit lower bounds presents a ``barrier'' to obtaining
super-linear bounds for explicit functions since circuit complexity lower
bounds are notoriously difficult~\cite[Chapter~23]{AroraBarak09}. We propose a
possibly easier open problem: do there {\em exist} Boolean functions with
super-linear communication complexity for secure computation?  Note that lower
bounds on circuit complexity do not directly translate to lower bounds on
communication complexity of secure computation, as established by a
sub-exponential upper bound of $2^{\widetilde{O}(\sqrt{n})}$ for the latter
\cite{BeimelIsKuKu14}. Though it is plausible that for random Boolean
functions, the actual communication cost is $2^{\Omega(n^\epsilon)}$ for some
$\epsilon>0$, none of the current techniques appear capable of delivering such
a result. Another interesting problem we leave open is to find an explicit
example for a {\em Boolean} function in which the total communication to
Charlie must be significantly larger than the total input size. Note that
\cite{FeigeKiNa94} gave an existential result (in their restricted model)
and the explicit example in this work does not have Boolean output.

For asymptotically secure computation, the only generic feasibility results
are the ones that were developed for standard (statistically or perfectly)
secure computation, like the ones by Ben-Or, Goldwasser, and
Wigderson~\cite{BenorGoWi88} or Chaum, Cr\'epeau, and
Damg\r{a}rd~\cite{ChaumCrDa88}. In light of the result of
\Subsectionref{asymp_examples}, where we establish a gap between the
communication and randomness requirements of perfectly secure computation
and asymptotically secure computation, it is plausible that there are
generic protocols for asymptotically secure computation with lower
communication and randomness requirements than possible for standard secure
computation.

We presented a new information inequality for 3-user interactive
protocols, which was instrumental in obtaining our strongest bounds. This
inequality requires users to have independent inputs in the beginning. It would
be interesting to generalize this to settings where the inputs may be
dependent. More generally, proving information inequalities for interactive
protocols in larger networks is also of independent interest and might prove
useful in establishing strong communication lower bounds in multiuser setting.

Two other directions we leave as important open directions are to develop
communication and randomness lower bounds for secure multiparty computation
involving more than 3 parties, and to obtain stronger lower bounds for
security against active corruption than in the honest-but-curious setting
(when computation is feasible in both models; indeed, it is well-known that
general secure computation against active corruption is not possible when 1
out of 3 parties can be actively corrupted). There has been some prior work
in the first direction as mentioned in \Sectionref{intro}, these results
have been mostly only for the modular addition function.  While some of our
techniques can be extended to more than 3 parties, we will need entirely new
techniques for separating the communication requirements of the
honest-but-curious and active corruption settings.

\appendices

\section{Details Omitted from \Sectionref{ps_lowerbounds}}\label{app:proofs}
\begin{proof}[Proof of \Lemmaref{general_cutset}]
Fix a protocol $\Pi$. First we show $H(X|M_{12},M_{13})=0$. We apply a cut-set argument. 
Consider the cut isolating Alice from Bob \& Charlie.
We need to show that for every $m_{12},m_{31}$ with $p(m_{12},m_{31})>0$,
there is a (necessarily unique) $x\in {\mathcal X}$ such that
$p(x|m_{12},m_{31})=1$. Suppose, to the contrary, that we have a secure
protocol resulting in a p.m.f. $p(x,y,z,m_{12},m_{31})$ such that there
exists $x,x'\in{\mathcal X}$, $x\neq x'$, and $m_{12},m_{31}$ satisfying
$p(m_{12},m_{31})>0$, $p(x|m_{12},m_{31})>0$, and $p(x'|m_{12},m_{31})>0$. For these $x,x'$, 
since $(p_{XY},p_{Z|XY})$ is in the normal form, $\exists (y,z) \in 
{\mathcal{Y}\times\mathcal{Z}}$ such that $p_{XY}(x,y)>0, p_{XY}(x',y)>0$, and 
$p_{Z|X,Y}(z|x,y)\neq p_{Z|X,Y}(z|x',y)$.
\begin{enumerate}
\item[(i)] The definition of a protocol implies that $p(x,y,z,m_{12},m_{31})$ can be written 
as $p_{X,Y}(x,y)p(m_{12},m_{31}|x,y)p(z|m_{12},m_{31},y)$.
\item[(ii)] Privacy against Alice implies that $p(m_{12},m_{31}|x,y,z)=p(m_{12},m_{31}|x)$.
\item[(iii)] Using (ii) in (i), we get 
$p(x,y,z,m_{12},m_{31})=p_{X,Y}(x,y)p(m_{12},m_{31}|x)p(z|m_{12},m_{31},y)$.
\item[(iv)] Correctness and (ii) imply that we can also write 
$p(x,y,z,m_{12},m_{31})=p_{X,Y}(x,y)p_{Z|X,Y}(z|x,y)p(m_{12},m_{31}|x)$.
\item[(v)] Since $p_{X,Y}(x,y)p(m_{12},m_{31}|x)>0$, from (iii) and (iv), we get 
$p(z|m_{12},m_{31},y)=p_{Z|X,Y}(z|x,y)$.
\end{enumerate}
 Applying the above arguments to $(x',y,z,m_{12},m_{31})$ we get 
$p(z|m_{12},m_{31},y)=p_{Z|X,Y}(z|x',y)$, leading to the contradiction 
$p(z|m_{12},m_{31},y) \neq p(z|m_{12},m_{31},y)$, since by assumption 
$p_{Z|X,Y}(z|x,y) \neq p_{Z|X,Y}(z|x',y)$.

Similarly, by considering the cut separating Bob from Alice \& Charlie we can show \eqref{eq:general_cutset_bob}.

To show \eqref{eq:general_cutset_charlie}, i.e., $H(Z|M_{23},M_{31})=0$, we need to show that for every $m_{23},m_{31}$ with $p(m_{23},m_{31})>0$, there is a (necessarily unique) $z\in \mathcal{Z}$ such that $p(z|m_{23},m_{31})=1$. Suppose, to the contrary, that we have a secure protocol resulting in a p.m.f. $p(x,y,z,m_{23},m_{31})$ such that there exists $z,z'\in{\mathcal Z}$, $z\neq z'$ and $m_{23},m_{31}$
satisfying $p(m_{23},m_{31})>0$, $p(z|m_{23},m_{31})$, $p(z'|m_{23},m_{31})>0$. By the assumption that $(p_{XY},p_{Z|XY})$ is in normal form, there exists $(x,y)$ s.t. $p_{XY}(x,y)>0$ and $p_{Z|X,Y}(z|x,y)>0$.
\begin{enumerate}
\item[(i)] The definition of a protocol implies that $p(x,y,z,m_{23},m_{31})$ can be written as $p_{X,Y}(x,y)p(m_{23},m_{31}|x,y)p(z|m_{23},m_{31})$.
\item[(ii)] Privacy against Charlie implies that $p(x,y,z,m_{23},m_{31})$ can be written as $p_{X,Y}(x,y)p(z|x,y)p(m_{23},m_{31}|z)$.
\item[(iii)] (i), (ii), and correctness give $p(m_{23},m_{31}|x,y)p(z|m_{23},m_{31})=p_{Z|X,Y}(z|x,y)p(m_{23},m_{31}|z)$. 
\end{enumerate}
By assumption, $p(m_{23},m_{31})>0$ and $p(z|m_{23},m_{31})>0$, which imply that $p(m_{23},m_{31}|z)>0$. And since $p_{Z|X,Y}(z|x,y)>0$, we have from (iii) that $p(m_{23},m_{31}|x,y)>0$. Now consider $(x,y,z')$. 
By assumption, $p(m_{23},m_{31})>0$ and $p(z'|m_{23},m_{31})>0$, which imply $p(m_{23},m_{31}|z')>0$. 
Since $p(m_{23},m_{31}|x,y)>0$, running the same steps (i)-(iii) as above with $(x,y,z',m_{23},m_{31})$, (iii) implies that $p_{Z|X,Y}(z'|x,y)>0$. Define $\alpha \triangleq \frac{p(z|x,y)}{p(z'|x,y)}$. Since $(p_{XY},p_{Z|XY})$ is in normal form, $\exists (x',y')\in (\mathcal{X,Y})$ s.t. $p_{XY}(x',y')>0$ and $p_{Z|X,Y}(z|x',y') \neq \alpha \cdot p_{Z|X,Y}(z'|x',y')$. Since $\alpha \neq 0$, at least one of $p(z|x',y')$ or $p(z'|x',y')$ is non-zero. Assume that any one of these is non-zero, then applying the above arguments will give us that the other one should also be non-zero.
\begin{enumerate}
\item[(iv)] Repeating the steps (i)-(iii) with $(x,y,z',m_{23},m_{31})$ yields
$p(m_{23},m_{31}|x,y)p(z'|m_{23},m_{31})=p_{Z|X,Y}(z'|x,y)p(m_{23},m_{31}|z').$
\item[(v)] Dividing the expression in (iii) by the expression in (iv) gives $\frac{p(z|m_{23},m_{31})}{p(z'|m_{23},m_{31})} = \alpha \cdot \frac{p(m_{23},m_{31}|z)}{p(m_{23},m_{31}|z')}$.
\item[(vi)] Repeating (i)-(v) for $(x',y',z,m_{23},m_{31})$ and $(x',y',z',m_{23},m_{31})$, we get $\frac{p(z|m_{23},m_{31})}{p(z'|m_{23},m_{31})} \neq \alpha \cdot \frac{p(m_{23},m_{31}|z)}{p(m_{23},m_{31}|z')}$, which contradicts (v).
\end{enumerate}
\end{proof}
\section{Connections to Secure Sampling and Correlated Multi-Secret Sharing}\label{app:CMSS_sampling}
\paragraph{Secure Sampling.} In {\em secure sampling} functionalities, 
none of the users receives any input, but all three users produce outputs. 
The functionality is specified by a joint distribution $p_{XYZ}$, and the protocol for sampling $p_{XYZ}$ is specified by $\Pi(p_{XYZ})$. 
The correctness condition in this case is that the outputs of Alice, Bob, and Charlie are distributed according to $p_{XYZ}$. 
The security conditions remain the same as in the case of secure computation, that is, none of the users can infer anything about the other users' outputs other than what they can from their own outputs.

\paragraph{A Normal Form for $p_{XYZ}$.} For a joint distribution $p_{XYZ}$, define the relation $x\sim x'$ for $x,x'\in\X$ to hold if $\exists c\geq 0$ such that $\forall y\in\Y, z\in\Z$, $p(x',y,z)=c\cdot p(x,y,z)$. 
Similarly, we define $y\sim y'$ for $y,y'\in\Y$ and $z\sim z'$ for $z,z'\in\Z$. 
We say that $p_{XYZ}$ is in the {\em normal form} if $x\sim x' \Rightarrow x=x'$, $y\sim y' \Rightarrow y=y'$, and $z\sim z' \Rightarrow z=z'$.

It is easy to see that one can transform any distribution $p_{XYZ}$ to one in normal form $p_{X'Y'Z'}$, with possibly smaller alphabets, so that any secure sampling protocol for the former can be transformed to one for the latter with the same communication costs, and vice versa. 
To define $X'$, $X$ is modified by removing all $x$ such that $p(x) = 0$ and then replacing all $x$ in an equivalence class of $\sim$ with a single representative; $Y'$ and $Z'$ are defined similarly. The modification to the protocol, in either direction, is for each user to locally map $X$ to $X'$ etc., or vice versa. Hence it is enough to study the communication complexity of securely sampling distributions in the normal form.

Now, we show an analog of \Lemmaref{general_cutset} for secure sampling protocols.
\begin{lem}\label{lem:cutset_samp}
Suppose $p_{XYZ}$ is in normal form. Then, in any secure sampling protocol $\Pi(p_{XYZ})$,  the cut isolating Alice from Bob and Charlie must determine Alice's output $X$, i.e., $H(X| M_{12},M_{31}) =0$. Similarly, $H(Y|M_{12},M_{23})=0$ and $H(Z|M_{23},M_{31})=0$.
\end{lem}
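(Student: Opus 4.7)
The plan is to mimic the proof of \Lemmaref{general_cutset}, arguing by contradiction: assume that for some $(m_{12}, m_{31})$ with $p(m_{12}, m_{31}) > 0$ there are distinct $x, x' \in \mathcal{X}$ with both $p(x \mid m_{12}, m_{31}) > 0$ and $p(x' \mid m_{12}, m_{31}) > 0$. My goal will be to derive that $x \sim x'$, which contradicts the normal-form assumption on $p_{XYZ}$; the conclusions for $Y$ and $Z$ then follow by symmetric arguments on the other two cuts.

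The main structural ingredient I will rely on is the following standard fact about 3-user interactive protocols with independent private random tapes $R_1, R_2, R_3$: conditional on the full transcript $M = (M_{12}, M_{23}, M_{31})$, the outputs satisfy
\begin{equation*}
p(x, y, z \mid m) = p(x \mid m_{12}, m_{31}) \, p(y \mid m_{12}, m_{23}) \, p(z \mid m_{23}, m_{31}).
\end{equation*}
This holds because each output is a deterministic function of the corresponding user's view $(R_i, m_i)$, and an induction on the rounds of the protocol shows that $R_1, R_2, R_3$ remain mutually independent conditioned on $M$, with the conditional distribution of $R_i$ depending only on $m_i$. Marginalizing this factorization over $m_{23}$ then yields
\begin{equation*}
p(x, y, z \mid m_{12}, m_{31}) = p(x \mid m_{12}, m_{31}) \cdot Q(y, z \mid m_{12}, m_{31}),
\end{equation*}
where $Q(y, z \mid m_{12}, m_{31}) := \sum_{m_{23}} p(m_{23} \mid m_{12}, m_{31}) \, p(y \mid m_{12}, m_{23}) \, p(z \mid m_{23}, m_{31})$ does not depend on $x$.

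Combining this with privacy against Alice, which gives the Markov chain $(M_{12}, M_{31}) - X - (Y, Z)$ and hence $p(y, z \mid x, m_{12}, m_{31}) = p(y, z \mid x)$, will force $p(y, z \mid x) = Q(y, z \mid m_{12}, m_{31}) = p(y, z \mid x')$ for every $y, z$. Setting $c = p(x')/p(x)$ then yields $p(x', y, z) = c \cdot p(x, y, z)$ for all $y, z$, i.e., $x \sim x'$, producing the desired contradiction. I expect the main obstacle to be writing up the structural lemma on conditional independence of the random tapes given the transcript cleanly; while this is folklore in the MPC literature, verifying it requires a careful induction over protocol rounds (using that each round's next-message function only reads the sender's view). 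Once that fact is in hand, the remainder of the argument is a short marginalization followed by invocation of the privacy Markov chain.
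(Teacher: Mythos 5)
Your proof is correct and rests on the same two ingredients as the paper's own argument: the cut-set Markov chain $X-(M_{12},M_{31})-(Y,Z)$ coming from the protocol structure (which the paper asserts in one line as its step (i) and you derive more carefully via the full-transcript factorization and marginalization over $m_{23}$), and the privacy Markov chain $(M_{12},M_{31})-X-(Y,Z)$. Your endgame is in fact slightly cleaner: you conclude $p_{YZ|X=x}=p_{YZ|X=x'}$ globally and read off $x\sim x'$ with $c=p(x')/p(x)$, whereas the paper fixes a particular $(y,z)$, forms the ratio $\alpha=p(x,y,z)/p(x',y,z)$, and invokes the normal form to exhibit a violating $(y',z')$ --- both routes reach the same contradiction.
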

\begin{proof}
We only prove $H(X|M_{12},M_{31})=0$; the other ones, i.e., $H(Y|M_{12},M_{23})=0$ and $H(Z|M_{23},M_{31})=0$ can be proved similarly. 
We need to show that for every $m_{12},m_{31}$ with $p(m_{12},m_{31})>0$, 
there is a (necessarily unique) $x\in {\mathcal X}$ such that $p(x|m_{12},m_{31})=1$. 
Suppose, to the contrary, that we have a secure sampling protocol resulting in a p.m.f. 
$p(x,y,z,m_{12},m_{31})$ such that there exists $x,x'\in{\mathcal X}$, $x\neq x'$ and $m_{12},m_{31}$
satisfying $p(m_{12},m_{31})>0$, $p(x|m_{12},m_{31})>0$, and $p(x'|m_{12},m_{31})>0$. Since $p(m_{12},m_{31})>0$ and $p(x|m_{12},m_{31})>0$ imply $p_X(x)>0$, there exists $(y,z)$ s.t. $p_{XYZ}(x,y,z)>0$.
\begin{enumerate}
\item[(i)] The definition of a protocol implies that $p(x,y,z,m_{12},m_{31})$ can be written as $p_{YZ}(y,z)p(m_{12},m_{31}|y,z)p(x|m_{12},m_{31})$.
\item[(ii)] Privacy against Alice implies that $p(x,y,z,m_{12},m_{31})$ can be written as $p_{XYZ}(x,y,z)p(m_{12},m_{31}|x)$.
\item[(iii)] (i) and (ii) gives $p_{YZ}(y,z)p(m_{12},m_{31}|y,z)p(x|m_{12},m_{31})$ = $p_{XYZ}(x,y,z)p(m_{12},m_{31}|x)$. 
\end{enumerate}
By assumption, $p(m_{12},m_{31})>0$ and $p(x|m_{12},m_{31})>0$, which imply that $p(m_{12},m_{31}|x)>0$. And since $p_{XYZ}(x,y,z)>0$, we have from (iii) that $p(m_{12},m_{31}|y,z)>0$. Now consider $(x',y,z,m_{12},m_{31})$. By assumption, $p(m_{12},m_{31})>0$ and $p(x'|m_{12},m_{31})>0$, which imply $p(m_{12},m_{31}|x')>0$. Since $p(m_{12},m_{31}|y,z)>0$ from above, (iii) implies that $p_{XYZ}(x',y,z)>0$. Define $\alpha \triangleq \frac{p(x,y,z)}{p(x',y,z)}$. Since $p_{XYZ}$ is in normal form, $\exists (y',z')\in (\mathcal{Y,Z})$ s.t. $p_{XYZ}(x,y',z') \neq \alpha \cdot p_{XYZ}(x',y',z')$. Since $\alpha \neq 0$, at least one of $p(x,y',z')$ or $p(x',y',z')$ is non-zero. Assume that any one of these is non-zero, then applying the above arguments will give us that the other one should also be non-zero.
\begin{enumerate}
\item[(iv)] Repeating the steps (i)-(iii) with $(x',y,z,m_{12},m_{31})$ yields $p_{YZ}(y,z)p(m_{12},m_{31}|y,z)p(x'|m_{12},m_{31})$ = $p_{XYZ}(x',y,z)p(m_{12},m_{31}|x')$.
\item[(v)] Dividing the expression in (iii) by the expression in (iv) gives $\frac{p(x|m_{12},m_{31})}{p(x'|m_{12},m_{31})} = \alpha \cdot \frac{p(m_{12},m_{31}|x)}{p(m_{12},m_{31}|x')}$.
\item[(vi)] Repeating (i)-(v) for $(x,y',z',m_{12},m_{31})$ and $(x',y',z',m_{12},m_{31})$, we get $\frac{p(x|m_{12},m_{31})}{p(x'|m_{12},m_{31})} \neq \alpha \cdot \frac{p(m_{12},m_{31}|x)}{p(m_{12},m_{31}|x')}$, which contradicts (v).
\end{enumerate}
\end{proof}

\begin{thm}\label{thm:samp_lbs}
Any secure sampling protocol $\Pi(p_{XYZ})$, where $p_{XYZ}$ is in normal form, should satisfy the following lower bounds on the entropy of the transcripts on each link.
\begin{align*}
H(M_{23}) &\geq RI(X;Z) + RI(X;Y) + H(Y,Z|X), \\
H(M_{31}) &\geq RI(Y;Z) + RI(X;Y) + H(X,Z|Y), \\
H(M_{12}) &\geq RI(X;Z) + RI(Y;Z) + H(X,Y|Z).
\end{align*}
\end{thm}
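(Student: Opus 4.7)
\medskip\noindent
\textbf{Proof proposal.} I would follow the same skeleton used for \Theoremref{prelim_lbs}/\Theoremref{lowerbound}, but exploit the fact that in the sampling setup there are \emph{no inputs at all}, so the hypothesis of \Lemmaref{infoineq} (inputs to the three users are independent) is satisfied vacuously. This is what lets both residual-information terms appear simultaneously on the right-hand side, rather than only their maximum as in \Theoremref{prelim_lbs}. I will prove the bound for $H(M_{31})$; the other two follow by the obvious relabelling of the roles of $(X,Y,Z)$ and of the links.

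Start with the chain-rule decomposition
\[
H(M_{31}) \;=\; I(M_{31};M_{12}) \;+\; I(M_{31};M_{23}\mid M_{12}) \;+\; H(M_{31}\mid M_{12},M_{23}),
\]
and apply \Lemmaref{infoineq} (with $\gamma=1,\alpha=3,\beta=2$) to the first term to get $I(M_{31};M_{12}) \ge I(M_{31};M_{12}\mid M_{23})$. The task then reduces to lower-bounding three pieces: $I(M_{31};M_{12}\mid M_{23})$, $I(M_{31};M_{23}\mid M_{12})$, and $H(M_{31}\mid M_{12},M_{23})$.

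For the last piece, \Lemmaref{cutset_samp} implies that $X$ is a function of $(M_{12},M_{31})$ and $Z$ is a function of $(M_{23},M_{31})$, so $H(M_{31}\mid M_{12},M_{23}) \ge H(X,Z\mid M_{12},M_{23})$; and privacy against Bob gives $H(X,Z\mid M_{12},M_{23},Y)=H(X,Z\mid Y)$, so conditioning-reduces-entropy yields $H(M_{31}\mid M_{12},M_{23}) \ge H(X,Z\mid Y)$. For $I(M_{31};M_{23}\mid M_{12})$, I will rewrite it as $I(M_{12},M_{31};M_{12},M_{23}\mid M_{12})$, then use \Lemmaref{cutset_samp} to attach $X$ to the first tuple and $Y$ to the second, use the alternate definition \eqref{eq:residual_info} of $RI$ with $Q=M_{12}$ to pass to $RI(X,M_{12},M_{31};Y,M_{12},M_{23})$, and finally invoke the secure data-processing inequality (\Lemmaref{monotone}) with the Markov chains $(M_{12},M_{31})-X-Y$ (privacy against Alice) and $X-Y-(M_{12},M_{23})$ (privacy against Bob) to conclude $I(M_{31};M_{23}\mid M_{12}) \ge RI(X;Y)$. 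An entirely analogous calculation, based this time on privacy against Charlie and against Bob, gives $I(M_{31};M_{12}\mid M_{23}) \ge RI(Y;Z)$. Summing the three lower bounds yields $H(M_{31}) \ge RI(Y;Z)+RI(X;Y)+H(X,Z\mid Y)$, as required.

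The only step that requires any care is verifying that the hypotheses of \Lemmaref{infoineq} are genuinely met here; I expect this to be the main conceptual point of the proof, because in the secure-computation setting of \Theoremref{lowerbound} one had to invoke distribution switching to bring the inputs to a product form before the same inequality could be applied, whereas in the sampling setting the inputs are absent and so one gets the stronger ``sum'' of two residual-information terms directly. The remaining manipulations are routine combinations of \Lemmaref{cutset_samp}, \Lemmaref{monotone}, and the definition~\eqref{eq:residual_info} of residual information, and by symmetry the same template produces the bounds on $H(M_{23})$ and $H(M_{12})$.
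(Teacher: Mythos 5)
Your proposal is correct and follows essentially the same route as the paper: the identical three-term chain-rule decomposition of $H(M_{31})$, the application of \Lemmaref{infoineq} to replace $I(M_{12};M_{31})$ by $I(M_{12};M_{31}\mid M_{23})$ (justified, exactly as you note, by the absence of inputs in the sampling setting), and the same three sub-bounds via \Lemmaref{cutset_samp}, the alternate definition \eqref{eq:residual_info} of $RI$, and \Lemmaref{monotone}. The only difference is cosmetic: the paper cites the proof of \Theoremref{prelim_lbs} for the three sub-bounds rather than re-deriving them, whereas you spell them out.
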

\begin{proof}
From \Lemmaref{cutset_samp} we have $H(X|M_{12},M_{31})=0$, $H(Y|M_{12},M_{23})=0$, and $H(Z|M_{23},M_{31})=0$. 
Note that we can apply \Lemmaref{infoineq} for secure sampling of {\em dependent} $X$, $Y$, and $Z$, 
because, in the beginning users only have independent randomness, but no inputs. 
In the end, they output from a joint distribution $p_{XYZ}$, where $X$, $Y$ and $Z$ may be dependent, 
but this does not affect the requirements of \Lemmaref{infoineq} in any way. The proof for $H(M_{23})$ is given below; the other two bounds follows similarly.
\begin{align*}
&H(M_{31}) = I(M_{12};M_{31}) + H(M_{31}|M_{12}) \\
&= I(M_{12};M_{31}) + I(M_{31};M_{23}|M_{12}) + H(M_{31}|M_{12},M_{23})\\ 
&\stackrel{\text{(a)}}{\ge} I(M_{12};M_{31}|M_{23}) + I(M_{31};M_{23}|M_{12}) \\
&\qquad \qquad + H(M_{31}|M_{12},M_{23})\\ 
&\stackrel{\text{(b)}}{\ge} RI(Y;Z) + RI(X;Y) + H(X,Z | Y),
\end{align*}
where (a) used $I(M_{12};M_{31}) \ge I(M_{12};M_{31}|M_{23})$, which follows from \Lemmaref{infoineq}; (b) used $I(M_{12};M_{31}|M_{23}) \geq RI(Y;Z)$, $I(M_{31};M_{23}|M_{12}) \geq RI(X;Y)$, and $H(M_{31}|M_{12},M_{23}) \geq H(X,Z | Y)$, all of which we have shown in the proof of \Theoremref{prelim_lbs}.
\end{proof}

We remark that if the marginal distributions satisfy $p_{XY} = p_X p_Y$ (i.e., $X$ and $Y$ are independent), then a secure computation protocol for $p_{Z|XY}$ can be turned into a secure sampling protocol (with the same communication costs), by having Alice and Bob locally sample inputs $X$ and $Y$ according to $p_X$ and $p_Y$ and then run the computation protocol. So, whenever $X$ and $Y$ are independent, the lower bounds on communication for secure sampling imply lower bounds for secure computation.

\paragraph{Correlated Multi-Secret Sharing Schemes.}
We define a notion of secret-sharing, called Correlated Multi-Secret Sharing (CMSS) that is closely related to secure sampling/computation problem. We will show that lower bounds on the entropy of shares of such secret-sharing schemes will also be lower bounds on entropy of transcripts for the corresponding secure computation protocols. However, we shall show a separation between the efficiency of secret-sharing (where there is an omniscient dealer) and a protocol, using the stronger lower bounds we have established in \Subsectionref{improved_lbs}.
\begin{defn}
Given a graph $G=(V,E)$, an adversary structure $\mathcal{A} \subseteq 2^V$, and a joint distribution $p_{(X_v)_{v\in V}}$ over random variables $X_v$ indexed by $v\in V$, a correlated multiple secret sharing scheme for $(G,p_{(X_v)_{v\in V}})$ defines a distribution $p_{(M_e)_{e\in E}|(X_v)_{v\in V}}$ of shares $M_e$ for each edge $e\in E$, such that the following hold. Below, for $S\subseteq E$, $M_S$ stands for the collection of all $M_e$ for $e\in
S$; similarly $X_T$ is defined for $T\subseteq V$; $E_v\subseteq E$ denotes the set of edges incident on a vertex $V$.
\begin{itemize}
\item Correctness: For all $v\in V$, $H(X_v|M_{E_v})=0$.
\item Privacy: For every set $T\in \mathcal{A}$, let $E_T=\cup_{v\in T} E_v$; then, $I(X_{\overline{T}};M_{E_T}|X_T)= 0$.
\end{itemize}
\end{defn}
Below we give a specialised version of the above general definition which is suitable to our setting, where $G$ is the clique over the vertex set $V=\{1,2,3\}$, and $\mathcal{A}=\{ \{1\}, \{2\}, \{3\} \}$ (corresponding to 1-privacy).

We define $\Sigma$ to be a {\em correlated multi-secret sharing scheme for a joint distribution $p_{XYZ}$} (with respect to our fixed adversary structures) if it probabilistically maps secrets $(X,Y,Z)$ to shares $M_{12},M_{23},M_{31}$ such that the following conditions hold:
\begin{itemize}
\item Correctness: $H(X|M_{12},M_{31}) = H(Y|M_{12},M_{23}) = H(Z|M_{23},M_{31}) = 0$.
\item Privacy: 
\begin{align*}
I(M_{12},M_{31};Y,Z|X) &=0 \quad \text{(privacy against Alice)}, \\
I(M_{12},M_{23};X,Z|Y) &=0 \quad \text{(privacy against Bob)}, \\
I(M_{23},M_{31};X,Y|Z) &=0 \quad \text{(privacy against Charlie)}.
\end{align*}

\end{itemize}
We point out that while the correctness condition relates only to the supports of $X$, $Y$, and $Z$ individually, the privacy condition is crucially influenced by the joint distribution.

\begin{thm}\label{thm:CMSS_prelim_lbs}
Any CMSS scheme for any joint distribution $p_{XYZ}$ satisfies
\begin{align*}
H(M_{12}) &\geq \max\{RI(X;Z), RI(Y;Z)\} + H(X,Y|Z), \\
H(M_{23}) &\geq \max\{RI(X;Z), RI(X;Y)\} + H(Y,Z|X), \\
H(M_{31}) &\geq \max\{RI(Y;Z), RI(X;Y)\} + H(X,Z|Y).
\end{align*}
\end{thm}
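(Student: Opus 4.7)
The plan is to observe that the three bounds are identical in form to those of \Theoremref{prelim_lbs}, and that the two ingredients used in that theorem's proof---namely \Lemmaref{general_cutset} and the secure data-processing inequality (\Lemmaref{monotone})---are both directly available in the CMSS setting. In fact, the CMSS correctness axiom $H(X|M_{12},M_{31}) = H(Y|M_{12},M_{23}) = H(Z|M_{23},M_{31}) = 0$ simply postulates the conclusion of \Lemmaref{general_cutset}, and the CMSS privacy axioms are literally the same Markov conditions (e.g., $(M_{12},M_{31})-X-(Y,Z)$ from privacy against Alice) used to invoke \Lemmaref{monotone}. So no new machinery is required: the proof of \Theoremref{prelim_lbs} transcribes verbatim, with the protocol transcripts replaced by the dealer-generated shares.

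Concretely, I would prove the bound on $H(M_{31})$ first and then obtain the other two by symmetry. Start with the decomposition
\begin{align*}
H(M_{31}) &\ge \max\{H(M_{31}|M_{12}), H(M_{31}|M_{23})\} \\
&= \max\{I(M_{31};M_{23}|M_{12}), I(M_{31};M_{12}|M_{23})\} \\
&\quad + H(M_{31}|M_{12},M_{23}).
\end{align*}
For the last term, use CMSS correctness to adjoin $X,Z$ (since $X$ is determined by $(M_{12},M_{31})$ and $Z$ by $(M_{23},M_{31})$) and then privacy against Bob $I(M_{23};X,Z|Y)=0$ (which also implies $M_{12}$ can be dropped after conditioning on $Y$, once $X$ is known), to conclude $H(M_{31}|M_{12},M_{23}) \ge H(X,Z|Y)$. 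For the max terms, introduce $Q = M_{12}$ in the definition of residual information \eqref{eq:residual_info}, again use correctness to adjoin $X$ to $(M_{12},M_{31})$ and $Y$ to $(M_{12},M_{23})$, and then apply \Lemmaref{monotone} with the Markov chains $(M_{12},M_{31})-X-Y$ (privacy against Alice) and $(M_{12},M_{23})-Y-X$ (privacy against Bob) to get $I(M_{31};M_{23}|M_{12}) \ge \RI(X;Y)$. Analogously, taking $Q = M_{23}$ and invoking privacy against Bob and Charlie yields $I(M_{31};M_{12}|M_{23}) \ge \RI(Y;Z)$. Combining gives the claimed bound on $H(M_{31})$.

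The remaining two bounds are obtained by relabeling the roles of the three users: the bound on $H(M_{23})$ follows by swapping the roles of users 1 and 3 (so $X \leftrightarrow Z$), and the bound on $H(M_{12})$ follows by swapping the roles of users 2 and 3 (so $Y \leftrightarrow Z$). In each case, the CMSS definition is symmetric in the three users---every user simultaneously holds one secret and is adversarial with respect to the other two---so the same argument applies.

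The ``main obstacle'' is really a non-obstacle: one might initially worry that \Lemmaref{infoineq} (which was critical for the improved bounds in \Subsectionref{improved_lbs}) fails in the CMSS setting because the shares are produced by an omniscient dealer rather than by an interactive protocol, and indeed this is exactly why the CMSS bounds are weaker than the protocol bounds and why this separation is interesting (as advertised in \Subsectionref{examples}). However, the preliminary bounds of \Theoremref{prelim_lbs} never invoked \Lemmaref{infoineq}; they relied only on the cut-set property and the secure data-processing inequality, both of which hold for CMSS. Hence the proof goes through cleanly.
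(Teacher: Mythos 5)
Your proposal is correct and takes essentially the same route as the paper, whose proof consists precisely of the observation that the argument for \Theoremref{prelim_lbs} carries over once the CMSS correctness axiom replaces \Lemmaref{general_cutset}, with the privacy axioms supplying the Markov chains for \Lemmaref{monotone}. One small slip in the symmetry step: to obtain the $H(M_{23})$ bound from the $H(M_{31})$ bound you should swap users 1 and 2 (i.e., $X \leftrightarrow Y$), since swapping users 1 and 3 maps the link $31$ to itself (and maps the $M_{12}$ bound to the $M_{23}$ bound rather than the $M_{31}$ bound to it); your stated symmetry for $H(M_{12})$, swapping users 2 and 3, is correct.
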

\begin{proof}
We proceed along the lines of the proof of \Theoremref{prelim_lbs}, except that here we do not need \Lemmaref{general_cutset} to argue that $H(X|M_{12},M_{31})$ = $H(Y|M_{12},M_{23})$ = $H(Z|M_{23},M_{31}) = 0$, instead, these follow from the correctness of CMSS.
\end{proof}

If $p_{XYZ}=p_{XY}p_{Z|XY}$, where $p_{XY}$ has full support and  $p_{Z|XY}$ is in normal form, using \Lemmaref{general_cutset}, the bounds in \Theoremref{CMSS_prelim_lbs} imply bounds in \Theoremref{prelim_lbs}. If $p_{XYZ}$ has full support, then we can further strengthen the bounds in \Theoremref{CMSS_prelim_lbs} by applying distribution switching.

\begin{thm}\label{thm:CMSS_lbs}
Consider any CMSS scheme for a joint distribution $p_{XYZ}$, where $p_{XYZ}$ has full support.
\begin{align*}
H(M_{12}) \geq \max\left\{\begin{array}{l}\displaystyle \max_{p_{X'Y'Z'}}RI(X';Z')+H(X',Y'|Z'), \\
\displaystyle \max_{p_{X'Y'Z'}}RI(Y';Z') + H(X',Y'|Z')\end{array}\right\},
\end{align*}
where $p_{X'Y'Z'}$ is any distribution for which the characteristic bipartite graph of $p_{X'Y'}$ is connected.
\begin{align*}
H(M_{23}) \geq \max\left\{\begin{array}{l}\displaystyle \max_{p_{X'Y'Z'}}RI(X';Z')+H(Y',Z'|X'), \\
\displaystyle \max_{p_{X'Y'Z'}}RI(X';Y') + H(Y',Z'|X')\end{array}\right\},
\end{align*}
where $p_{X'Y'Z'}$ is any distribution for which the characteristic bipartite graph of $p_{Y'Z'}$ is connected.
\begin{align*}
H(M_{31}) \geq \max\left\{\begin{array}{l}\displaystyle \max_{p_{X'Y'Z'}}RI(Y';Z')+H(X',Z'|Y'), \\
\displaystyle \max_{p_{X'Y'Z'}}RI(X';Y') + H(X',Z'|Y')\end{array}\right\},
\end{align*}
where $p_{X'Y'Z'}$ is any distribution for which the characteristic bipartite graph of $p_{X'Z'}$ is connected.

\end{thm}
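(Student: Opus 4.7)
The plan is to use distribution switching, analogous to the technique in \Subsectionref{improved_lbs} for secure protocols. Fix a CMSS scheme for $p_{XYZ}$, given by a conditional distribution $p_{M_{12}M_{23}M_{31}|XYZ}$. For any alternative joint $p_{X'Y'Z'}$ on the same alphabet $\mathcal{X}\times\mathcal{Y}\times\mathcal{Z}$, I would first check that re-using this conditional yields a valid CMSS for $p_{X'Y'Z'}$: the pointwise privacy identity $p_{M_{12}M_{31}|XYZ}(m|x,y,z)=p_{M_{12}M_{31}|X}(m|x)$ (extracted from $I(M_{12},M_{31};Y,Z|X)=0$) holds for every $(x,y,z)$ in the alphabet by full support of $p_{XYZ}$, and therefore continues to hold on the (possibly smaller) support of $p_{X'Y'Z'}$; the decoding function witnessing $H(X|M_{12},M_{31})=0$ transports similarly. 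Applying \Theoremref{CMSS_prelim_lbs} to this switched CMSS then yields
\[H(M'_{12})\geq\max\{RI(X';Z'),\,RI(Y';Z')\}+H(X',Y'|Z'),\]
with analogous bounds on $H(M'_{23})$ and $H(M'_{31})$.

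The decisive step is to identify $H(M_{12})$ under the original distribution with $H(M'_{12})$ under the switched distribution. The proof of \Lemmaref{XYZ_M123} uses only the privacy conditions, which are common to secure protocols and CMSS, so it applies here and gives $I(M_{12};X,Y,Z)\leq I(X;Y)-RI(X;Y)$. Because $p_{XYZ}$ has full support, the marginal $p_{XY}$ has a connected characteristic bipartite graph, so $I(X;Y)=RI(X;Y)$ and thus $I(M_{12};X,Y,Z)=0$. This forces $p_{M_{12}|XYZ}(m|x,y,z)=p_{M_{12}}(m)$ for every $(x,y,z)$ on the full alphabet, which immediately gives
\[p_{M'_{12}}(m)=\sum_{x,y,z}p_{X'Y'Z'}(x,y,z)\,p_{M_{12}|XYZ}(m|x,y,z)=p_{M_{12}}(m),\]
and hence $H(M_{12})=H(M'_{12})$. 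Combining this with the bound above and maximizing over admissible $p_{X'Y'Z'}$ yields the claimed inequality on $H(M_{12})$. The bounds on $H(M_{23})$ and $H(M_{31})$ follow by symmetry, since full support of $p_{XYZ}$ likewise forces $p_{YZ}$ and $p_{XZ}$ to have connected characteristic bipartite graphs, giving $I(M_{23};X,Y,Z)=I(M_{31};X,Y,Z)=0$ and permitting the same switching argument with the appropriate permutation of roles.

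The main obstacle I expect lies in the bookkeeping of the first step: rigorously verifying that the switched conditional defines a valid CMSS requires handling each privacy constraint and correctness decoder separately. The argument rests critically on the full-support hypothesis for $p_{XYZ}$, which promotes the privacy and correctness identities from the original support to the entire alphabet, so that restricting to an arbitrary switched support presents no difficulty. The connectedness hypothesis on $p_{X'Y'}$ stated in the theorem is in fact automatic once one appeals to the full support of $p_{XYZ}$ as above; it is retained in the statement because it offers an alternative route to $H(M_{12})=H(M'_{12})$, via applying \Lemmaref{XYZ_M123} to the switched CMSS directly (yielding $I(M'_{12};X',Y',Z')=0$ whenever $p_{X'Y'}$ is connected), which remains available in generalizations where the original $p_{XYZ}$ is not fully supported.
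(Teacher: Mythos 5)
Your proposal is correct and follows essentially the same route as the paper's own (much terser) proof: establish that distribution switching preserves the CMSS correctness and privacy conditions, use the privacy conditions together with connectedness of the relevant characteristic bipartite graphs (automatic here from full support of $p_{XYZ}$) to conclude that each transcript is independent of $(X,Y,Z)$ and hence has an invariant marginal under switching, and then apply \Theoremref{CMSS_prelim_lbs} under the switched distribution. Your observation that the connectedness hypothesis on the switched distribution is not actually needed once $p_{XYZ}$ has full support is also accurate.
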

\begin{proof}
First we observe that we can apply distribution switching to CMSS schemes also, i.e., if we have a CMSS $\Sigma(p_{XYZ})$, where $p_{XYZ}$ has full support, it will remain a CMSS if we change the distribution to a different one $p_{X'Y'Z'}$. This follows from the correctness and privacy conditions of a CMSS. Proceeding as in the proof of \Lemmaref{XYZ_inde_M123}, we can show that for any CMSS $\Sigma(p_{XYZ})$, connectedness of the characteristic bipartite graph of $p_{XY}$ implies $I(X,Y,Z;M_{12})=0$. The other two, i.e., connectedness of the characteristic bipartite graph of $p_{XZ}$ implies $I(X,Y,Z;M_{31})=0$,
 and connectedness of the characteristic bipartite graph of $p_{YZ}$ implies $I(X,Y,Z;M_{23})=0$, follow similarly. Now, we can apply the distribution switching to the bounds in \Theoremref{CMSS_prelim_lbs}.
\end{proof}

It is easy to see that any secure sampling protocol $\Pi(p_{XYZ})$, where $p_{XYZ}$ is in normal form, yields a CMSS scheme for
the same joint distribution $p_{XYZ}$: An omniscient dealer can always
produce the shares $M_{12},M_{23},M_{31}$ which are precisely the
transcripts produced by the secure sampling protocol. Now, correctness for
this CMSS follows from \Lemmaref{cutset_samp}, and
privacy of CMSS scheme follows from the privacy of the secure sampling
protocol. Thus the lower bounds on the transcripts produced by a CMSS
scheme for a given $p_{XYZ}$ in normal form, gives lower bounds on the corresponding links for any secure sampling protocol for this $p_{XYZ}$. Furthermore, if $p_{XYZ}=p_{XY}p_{Y|XY}$, where $p_{XY}$ has full support and $p_{Z|XY}$ is in normal form, then lower bounds for CMSS schemes provide lower bounds for secure
computation problems. As we discuss in \Pageref{thm:and}, these lower
bounds are not tight in general for secure computation, i.e., there is a function (in fact the {\sc
and} function) for which there is a CMSS scheme which requires less
communication than what our lower bounds for secure computation for that
function provide. Towards this, here we give upper bounds on the share
sizes of a 3-user CMSS for {\sc and}, which is defined as $X$ and $Y$
independent and uniformly distributed bits, and $Z=X\land Y$.

\begin{thm}\label{thm:gap_CMSS}
For $p_{XYZ}$ such that $X$ and $Y$ independent and uniformly distributed
bits, and $Z=X\land Y$, there is a CMSS $\Sigma(p_{XYZ})$ which has
$H(M_{12})=H(M_{23})=H(M_{31})=\log(3)$.
\end{thm}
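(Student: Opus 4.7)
The plan is to exhibit an explicit CMSS scheme in which each share $M_{ij}$ is ternary, $M_{ij}\in\{0,1,2\}$, and is marginally uniformly distributed, so that $H(M_{ij})=\log 3$ as required. The reconstruction rule I would use is the ``equality gadget''
\[
\phi_X(m_{12},m_{31})=1_{m_{12}=m_{31}},\quad \phi_Y(m_{12},m_{23})=1_{m_{12}=m_{23}},\quad \phi_Z(m_{23},m_{31})=1_{m_{23}=m_{31}}.
\]
This rule is compatible with $Z=X\wedge Y$ on every triple in $\{0,1,2\}^3$: all three coordinates equal forces $X=Y=Z=1$; all three coordinates pairwise distinct forces $X=Y=Z=0$; and exactly two coordinates equal forces the corresponding secret (among $X,Y,Z$) to be $1$ and the other two to be $0$, which is again consistent with $Z=X\wedge Y$.

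Having fixed the reconstruction, the dealer, given $(X,Y)$ and $Z=X\wedge Y$, samples $(M_{12},M_{23},M_{31})$ uniformly from the admissible set: the $3$ constant triples when $(X,Y)=(1,1)$; the $6$ all-distinct triples when $(X,Y)=(0,0)$; the $6$ triples of the form $(c,c,d)$ with $c\neq d$ when $(X,Y)=(0,1)$; and the $6$ triples of the form $(c,d,c)$ with $c\neq d$ when $(X,Y)=(1,0)$. Correctness is immediate from the construction, and a short counting argument (each value in $\{0,1,2\}$ appears in each coordinate of the admissible set for each input with matching multiplicity) shows that every $M_{ij}$ has uniform marginal on $\{0,1,2\}$, yielding $H(M_{ij})=\log 3$.

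The substantive step is verifying the three privacy conditions by direct enumeration. For privacy against Alice, I would check that the conditional distribution of $(M_{12},M_{31})$ given $X$ does not depend on which admissible $(Y,Z)$ pair occurs: under $X=1$ it is uniform on the three diagonal pairs in both the $(Y,Z)=(1,1)$ and $(Y,Z)=(0,0)$ sub-cases, and under $X=0$ it is uniform on the six off-diagonal pairs in both the $(Y,Z)=(0,0)$ and $(Y,Z)=(1,0)$ sub-cases. Privacy against Bob follows by the $X\leftrightarrow Y$ (and $M_{31}\leftrightarrow M_{23}$) symmetry of both the construction and the input distribution. Privacy against Charlie is trivial when $Z=1$, and for $Z=0$ reduces to showing that the projection onto $(M_{23},M_{31})$ of the admissible triples for each input $(X,Y)\in\{(0,0),(0,1),(1,0)\}$ is the uniform distribution on the six off-diagonal pairs, which is again a direct count.

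There is no real obstacle here: the scheme is fully explicit and the verifications are finite enumerations. The essential idea is the equality-gadget reconstruction, which converts the AND-relation on two input bits into a fully symmetric ternary-equality condition on three shares and makes all three privacy conditions hold simultaneously with only ternary shares. This flexibility is precisely what is unavailable to a secure computation protocol for the same function (cf.\ \Theoremref{and}), witnessing the promised separation between CMSS and secure computation.
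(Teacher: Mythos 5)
Your scheme is exactly the one in the paper: sampling uniformly from your four admissible sets is the same distribution as the paper's construction via a uniform random permutation $(\alpha,\beta,\gamma)$ of $\{0,1,2\}$ with $M_{12}=\alpha$, $M_{31}\in\{\alpha,\beta\}$ selected by $X$, and $M_{23}\in\{\alpha,\gamma\}$ selected by $Y$, and your enumeration-based privacy checks match the paper's observation that $(M_{12},M_{31})$ is uniform on the diagonal or off-diagonal pairs according to $X$. One small slip in your write-up: the equality gadget is \emph{not} compatible with $Z=X\wedge Y$ on every triple in $\{0,1,2\}^3$ (a triple with $M_{23}=M_{31}\neq M_{12}$ decodes to $(X,Y,Z)=(0,0,1)$), but this is harmless since such triples lie outside the dealer's support.
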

\begin{proof}
Consider a CMSS scheme $\Sigma$ defined as follows.
Let $(\alpha,\beta,\gamma)$ be a random permutation of the set
$\{0,1,2\}$. Let $M_{12}=\alpha$ and
\begin{align*}
M_{31}=\begin{cases}
\alpha \qquad \text{if }X=1, \\
\beta  \qquad \text{if }X=0,
\end{cases}
&
\qquad
M_{23}=\begin{cases}
\alpha \qquad \text{if }Y=1, \\
\gamma  \qquad \text{if }Y=0.
\end{cases}
\end{align*}
It can be seen that this scheme satisfies the correctness and privacy
requirements (in particular, 
$(M_{12},M_{31})$ is uniformly random, conditioned on
$M_{12}=M_{31}$ when $X=1$ and conditioned on
$M_{12}\not=M_{31}$  when $X=0$).
$H(M^\Sigma_{12})=H(M^\Sigma_{23})=H(M^\Sigma_{31})=\log 3 < 1.585$.

\noindent \Theoremref{CMSS_lbs} implies that this scheme is optimal.
\end{proof}

\section{Details Omitted from \Sectionref{asymp_lowerbounds}}\label{app:asymp_proofs}
\begin{proof}[Proof of \Lemmaref{two-user_lower-bound}] We define the following function:
\begin{align}
R_f(\delta,D) := \displaystyle \min_{\substack{p_{U|XY}: \\ I(U;Y|X)\leq \delta \\ \exists g: \mathbb{E}[d_H(f(X,Y),g(U,Y))] \leq D}} I(U;X|Y), \label{eq:approx_orlitsky-roche}
\end{align}
where $d_H$ is the Hamming distortion function.
Note that $R_f^{\WZ}(D)=R_f(0,D)$. Now define the rate-region tradeoff corresponding to \eqref{eq:approx_orlitsky-roche} as follows:
\begin{align}
\mathcal{T}_{R_f}(X,Y) &= \{(R_1,R_2,D): \exists p_{U|XY} \text{ and a function } g, \nonumber \\
& \text{ for which } I(U;X|Y)\leq R_1, I(U;Y|X) \leq R_2, \nonumber \\
& \text{ and } \mathbb{E}[d_H(f(X,Y),g(U,Y))]\leq D \}. \label{eq:region_orlitsky-roche}
\end{align}
\begin{lem}\label{lem:properties_OR-region}
$\mathcal{T}_{R_f}(X,Y)$, as defined in \eqref{eq:region_orlitsky-roche}, is a closed and convex set. Hence, $R_f(\delta,D)$ is convex in $(\delta,D)$.
\end{lem}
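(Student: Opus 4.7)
The plan is to establish convexity by a standard time-sharing construction, closedness by a Carath\'eodory-type alphabet bound combined with compactness, and then deduce convexity of $R_f(\delta,D)$ as a direct corollary of convexity of $\mathcal{T}_{R_f}(X,Y)$.

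First I would prove convexity. Given any two tuples $(R_1^{(i)},R_2^{(i)},D^{(i)}) \in \mathcal{T}_{R_f}(X,Y)$, $i=1,2$, achieved by conditional distributions $p_{U_i|XY}$ together with reconstruction functions $g_i$, and given $\lambda\in[0,1]$, introduce an independent time-sharing variable $T$ with $\Pr[T=1]=\lambda$, $\Pr[T=2]=1-\lambda$, and define $U:=(T,\widetilde U)$ where $\widetilde U \mid (X,Y,T{=}i) \sim p_{U_i|XY}$. Since $T$ is independent of $(X,Y)$, we have $I(U;X|Y)=I(\widetilde U;X|Y,T)=\lambda I(U_1;X|Y)+(1-\lambda)I(U_2;X|Y)$, and similarly $I(U;Y|X)$ is the corresponding convex combination. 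Defining $g(U,Y):=g_T(\widetilde U,Y)$ makes the expected Hamming distortion equal to $\lambda D^{(1)}+(1-\lambda)D^{(2)}$. Hence the $\lambda$-convex combination of the two tuples lies in $\mathcal{T}_{R_f}(X,Y)$.

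Next I would handle closedness. The main obstacle is that the set of admissible $p_{U|XY}$ is a priori not compact because the alphabet $\mathcal{U}$ is unrestricted. To get around this I would invoke a Carath\'eodory-type argument: the three functionals $(p_{U|XY},g)\mapsto (I(U;X|Y),I(U;Y|X),\mathbb{E}[d_H(f(X,Y),g(U,Y))])$ can be expressed as expectations over $U$ of quantities depending only on the conditional distribution of $(X,Y)$ given $U$ and on $g(U,\cdot)$. A standard support-lemma argument (e.g., as in \cite[App.~C]{ElgamalKim11}) then shows that one may restrict attention to $|\mathcal{U}|\le|\mathcal{X}||\mathcal{Y}|+2$ without enlarging the region, and further, one may take $g$ to be a deterministic map $\mathcal{U}\times\mathcal{Y}\to\mathcal{Z}$ with at most finitely many choices once $|\mathcal{U}|$ is bounded. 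The set of conditionals $p_{U|XY}$ on this bounded alphabet is compact, the three information/distortion functionals are continuous in $p_{U|XY}$, and for each of the finitely many $g$'s the corresponding slice of $\mathcal{T}_{R_f}(X,Y)$ is the image of a compact set under a continuous map, hence compact. The full region is the union of these finitely many upward-closed shifts (one may always increase $R_1,R_2,D$), and the upward closure of a finite union of compact sets is closed.

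Finally, convexity of $R_f(\delta,D)$ follows by a routine argument from convexity of $\mathcal{T}_{R_f}(X,Y)$. Fix $(\delta_1,D_1),(\delta_2,D_2)$ and $\lambda\in[0,1]$, and set $(\delta,D):=\lambda(\delta_1,D_1)+(1-\lambda)(\delta_2,D_2)$. For any $\varepsilon>0$ choose $p_{U_i|XY}$ achieving $I(U_i;X|Y)\le R_f(\delta_i,D_i)+\varepsilon$ with $I(U_i;Y|X)\le\delta_i$ and distortion $\le D_i$. The time-sharing construction above yields a $U$ with $I(U;Y|X)\le\delta$, distortion $\le D$, and $I(U;X|Y)\le\lambda R_f(\delta_1,D_1)+(1-\lambda)R_f(\delta_2,D_2)+\varepsilon$. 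Taking $\varepsilon\downarrow 0$ gives $R_f(\delta,D)\le\lambda R_f(\delta_1,D_1)+(1-\lambda)R_f(\delta_2,D_2)$, which is the desired convexity.
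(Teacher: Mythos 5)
Your proposal is correct and follows essentially the same route as the paper: time-sharing via an independent selector variable for convexity (and for the convexity of $R_f(\delta,D)$ as a corollary), and a Carath\'eodory/Fenchel--Eggleston alphabet bound plus compactness and continuity for closedness. The only cosmetic difference is that you handle the reconstruction map by taking an upward-closed finite union over the finitely many $g$'s, whereas the paper absorbs $\min_g$ into a single continuous functional and takes the increasing hull of its image; these are interchangeable.
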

\begin{proof}
\noindent {\bf Closedness:} Let $\mathcal{P}_{XY}$ denote the set of all conditional p.m.f.'s $p_{U|XY}$. Since $\mathcal{X}$ and $\mathcal{Y}$ are finite alphabets, it follows from the Fenchel-Eggleston's strengthening of Carath$\acute{\text{e}}$odory's theorem \cite[pg. 310]{CsiszarKorner81}, that we can restrict the alphabet size of $U$ s.t. $|\mathcal{U}| \leq |\mathcal{X}|\cdot|\mathcal{Y}|+2$. This implies that $\mathcal{P}_{XY}$ is a compact set (since it is closed and bounded).
For a fixed $p_{XY}$, consider the following function:
\begin{align}
m(p_{U|XY}) &= (I(U;X|Y),I(U;Y|X),\nonumber \\
&\qquad\min_g \mathbb{E}[d_H(f(X,Y),g(U,Y)]) \label{eq:continuity_func}.
\end{align}
Note that $\mathcal{T}_{R_f}(X,Y)$ is the increasing hull of $\image(m)$ -- image of the function $m$ -- where increasing hull of a set $S\subseteq \mathbb{R}^3$ is defined as $\{(a,b,c) \in \mathbb{R}^3: \exists (a',b',c') \in S \text{ s.t. } a' \leq a, b' \leq b, \text{ and } c' \leq c\}$. Since the increasing hull of a closed set is always closed, it is enough to show that $\image(m)$ is closed. We show, below, that $m$ is continuous in $p_{U|XY}\in\mathcal{P}_{XY}$; this proves closedness of the set $\image(m)$, since image of a compact set under a continuous function is always compact -- and therefore closed.

In order to show that $m(p_{U|XY})$ is continuous in $p_{U|XY}$, we need to show that $I(U;X|Y)$, $I(U;Y|X)$, and $\min_g \mathbb{E}[d_H(f(X,Y),g(U,Y)]$ are continuous in $p_{U|XY}\in\mathcal{P}_{XY}$. It is well known that conditional mutual information is a continuous function of the distribution. To show that $\min_g \mathbb{E}[d_H(f(X,Y),g(U,Y)]$ is continuous in $p_{U|XY}$, it is sufficient to show that $\mathbb{E}[d_H(f(X,Y),g(U,Y)]$ is continuous for every choice of $g$. This is because there are only finitely many functions $g:\mathcal{U}\times\mathcal{Y}\to\mathcal{Z}$, $\min$ is a continuous function, and composition of two continuous functions is continuous. 

Consider $p_{U|XY},p_{U'|XY}\in\mathcal{P}_{XY}$ such that $\sum_{x,y,u} p_{XY}(x,y)|p_{U|XY}(u|x,y)-p_{U'|XY}(u|x,y)| \leq \gamma$. Since the value of Hamming distortion function $d_H$ is at most 1, it can be easily seen that for every function $g$,  $|\mathbb{E}[d_H(f(X,Y),g(U,Y)]-\mathbb{E}[d_H(f(X,Y),g(U',Y)]| \leq \gamma$. Hence $\mathbb{E}[d_H(f(X,Y),g(U,Y)]$ is continuous in $p_{U|XY}$ for every $g$.

This proves the closedness of $\mathcal{T}_{R_f}(X,Y)$, which justifies taking $\min$, instead of $\inf$, in the definition of $R_f(\delta,D)$ in \eqref{eq:approx_orlitsky-roche}.

\noindent {\bf Convexity:} Let $(R_1^{(0)},R_2^{(0)},D^{(0)})$, $(R_1^{(1)},R_2^{(1)},D^{(1)}) \in \mathcal{T}_{R_f}(X,Y)$, and let $(p_{U_0|XY},g_0)$ and $(p_{U_1|XY},g_1)$ be such that $I(U_i;X|Y)\leq R_1^{(i)}, I(U_i;Y|X) \leq R_2^{(i)}, \text{ and } \mathbb{E}[d_H(f(X,Y),g_i(U_i,Y))]\leq D^{(i)}$, for $i=0,1$. In order to prove the convexity of $\mathcal{T}_{R_f}(X,Y)$, we need to show that $\forall \alpha \in [0,1]$, $(R_1,R_2,D):=(\alpha R_1^{(0)} + (1-\alpha)R_1^{(1)},\alpha R_2^{(0)} + (1-\alpha)R_2^{(1)}, \alpha D^{(0)} + (1-\alpha)D^{(1)})\in\mathcal{T}_{R_f}(X,Y)$. For a given $\alpha\in[0,1]$, we show, below, that the distribution defined by random variable $U=(\Phi,U_{\Phi})$, where $\Phi\sim$ Bern($\alpha$) and independent of $(X,Y)$, and function $g$ defined by $g((\phi,u_{\phi}),y)=g_{\phi}(u_{\phi},y)$ imply $(R_1,R_2,D)\in\mathcal{T}_{R_f}(X,Y)$.
\begin{align*}
I(U;X|Y) &= I(\Phi,U_{\Phi};X|Y) = I(U_{\Phi};X|Y,\Phi) \\
&= \alpha I(U_1;X|Y) + (1-\alpha) I(U_2;X|Y) \\
&\leq \alpha R_1^{(0)} + (1-\alpha)R_1^{(1)}.
\end{align*}
Similarly we can show $I(U;Y|X) \leq \alpha R_2^{(0)} + (1-\alpha)R_2^{(1)}$. For the third quantity:
\begin{align*}
\mathbb{E}[d_H(f(X,Y),g(U,Y)] & = \alpha \mathbb{E}[d_H(f(X,Y),g_0(U_0,Y)]\\
&\; + (1-\alpha) \mathbb{E}[d_H(f(X,Y),g_1(U_1,Y)] \\
&\leq \alpha D^{(0)} + (1-\alpha) D^{(1)}.
\end{align*}
\end{proof}

\begin{lem}\label{lem:continuity_orlitsky-roche}
For a fixed pair $(f,p_{XY})$, the function $R_f(\delta,D)$, defined in \eqref{eq:approx_orlitsky-roche}, is right continuous at $(\delta,D)=(0,0)$.
\end{lem}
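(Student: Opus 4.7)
The plan is to prove right continuity at $(0,0)$ by combining the trivial monotonicity of $R_f$ with a compactness argument that reuses the closedness ingredients already assembled in the proof of \Lemmaref{properties_OR-region}. First I would note that $R_f(\delta,D)$ is non-increasing in each of its arguments, since relaxing the constraints $I(U;Y|X)\le\delta$ or $\mathbb{E}[d_H]\le D$ can only shrink the objective. Hence $R_f(\delta,D)\le R_f(0,0)$ for $\delta,D\ge 0$, and the problem reduces to showing that, for every sequence $(\delta_n,D_n)\downarrow(0,0)$, one has $\liminf_n R_f(\delta_n,D_n)\ge R_f(0,0)$.

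For each $n$, the closedness of $\mathcal{T}_{R_f}(X,Y)$ guarantees that the minimum in \eqref{eq:approx_orlitsky-roche} is attained; applying the Fenchel--Eggleston strengthening of Carath\'eodory's theorem, exactly as in \Lemmaref{properties_OR-region}, I would extract a minimizer $(p_{U_n|XY},g_n)$ with $|\mathcal{U}_n|\le|\mathcal{X}||\mathcal{Y}|+2$ satisfying $I(U_n;Y|X)\le\delta_n$, $\mathbb{E}[d_H(f(X,Y),g_n(U_n,Y))]\le D_n$, and $I(U_n;X|Y)=R_f(\delta_n,D_n)$. Passing to a subsequence, I may assume that all $\mathcal{U}_n$ have the same cardinality $k\le|\mathcal{X}||\mathcal{Y}|+2$, so the conditional pmfs $p_{U_n|XY}$ lie in a fixed compact simplex and the maps $g_n:[k]\times\mathcal{Y}\to\mathcal{Z}$ range over a finite set.

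By compactness and finiteness, I would extract a further subsequence $(n_\ell)$ along which $p_{U_{n_\ell}|XY}\to p_{U^*|XY}$ pointwise, $g_{n_\ell}\equiv g^*$ is constant, and $R_f(\delta_{n_\ell},D_{n_\ell})\to\ell:=\liminf_n R_f(\delta_n,D_n)$. Invoking the continuity (in $p_{U|XY}$) of $I(U;X|Y)$, $I(U;Y|X)$, and $\mathbb{E}[d_H(f(X,Y),g^*(U,Y))]$ already verified inside \Lemmaref{properties_OR-region}, the limit point satisfies $I(U^*;Y|X)\le\lim_\ell\delta_{n_\ell}=0$, $\mathbb{E}[d_H(f(X,Y),g^*(U^*,Y))]\le\lim_\ell D_{n_\ell}=0$, and $I(U^*;X|Y)=\ell$. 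Thus $(p_{U^*|XY},g^*)$ is feasible for $R_f(0,0)$, so $R_f(0,0)\le\ell$, which together with $R_f(\delta_n,D_n)\le R_f(0,0)$ forces $R_f(\delta_n,D_n)\to R_f(0,0)$.

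The argument is essentially bookkeeping on top of \Lemmaref{properties_OR-region}; the only mildly delicate point is ensuring a uniform alphabet bound on $\mathcal{U}_n$ so that the sequence of conditional distributions sits inside a single compact set. This is handled by the Fenchel--Eggleston bound $|\mathcal{U}_n|\le|\mathcal{X}||\mathcal{Y}|+2$, which is uniform in $n$, after which the finiteness of the set of possible decoders $g$ lets one pick a subsequence with a fixed $g^*$ without further technicalities.
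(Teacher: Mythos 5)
Your proof is correct and follows essentially the same route as the paper: both arguments rest on monotonicity of $R_f(\delta,D)$ together with the compactness/continuity machinery of \Lemmaref{properties_OR-region}. The only difference is that the paper invokes the closedness of $\mathcal{T}_{R_f}(X,Y)$ as a black box (the decreasing sequence of values converges to a limit $L$ with $(L,0,0)\in\mathcal{T}_{R_f}(X,Y)$), whereas you inline the underlying subsequence extraction of minimizers with the uniform Fenchel--Eggleston alphabet bound.
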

\begin{proof}
This is proved using the property that the region $\mathcal{T}_{R_f}(X,Y)$ is closed. From the definition of $R_f(\delta,D)$ in \eqref{eq:approx_orlitsky-roche}, it is easy to see that it is a non-increasing function of $(\delta,D)$. Now suppose, to the contrary, that $R_f(\delta,D)$ is not right continuous at $(\delta,D)=(0,0)$. This implies that there exists a monotone decreasing sequence $(\delta_m,D_m)\downarrow 0$ (i.e., $\delta_m \geq \delta_{m+1}$ and $D_m \geq D_{m+1}$, $\forall m\in \mathbb{N}$, and $\delta_m \downarrow 0$, $D_m \downarrow 0$) and $\gamma >0$ s.t. $R_f(\delta_m,D_m) \leq R_f(0,0)-\gamma$ for all $m\in\mathbb{N}$. As observed earlier, $R_f(\delta_m,D_m)$ is a monotone non-decreasing sequence that is bounded above by $R_f(0,0)$, which implies that it is convergent (since every monotone non-decreasing sequence that is bounded above is convergent). Let $L=\lim_{m\to\infty}R_f(\delta_m,D_m)$ be the limit of this sequence. We have $L\leq R_f(0,0)-\gamma < R_f(0,0)$. This contradicts the fact that $R_f(0,0)$ is the minimum value $r$ s.t. $(r,0,0)\in \mathcal{T}_{R_f}(X,Y)$, because $\mathcal{T}_{R_f}(X,Y)$ is closed (i.e., $\mathcal{T}_{R_f}(X,Y)$ contains all its limit points), implying that $(L,0,0)\in \mathcal{T}_{R_f}(X,Y)$.
\end{proof}

Now we are ready to lower-bound $I(X^n;M_{12},M_{31}|Y^n)$. For simplicity of notation, define $M:=(M_{12},M_{31})$. Note that $\hat{Z}^n$ depends on $M_{23}$ in the original problem of \Figureref{asymp_setup}. The transcript $M_{23}$ and $\hat{Z}^n$ can be sampled, conditioned on $(Y^n,M)$, by combined Bob-Charlie using additional private randomness $\Theta$, which is independent of $(X^n,Y^n,M)$. So, we can assume that the $i$-th symbol $\hat{Z}_i$ of the output is determined by a function $g_i(M,Y^n,\Theta)$. Let $U_i:=(M,Y^{i-1},Y_{i+1}^n,\Theta)$, then $\hat{Z}_i=g_i(U_i,Y_i)$.
\begin{align}
&I(X^n;M|Y^n) \nonumber \\
&= \sum_{i=1}^{n} H(X_i|Y_i) - H(X_i|Y^n,X^{i-1},M) \quad ((X_i,Y_i)\text{'s are i.i.d.}) \nonumber \\
&\stackrel{\text{(a)}}{\geq} \sum_{i=1}^{n} H(X_i|Y_i) - H(X_i|Y^n,M,\Theta) \nonumber \\
&= \sum_{i=1}^{n} I(U_i;X_i|Y_i) \quad (\text{where }U_i=(M,Y^{i-1},Y_{i+1}^n,\Theta))\\
&\stackrel{\text{(b)}}{\geq} \sum_{i=1}^{n} R_f\Big(I(U_i;Y_i|X_i), \mathbb{E}[d_H(f(X_i,Y_i),g_i(U_i,Y_i))]\Big) \nonumber \\
&\stackrel{\text{(c)}}{\geq} nR_f\Big(\frac{1}{n}\sum_{i=1}^{n}I(U_i;Y_i|X_i), \frac{1}{n}\sum_{i=1}^{n}\mathbb{E}[d_H(f(X_i,Y_i),g_i(U_i,Y_i))]\Big) \label{eq:last_app_cutset}
\end{align}
(a) follows from independence of $\Theta$ and $(M,X^n,Y^n)$, and the fact that conditioning reduces entropy; (b) follows by the definition of $R_f$ in \eqref{eq:approx_orlitsky-roche}; (c) follows from the convexity of $R_f(\delta,D)$, proved in \Lemmaref{properties_OR-region}. Now, we bound both the arguments of $R_f$ in \eqref{eq:last_app_cutset}; we use privacy against Alice \eqref{eq:privacy_alice} for the first argument and correctness condition \eqref{eq:correct_cond} for the second argument.
\begin{align}
&\sum_{i=1}^n I(U_i;Y_i|X_i) = \sum_{i=1}^n I(M,Y^{i-1},Y_{i+1}^n,\Theta;Y_i|X_i) \nonumber \\
&\stackrel{\text{(d)}}{=}\sum_{i=1}^n I(M,Y^{i-1},Y_{i+1}^n;Y_i|X_i) \nonumber \\
&= \sum_{i=1}^n \underbrace{H(Y_i|X_i)}_{=\ H(Y_i|X^n,Y^{i-1})} - H(Y_i|X_i,M,Y^{i-1},Y_{i+1}^n) \nonumber \\
&\leq \sum_{i=1}^n H(Y_i|X^n,Y^{i-1}) - H(Y_i|X^n,M,Y^{i-1},Y_{i+1}^n) \nonumber \\
&= \sum_{i=1}^n I(Y_i;M|X^n,Y^{i-1}) + \sum_{i=1}^n I(Y_i;Y_{i+1}^n|X^n,M,Y^{i-1}) \nonumber \\
&\leq \underbrace{I(Y^n;M|X^n)}_{\leq \ \epsilon, \text{ by }\eqref{eq:privacy_alice}} + \sum_{i=1}^n I(Y_i;M,Y^{i-1},Y_{i+1}^n|X^n) \nonumber \\
&\leq \sum_{i=1}^n \underbrace{I(Y_i;Y^{i-1},Y_{i+1}^n|X^n)}_{=\ 0} + I(Y_i;M|X^n,Y^{i-1},Y_{i+1}^n) + \epsilon \nonumber \\
&= \sum_{i=1}^n \underbrace{I(Y^n;M|X^n)}_{\leq \ \epsilon, \text{ by }\eqref{eq:privacy_alice}} - \underbrace{I(Y^{i-1},Y_{i+1}^n;M|X^n)}_{{\geq \ 0}} + \epsilon \nonumber \\
&\leq (n+1)\epsilon \nonumber \\
&\leq 2n\epsilon \label{eq:first_small_ubs},
\end{align}
where (d) follows from independence of $\Theta$ and $(M,X^n,Y^n)$. For the second argument of \eqref{eq:last_app_cutset}:
\begin{align}
&\sum_{i=1}^{n}\mathbb{E}[d_H(f(X_i,Y_i),g_i(U_i,Y_i))] \nonumber \\
&\qquad= \sum_{i=1}^{n}\mathbb{E}[d_H(f(X_i,Y_i),\hat{Z}_i)] \quad (\text{where } \hat{Z}_i=g_i(U_i,Y_i)) \nonumber \\
&\qquad= \sum_{i=1}^{n} \Pr[\hat{Z}_i \neq f(X_i,Y_i)] \nonumber \\
&\qquad \leq \sum_{i=1}^{n} \Pr[\hat{Z}^n \neq Z^n] \leq \sum_{i=1}^{n} \epsilon = n\epsilon \label{eq:second_small_ubs}.
\end{align}
\noindent Now we can complete the proof by using \eqref{eq:first_small_ubs}-\eqref{eq:second_small_ubs} in \eqref{eq:last_app_cutset}:
\begin{align*}
I(X^n;M|Y^n) &\stackrel{(e)}{\geq} nR_f(2\epsilon,\epsilon) \\
&\stackrel{\text{(f)}}{\geq} n(R_f(0,0) - \delta_{\epsilon}) \quad (\text{where }\delta_{\epsilon}\to 0 \text{ as } \epsilon\to0) \\
&= n(R_f^{\WZ}(0) - \delta_{\epsilon}),
\end{align*}
where (e) uses the fact that $(R_f(\delta,D)$ is non-increasing in $(\delta,D))$, and (f) follows from \Lemmaref{continuity_orlitsky-roche}.
\end{proof}

\begin{proof}[Proof of \Lemmaref{data-processing}]
We prove only the first inequality of \Lemmaref{data-processing}, and as stated there, we use only the weak privacy conditions -- where \eqref{eq:privacy_alice}-\eqref{eq:privacy_charlie} are upper-bounded by $n\epsilon$ -- to prove this. The other two inequalities can be proved similarly. Let $M_1:=(M_{12},M_{31})$ and $M_3:=(M_{23},M_{31})$. For $(X^n,Z^n)$, we define the function $\frac{1}{n}RI_{n\epsilon}(X^n;Z^n)$ as follows:
\begin{align} 
\frac{1}{n}RI_{n\epsilon}(X^n;Z^n) := \displaystyle \min_{\substack{p_{Q|X^nZ^n}: \\ \frac{1}{n}I(Q;Z^n|X^n)\leq \epsilon \\ \frac{1}{n}I(Q;X^n|Z^n)\leq \epsilon}} \frac{1}{n}I(X^n;Z^n|Q). \label{eq:approx_vector-ri}
\end{align}
For $(X,Z)$, we define the function $RI_{\epsilon}(X;Z)$ as follows:
\begin{align}
RI_{\epsilon}(X;Z) := \displaystyle \min_{\substack{p_{Q'|XZ}: \\ I(Q';Z|X)\leq \epsilon \\ I(Q';X|Z)\leq \epsilon}} I(X;Z|Q'). \label{eq:approx_ri}
\end{align}
Note that $RI_0(X;Z) = RI(X;Z)$. We prove the result by proving the following three inequalities:
\begin{align}
\frac{1}{n}RI(M_1,X^n;M_3,Z^n) &\geq \frac{1}{n}RI_{n\epsilon}(X^n;Z^n) \label{eq:first_interim_data-processing}\\
&\geq RI_{\epsilon}(X;Z) \label{eq:second_interim_data-processing}\\
&\geq RI(X;Z) - \epsilon_{31}. \label{eq:third_interim_data-processing}
\end{align}
For \eqref{eq:first_interim_data-processing}, we proceed as follows:
\begin{itemize}
\item $I(Q;M_3,Z^n|M_1,X^n)=0$, together with weak privacy against Alice ($I(M_1;Y^n,Z^n|X^n)\leq n\epsilon$) implies $I(Q;Z^n|X^n) \leq n\epsilon$.
{\allowdisplaybreaks
\begin{align*}
0 &= I(Q;M_3,Z^n|M_1,X^n) \\
&\geq I(Q;Z^n|M_1,X^n) \\
&= I(Q,M_1;Z^n|X^n) - \underbrace{I(M_1;Z^n|X^n)}_{\leq \ n\epsilon} \\
&\geq I(Q;Z^n|X^n) - n\epsilon
\end{align*}
}
\item Similarly, it can be shown that $I(Q;M_1,X^n|M_3,Z^n)=0$ and weak privacy against Charlie ($I(M_3;X^n,Y^n|Z^n)\leq n\epsilon$) imply $I(Q;X^n|Z^n) \leq n\epsilon$.
\end{itemize}
Now, $I((M_1,X^n);(M_3,Z^n)|Q) \geq I(X^n;Z^n|Q)$ (which is always true), together with the above two implications, implies \eqref{eq:first_interim_data-processing}.
For \eqref{eq:second_interim_data-processing} we proceed as follows:
{\allowdisplaybreaks
\begin{align*}
\epsilon &\geq \frac{1}{n}I(Q;Z^n|X^n) \\
&\stackrel{\text{(a)}}{=} \frac{1}{n}\sum_{i=1}^n H(Z_i|X_i) - H(Z_i|X^n,Z^{i-1},Q) \\
&\geq \frac{1}{n}\sum_{i=1}^n H(Z_i|X_i) - H(Z_i|X^{i-1},X_i,Z^{i-1},Q) \\
&= \frac{1}{n}\sum_{i=1}^n I(Q,X^{i-1},Z^{i-1};Z_i|X_i) \\
&\stackrel{\text{(b)}}{=} \sum_{i=1}^n p_T(i)I(Q,X^{i-1},Z^{i-1};Z_i|X_i,T=i) \\
&= I(Q,X^{T-1},Z^{T-1};Z_T|X_T,T) \\
&= I(Q,X^{T-1},Z^{T-1},T;Z_T|X_T) \\
&= I(Q_T,T;Z_T|X_T) \quad (\text{where }Q_T=(Q,X^{T-1},Z^{T-1}))
\end{align*}
(a) follows because $(X_i,Z_i)$'s are i.i.d; the random variable $T$ in (b) is distributed with $\text{Unif}\{1,2,\hdots,n\}$ and is independent of $(Q,X^n,Z^n)$. For the other constraint: $\frac{1}{n}I(Q;X^n|Z^n) \leq \epsilon \implies I(Q_T,T;X^n|Z^n) \leq \epsilon$, we can proceed similarly as above. For the objective function:
}
\begin{align*}
\frac{1}{n}I(X^n;Z^n|Q) &\geq \sum_{i=1}^n \frac{1}{n} I(X_i;Z_i|Q,X^{i-1},Z^{i-1}) \\
&= I(X_T;Z_T|Q_T,T).
\end{align*}
So, we get the following:
\begin{align} 
\frac{1}{n}RI_{n\epsilon}(X^n;Z^n) \geq \displaystyle \min_{\substack{p_{Q|X^nZ^n}: \\ I(Q_T,T;Z_T|X_T)\leq \epsilon \\ I(Q_T,T;X_T|Z_T)\leq \epsilon}} I(X_T;Z_T|Q_T,T),
\end{align}
where, on the RHS, $T\sim \text{Unif}\{1,2,\hdots,n\}$ and is independent of $(Q,X^n,Z^n)$, and $Q_T=(Q,X^{T-1},Z^{T-1})$. To get \eqref{eq:second_interim_data-processing}, we define $p_{Q'|XZ} := p_{Q_TT|X_TZ_T}$ to get
\begin{align*}
\displaystyle \min_{\substack{p_{Q|X^nZ^n}: \\ I(Q_T,T;Z_T|X_T)\leq \epsilon \\ I(Q_T,T;X_T|Z_T)\leq \epsilon}} I(X_T;Z_T|Q_T,T) \geq \displaystyle \min_{\substack{p_{Q'|XZ}: \\ I(Q';Z|X)\leq \epsilon \\ I(Q';X|Z)\leq \epsilon}} I(X;Z|Q'),
\end{align*}
which proves \eqref{eq:second_interim_data-processing}.

For \eqref{eq:third_interim_data-processing}, we prove that for fixed joint distribution $p_{XZ}$, $RI_{\epsilon}(X;Z)$ is right continuous at $\epsilon=0$. This is proved, below, using the property that the tension region $\mathfrak{T}(X;Z)$ is closed \cite{PrabhakaranPr14}. For simplicity of notation, we denote $RI_{\epsilon}(X;Z)$ by $RI_{\epsilon}$. Note that $RI_{0} = RI$.

From the definition of $RI_{\epsilon}$ in \eqref{eq:approx_ri}, it is easy to see that it is a non-increasing function of $\epsilon$, that is to say, if $\epsilon < \epsilon'$, then $RI_{\epsilon} \geq RI_{\epsilon'}$. Now suppose, to the contrary, that $RI_{\epsilon}$ is not right continuous at $\epsilon=0$. This implies that there exists a monotone decreasing sequence $\epsilon_m \downarrow 0$ and $\gamma > 0$ s.t. $RI_{\epsilon_m} \leq RI_0 - \gamma$ for all $m\in\mathbb{N}$. Note that $RI_{\epsilon_m}$ is a monotone non-decreasing sequence that is bounded above by $RI_0$, which implies that it is convergent (since every monotone non-decreasing sequence that is bounded above is convergent). Let $L=\lim_{m\to\infty} RI_{\epsilon_m}$ be the limit of this sequence. We have $L\leq RI_0-\gamma < RI_0$. This contradicts the fact that $RI_{0}$ is the minimum value $r$ s.t. $(0,0,r) \in \mathfrak{T}(X;Z)$, because $\mathfrak{T}(X;Z)$ is closed (i.e., $\mathfrak{T}(X;Z)$ contains all its limit points), implying that $(0,0,L) \in \mathfrak{T}(X;Z)$.
\end{proof}

\section*{Acknowledgements}
We would like to gratefully acknowledge several useful discussions with Bikash Dey and Manoj Mishra. The optimal protocol in \Figureref{addition}, which builds on K\"orner and Marton's~\cite{KornerMa79} scheme, that we used to show a gap between the rate regions of perfectly and asymptotically secure computations in \Subsectionref{asymp_examples}.1 is a joint work with them and has appeared in \cite{DataDeMiPr14}. We also gratefully acknowledge helpful comments from the anonymous referees of the CRYPTO 2014 conference.

\bibliographystyle{IEEEtran}
\bibliography{crypto,reference}

\end{document}